\newcommand{\presubsec}{\vspace{-5pt}}
\newcommand{\postsubsec}{\vspace{-5pt}}
\DeclareRobustCommand{\code}[1]{\texttt{\detokenize{#1}}}
\newtheorem{theorem}{Theorem}[section]
\newtheorem{lemma}[theorem]{Lemma}
\newtheorem{proposition}[theorem]{Proposition}
\newtheorem{corollary}[theorem]{Corollary}
\newcommand{\squishlist}{
   \begin{list}{$\bullet$}
    { \setlength{\itemsep}{1pt}      \setlength{\parsep}{3pt}
      \setlength{\topsep}{3pt}       \setlength{\partopsep}{0pt}
      \setlength{\leftmargin}{1em} \setlength{\labelwidth}{1em}
      \setlength{\labelsep}{0.5em} } }
\newcommand{\squishlisttwo}{
   \begin{list}{$\bullet$}
    { \setlength{\itemsep}{0pt}    \setlength{\parsep}{0pt}
      \setlength{\topsep}{0pt}     \setlength{\partopsep}{0pt}
      \setlength{\leftmargin}{2em} \setlength{\labelwidth}{1.5em}
      \setlength{\labelsep}{0.5em} } }
\newcommand{\squishend}{
    \end{list}  }
\theoremstyle{definition}
\newtheorem{definition}[theorem]{Definition}
\begin{document}

\twocolumn[
\mlsystitle{Axe: A Simple Unified Layout Abstraction for Machine Learning Compilers}



\mlsyssetsymbol{equal}{*}

\begin{mlsysauthorlist}
\mlsysauthor{Bohan Hou}{cmu}
\mlsysauthor{Hongyi Jin}{cmu}
\mlsysauthor{Guanjie Wang}{sjtu}
\mlsysauthor{Jinqi Chen}{nv}
\mlsysauthor{Yaxing Cai}{nv}
\mlsysauthor{Lijie Yang}{princeton}
\mlsysauthor{Zihao Ye}{nv}
\mlsysauthor{Yaoyao Ding}{uoft}
\mlsysauthor{Ruihang Lai}{cmu}
\mlsysauthor{Tianqi Chen}{cmu,nv}
\end{mlsysauthorlist}

\mlsysaffiliation{cmu}{Carnegie Mellon University}
\mlsysaffiliation{sjtu}{Shanghai Jiao Tong University}
\mlsysaffiliation{nv}{NVIDIA}
\mlsysaffiliation{princeton}{Princeton University}
\mlsysaffiliation{uoft}{University of Toronto}



\vskip 0.3in

\begin{abstract}
Scaling modern deep learning workloads demands coordinated placement of data and compute across device meshes, memory hierarchies, and heterogeneous accelerators. We present Axe Layout, a hardware-aware abstraction that maps logical tensor coordinates to a multi-axis physical space via named axes. Axe unifies tiling, sharding, replication, and offsets across inter-device distribution and on-device layouts, enabling collective primitives to be expressed consistently from device meshes to threads. Building on Axe, we design a multi-granularity, distribution-aware DSL and compiler that composes thread-local control with collective operators in a single kernel. Experiments show that our unified approach can bring performance close to hand-tuned kernels on across latest GPU devices and multi-device environments and accelerator backends. 

\end{abstract}
]



\printAffiliationsAndNotice{\mlsysEqualContribution} 

\section{Introduction}

Deep learning models, particularly large language models (LLMs)~\cite{deepseekai2025deepseekr1incentivizingreasoningcapability, openai2024gpt4technicalreport}, now power search, coding assistants, analytics, and creative tools. As models, data, and serving fleets scale, runtime optimization for training and especially inference becomes a central systems challenge~\cite{kwon2023efficient,ye2025flashinfer}. To optimize deep learning systems, we must address challenges at \emph{multiple layers of the software/hardware stack}:

\textbf{Distributed execution.} At the topmost level, large models need to run across multiple devices and machines. Deep learning frameworks~\cite{paszke2019pytorch,xu2021gspmd,jax2018github} and compilers~\cite{zheng2025tilelink,zheng2025tritondistributedprogrammingoverlappingkernels} must explicitly make data sharding and replication choices across device meshes and optimize communication and computation overlap within underlying kernel executions.

\textbf{Memory and thread hierarchy.} At the device level, GPUs and AI accelerators have memory hierarchies and nested parallelism across grids, blocks, warps, and lanes~\cite{gpu-era}. Kernel libraries must carefully orchestrate how data is tiled across these memory scopes~\cite{tillet2019triton,ding2025tilus,Graphene}. The memory hierarchy also ties closely to compute primitives, since specialized primitives such as tensor cores need to be executed collectively by specific groups of threads on their registers. Kernel compilers and libraries must specify and map tasks and intermediate data onto threads and device memories, satisfying the hardware requirements.

\textbf{Hardware heterogeneity.} Adding to the complexity of multiple scales, we increasingly face hardware heterogeneity. AI accelerators such as AWS Trainium~\cite{aws-trainium} and Google TPU~\cite{google-tpu} expose multidimensional scratchpads with memory bank constraints that differ from GPUs. Even within the same vendor, heterogeneity appears across generations. For example, NVIDIA tensor cores evolve tile formats and memory requirements across Ampere, Hopper, and Blackwell architectures. A compiler must adapt code generation to device-specific requirements while providing a unified programming experience.

Most current deep learning systems and compilers focus on a subset of these aspects. For example, GSPMD~\cite{xu2021gspmd}, Alpa~\cite{alpa}, FlexFlow~\cite{flexflow}, and PyTorch Distributed Tensor provide concise annotations for sharding and replication of tensor data on a device mesh. At the device level, domain-specific languages (DSLs) such as CuTeDSL~\cite{CuTeDSL}, Triton~\cite{tillet2019triton}, Mojo~\cite{mojo}, and TileLang~\cite{wang2025tilelang} provide ways to specify and abstract away data layouts~\cite{cute, linear} on devices.  
Different DSLs focus on different abstraction levels: CuTeDSL provides concrete schedule choices such as loop tiling and thread binding at the thread level, which favors peak efficiency at the expense of productivity and portability, while approaches such as Triton aim to abstract computations at the collective level, usually the thread block, to increase programmer productivity.
Finally, we also see efforts such as Pallas TPU~\cite{pallas} and NKI~\cite{nki} that specialize in AI accelerators. 

Although there are many specific design choices to address each of these challenges, we observe that common patterns can be shared between them. One recurring theme is how we represent data mapping and compute mapping to hierarchical distributed memories and across heterogeneous hardware units. In this paper, we propose \textbf{Axe layout}, a simple yet effective abstraction that unifies compute and data mappings \emph{across the distributed device hierarchy and heterogeneous hardware settings}. Axe layout introduces \emph{named axes} to explicitly represent hardware constructs such as thread axes, memory banks in accelerators, and distributed workers. Based on these named axes, Axe layout defines how logical coordinates of a tile map to a multi-axis physical space across GPU devices, threads, and memories. It unifies distributed sharding and on-device memory tiling in one formal representation.


The Axe layout abstraction can serve as an effective building block for ML compilers and frameworks to represent and optimize workloads across scales. To demonstrate its potential, we design a  \emph{multi-granularity, distribution-aware tensor program DSL and  compiler} that leverages Axe layout to specify data and compute execution scopes. 
We build a set of layout operators to faciliate the transform and lowering of Axe DSL programs.
Our compiler allows programmers to mix thread-local control with multi-level collective operators inside one kernel. The result is the performance of low-level, hand-tuned code with much lower development cost, since boilerplate is replaced by reusable and declarative operators. The Axe layout gives the compiler the semantics it needs to infer placements and choose hardware-native schedules.
The main contributions of this paper include:
\begin{itemize}
\item We introduce the Axe layout model that encodes mapping from logical index to named hardware axes, including sharding, replication, and offsets. It unifies inter- and intra-device kernel development formally. 
\item We develop a set of layout operators such as canonicalize, group, tile, and slice that the compiler uses to analyze and match layouts for code generation.
\item We design a multi-granularity and distribution-aware DSL built on Axe layout that unifies thread-local and collective views, and a compiler that dispatches schedules and generates efficient code across heterogeneous targets. 
\end{itemize}

Axe matches or exceeds strong baselines, bringing up to $1.32\times$ and $1.23\times$ on B200 MoE layers against FlashInfer and SGLang, up to $1.40\times$ for multi-GPU GEMM+Reduce-Scatter versus cuBLAS+NCCL and Triton-Distributed, and up to $1.44\times$ in AI acclerator (Trainium-1) MHA versus vendor libraries.

\section{Axe Layout}
\label{sec:axe}

\begin{figure}[t]
  \centering
  \includegraphics[width=\columnwidth]{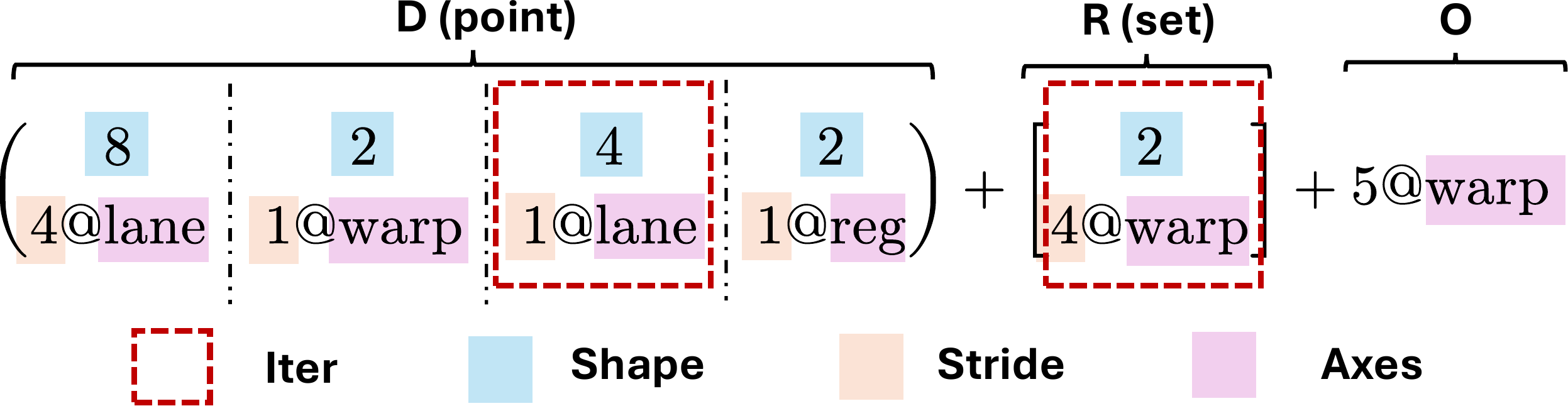}
  \caption{Elements of Axe Layout. An \emph{iter} specifies a triple \((\text{extent}, \text{stride}, \text{axis})\) and defines a linear, strided access on that axis. A list of iters forms the shard part \(\mathbf{D}\), a set of iters forms the replica part \(\mathbf{R}\), and \(\mathbf{O}\) is a fixed offset.}
  \label{fig:axe-elements}
\end{figure}

\presubsec
\subsection{Overview}
\postsubsec
\label{sec:axe-overview}

Axe extends the classical \emph{shape–stride} model of tensor layout.
In NumPy~\cite{numpy} or PyTorch~\cite{paszke2019pytorch}, a dense layout is described by a shape and strides (each stride is the memory step when the corresponding index increases).
Axe generalizes it by allowing strides to be semantically \emph{named} and bound to different \emph{axes} that represent hardware resources, including memory, threads, and devices.
Since these axes are used for sharding the logical shape, we name this \textbf{D} (shard).
Additionally, to support replication and constant offset along hardware axes,
Axe further introduces \textbf{R} (replica) and \textbf{O} (offset), respectively.
In formal terms, an Axe layout maps a logical index to a \emph{set} of coordinates on named axes. We decompose this mapping into three components~(\autoref{fig:axe-elements}):

\textbf{D (Shard).} A list of one or more \emph{iters}, each with an extent and a stride on some axis. \(\mathbf{D}\) partitions the logical index across these iters and produces a base coordinate. This generalizes shape–stride to multiple axes. We write the \(\mathbf{D}\) iter list in parentheses.

\textbf{R (Replica).} A set of replication iters that enumerate offsets in hardware space, independent of the logical index. Adding each element of this set to the \(\mathbf{D}\) result yields replication or broadcasting. We write the \(\mathbf{R}\) iter set in square brackets.

\textbf{O (Offset).} A fixed coordinate offset (one integer per axis) added to every result. This places data at a specific base position or reserves exclusive resources; unused coordinates arise naturally because the map is set-valued.

Formally, let $L$ denote an Axe layout mapping. For a given logical index $x$, the layout produces:
\[
L(x) = \bigl\{ D(x) + r + O \mid r \in R \bigr\},
\]
where $D(x)$ is the base coordinate tuple obtained from the sharded iters, $r$ ranges over all combinations of the replication iters (if $R$ is empty, we interpret this as a single zero offset), and $O$ is the constant offset vector. By construction, $L(x)$ can be a singleton or contain multiple coordinates.
We provide a detailed formalization in Appendix~\ref{sec:axe-formal}.

\begin{figure*}[!t]
    \centering
    \includegraphics[width=0.9\textwidth]{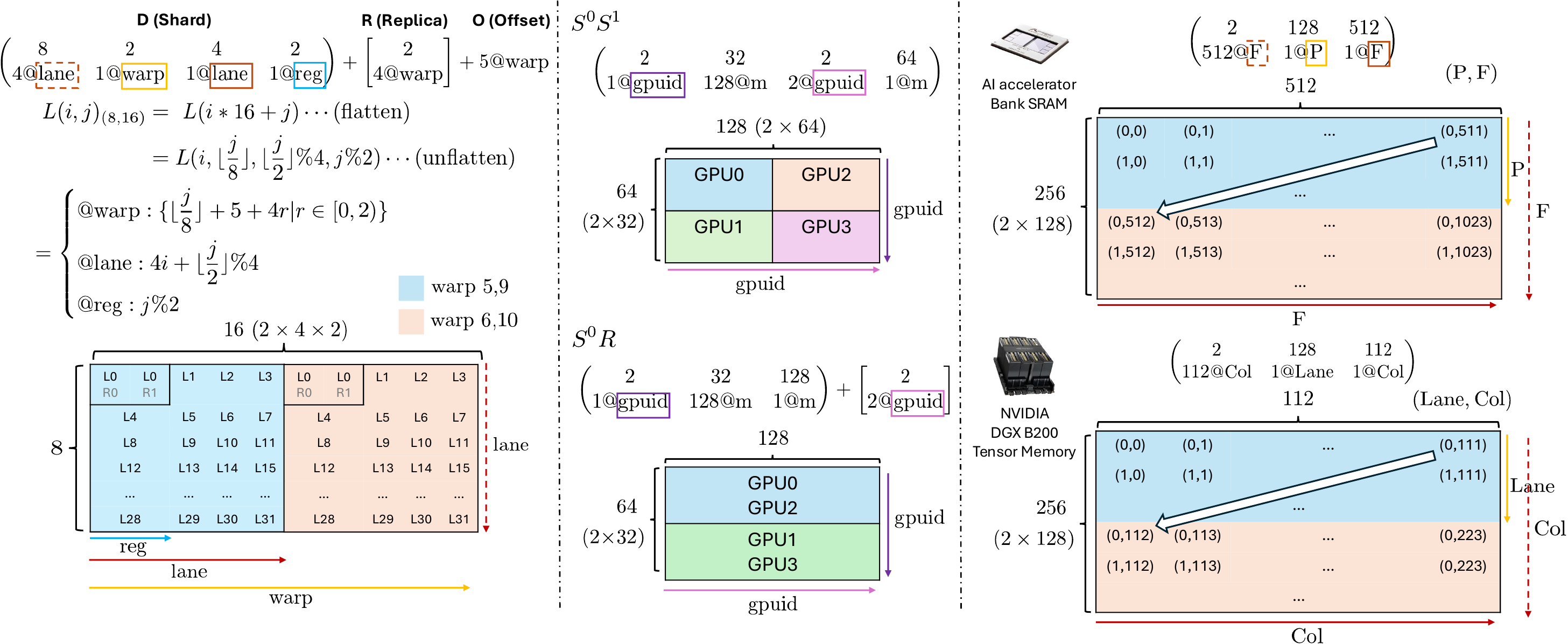}
    \vspace{-10pt}
    \caption{
\textbf{Examples of Axe layouts across various scenarios.} Each column shows the logical tensor shape and the mapped physical axis values. Axes are color-coded.
\textbf{Left.} Mapping a logical $8\times 16$ tile to 4 GPU warps with 32 lanes and 2 registers each; 2 warps sharded and 2 warps replicated.
\textbf{Middle.} Distributed sharding of a $64\times 128$ matrix across 4 GPUs; the top uses full sharding across 4 GPUs, the bottom uses a $2\times 2$ mesh with shards and replicas.
\textbf{Right.} Native hardware memories; the top depicts an AI accelerator 2D-partitioned scratchpad SRAM, the bottom shows NVIDIA Blackwell 2D tensor memory.
}
\vspace{-10pt}
    \label{fig:overview}
\end{figure*}

\presubsec
\subsection{Examples of Axe Layout Across Various Scenarios}
\postsubsec
\label{sec:axe-examples}

Figure~\ref{fig:overview} gives examples of Axe layouts in practice, illustrating mappings in three scenarios: (A) within a GPU warp’s registers, (B) across multiple GPUs in a distributed mesh, and (C) into specialized memory structures. We next walk through these examples to build intuition before diving into the formal definition.

\textbf{NVIDIA Tensor Core tile.}
Consider a logical tile $L(i,j)$ of shape $(8,16)$ that we want to map into a GPU kernel’s thread and register space. Specifically, we distribute this tile across 2 warps (32 threads, or lanes each) such that each lane holds a portion of the tile in its registers, matching the specification of NVIDIA’s tensor-core instructions. We represent this in Axe layout as:

\emph{Shard (D)}: $\begin{pmatrix}
8 & 2 & 4 & 2 \\
4@\texttt{lane} & 1@\texttt{warp}  & 1@\texttt{lane}  & 1@\texttt{reg}    \\
\end{pmatrix}$.
The shard part denotes that we factor the logical indices into iters of extents $8$, $2$, $4$, and $2$, distributed over axes \texttt{lane}, \texttt{warp}, \texttt{lane}, and \texttt{reg}, respectively.
The original tile has two dimensions: row and column, with extents 8 and 16, respectively.
The row dimension is assigned to the iter with stride 4 and axis \texttt{lane}. The column dimension is split into 3 sub-dimensions ($2 \times 4 \times 2$) and assigned to \texttt{warp}, \texttt{lane}, and \texttt{reg} axes, respectively, with stride 1 for all axes.

\emph{Replica (R)}: $\begin{bmatrix}
  2 \\
  4@\texttt{warp}
\end{bmatrix}$. 
This indicates that the entire tile is replicated twice across the \texttt{warp} axis, with a stride of 4 warps between the two replicas. In effect, we now have warps $\{0,1\}$ and $\{4,5\}$ each holding an identical copy of the $8\times 16$ tile.

\emph{Offset (O)}: $5@\texttt{warp}$. This adds an offset of 5 to the \texttt{warp} axis. Thus, we will have warps $\{5,6\}$ and $\{9,10\}$ hold the $8\times 16$ tile.

\textbf{Distributed sharding on a 2$\times$2 GPU mesh.}
Now suppose we have a $64\times 128$ tensor that we want to distribute across 4 GPUs arranged in a $2\times2$ mesh (Figure~\ref{fig:overview}B). Label the mesh axes as \texttt{gpuid\_y} (columns) and \texttt{gpuid\_x} (rows), with the device IDs as:
\(\begin{bmatrix}\text{GPU0} & \text{GPU2}\\ \text{GPU1} & \text{GPU3}\end{bmatrix}\). We can express different layouts for distributing the tensor:

\emph{Fully sharded}: Split the 64 rows across mesh rows and split the 128 columns across mesh columns. The Axe layout:
\[
\begin{pmatrix}
  2 & 32 & 2 & 64 \\
  1@\texttt{gpuid} & 128@\texttt{m} & 2@\texttt{gpuid} & 1@\texttt{m}
\end{pmatrix},
\]
where the first and third factors (2) are on the device axis \texttt{gpuid} and the remaining factors (32 and 128) are on the memory axis \texttt{m}.

\emph{Shard with replication}: Now split the rows across the two mesh row groups, and replicate each row shard to both GPUs in that group. In Axe, we represent it as:
\begin{align*}
  \begin{pmatrix}
  2 & 32 & 128 \\
  1@\texttt{gpuid} & 128@\texttt{m} & 1@\texttt{m}
\end{pmatrix}
  +
  \begin{bmatrix}
  2 \\
  2@\texttt{gpuid}
\end{bmatrix}.
\end{align*}

These layouts encode common parallelism strategies. For example, in Alpa~\cite{alpa}'s notation, they represent $S^0S^1$ and $S^0R$ sharding specs, respectively.

\textbf{Native multidimensional memory in Accelerators.} AI accelerators’ on-chip scratchpad buffer uses a multidimensional addressing scheme (dimensions notated as P for memory bank partitions and F for free dimensions).
Suppose we have a logical tensor that spans 128 partitions and uses a 2D tiling of dimensions $256\times512$. An Axe layout might be:
\[
\begin{pmatrix}
  2 & 128 & 512 \\
  512@\texttt{F} & 1@\texttt{P} & 1@\texttt{F}
\end{pmatrix}.
\]

NVIDIA’s Blackwell GPUs introduce a dedicated global tensor memory with native 2D addressing (think of it as a matrix of size \texttt{Lane} $\times$ \texttt{Col}). A tensor placed in this memory might have a layout like:
\[
\begin{pmatrix}
  2 & 128 & 112 \\
  112@\texttt{Col} & 1@\texttt{Lane} & 1@\texttt{Col}
\end{pmatrix}.
\]
This layout would tile the tensor across the 2D plane so it spans 256 columns in one grouping and 256 in another.

\presubsec
\subsection{Formalizing Axe Layout}
\postsubsec
\label{sec:axe-formal}

We model an Axe layout as a \emph{set-valued} map from logical indices to coordinates in an \emph{axis space}. Let $A=\{a_0,\dots,a_{n_A-1}\}$ be the axes (e.g., \texttt{m}, \texttt{lane}, \texttt{warp}, \texttt{gpuid}). Write
\[
\mathbb{Z}A=\Bigl\{\sum_i z_i@a_i \mid z_i\in\mathbb{Z}\Bigr\},
\]
with componentwise addition/scalar multiplication. For $X,Y\subseteq\mathbb{Z}A$, the Minkowski sum is $X+Y=\{x+y\mid x\in X,y\in Y\}$; the Hadamard product is
\[
\Big(\sum_i z_i@a_i\Big)\odot\Big(\sum_i z'_i@a_i\Big)=\sum_i (z_i z'_i)@a_i.
\]

\begin{definition}[Iter]
An \textbf{iter} is $I=(e_I,s_I,a_I)$ with extent $e_I>0$, stride $s_I\neq 0$, axis $a_I\in A$, inducing $f_I:[0,e_I)\to\mathbb{Z}A$, $f_I(x)=(x s_I)@a_I$.
\end{definition}

\begin{definition}[Layout]
An \textbf{Axe layout} $L=(D,R,O)$ has an ordered tuple $D=(I_0,\dots,I_{n_D-1})$ of \emph{sharded} iters ($n_D\ge 1$), a multiset $R=(J_0,\dots,J_{n_R-1})$ of \emph{replicated} iters ($n_R\ge 0$), and an offset $O\in\mathbb{Z}A$.
\end{definition}

Let $E_D=\prod_k e_{I_k}$. With the standard lexicographic unflattening $\iota:[0,E_D)\to\prod_k[0,e_{I_k})$, define
\[
f_D(x)=\sum_{k=0}^{n_D-1}\big(\iota(x)_k\, s_{I_k}\big)@a_{I_k}.
\]
Let $E_R=\prod_t e_{J_t}$ (take $E_R=1$ if $R=\varnothing$). For $r\in\prod_t[0,e_{J_t})$, define $f_R(r)=\sum_t (r_t s_{J_t})@a_{J_t}$.

To facilitate deriving layout operations in the next section, we also introduce the following definitions:

\begin{definition}[Induced map]\label{eq:fL-def}
The set-valued map of $L$ is
\[
f_L(x)=\bigl\{\,f_D(x)+f_R(r)+O \ \bigm|\ r\in \textstyle\prod_t[0,e_{J_t})\,\bigr\}.
\]
\end{definition}
if $R=\varnothing$, $f_L(x)=\{f_D(x)+O\}$; otherwise $|f_L(x)|=E_R$.

\paragraph{Shape admission.}
A shape $S=(S_0,\dots,S_{r-1})$ is \emph{admitted} by $L$ iff $\prod_i S_i = E_D$. Define the row-major flattener
$\mathrm{flat}_S:\prod_i[0,S_i)\to[0,E_D)$ set: 
\[
f_{L\langle S\rangle}(u)\ :=\ f_L\big(\mathrm{flat}_S(u)\big).
\]

\paragraph{Axis-wise span.}
Let \(\mathrm{Vals}_{L,a}=\{z\in\mathbb{Z}\mid \exists x,\exists y\in f_L(x):y[a]=z\}\). Define
\begin{align*}
\mathrm{span}_a(f_L)&=\max ,
\\
\mathrm{span}(f_L)&=\sum_{a\in A}\mathrm{Vals}_{L,a}-\min \mathrm{Vals}_{L,a}+1.
\end{align*}
By convention, if \(\mathrm{Vals}_{L,a}=\varnothing\), we set \(\mathrm{span}_a(f_L)=1\).

\begin{figure*}[!t]
  \centering
  \includegraphics[width=\textwidth]{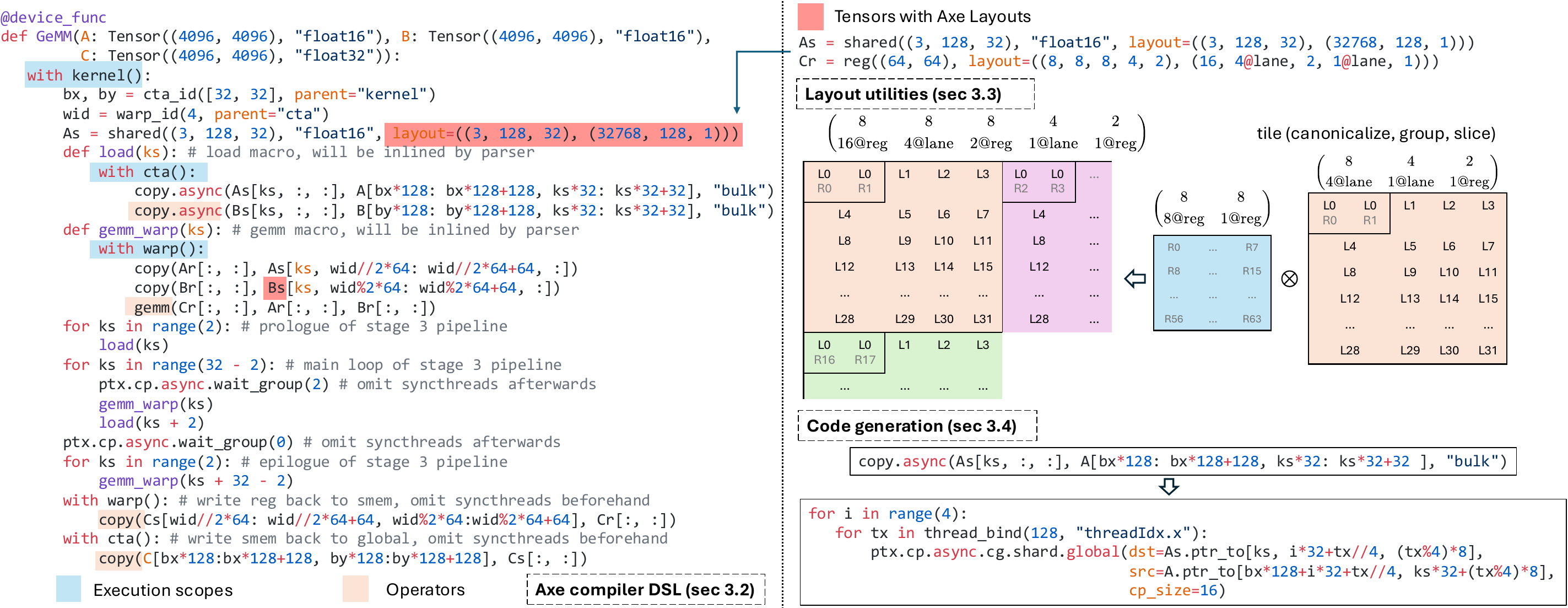}
  \caption{
  \textbf{Axe compiler overview.}
\textbf{Left}: a GEMM kernel written in the Axe compiler DSL. Execution scopes, tensors with Axe layouts, and operators are highlighted. The program uses \texttt{load} and \texttt{gemm\_warp} macros (expanded when parsing into IR) and a three-stage pipeline with prologue, main, and epilogue. We omit several lines (\texttt{\_\_syncthreads()} and tensor allocations) for brevity.
\textbf{Right, top}: tensors carry Axe layouts in shared memory and registers.
\textbf{Right, middle}: Use \texttt{tile} to compose a register tile with \texttt{lane} to form a warp view.
\textbf{Right, bottom}: the \code{copy.async} operator is lowered to a thread-bound loop that issues \code{cp.async.cg.shared.global}, with addresses derived from the layouts.
Together, these steps show how Axe couples multi-granularity programming with layout-driven code generation.
}
  \label{fig:compiler}
\end{figure*}


\label{sec:compiler}

\section{Axe Compiler}

\label{sec:compiler-pm}

Axe’s layout abstractions enable a programming model in our compiler that we call \textbf{multi-granularity, distribution‑aware programming}. Users compose logical tensors and call semantic operators at the desired granularity (device, thread-block, warp, thread, or multi-device) in the Axe DSL without writing operator implementations. The compiler reads the involved tensor regions and their layouts and selects concrete implementations automatically. Figure~\ref{fig:compiler} shows an overview of this section. We outline this with a motivating example and then detail the core components of the compiler stack.

\presubsec
\subsection{Motivating Example}
\postsubsec

This section shows how Axe DSL captures both CuTe and Triton's programming model. Suppose we want to load region $[16{:}32,\ 64{:}128]$ from a 2D global tensor $C$ with shape $(32,128)$ and dtype \texttt{float32} using a thread block CTA containing 128 threads into registers. For thread $i\in[0,128)$, thread $i$ loads a region with shape $[1,8]$ starting from $[16+\lfloor i/8\rfloor,\ 64+i\bmod 8\cdot 8]$.
CuTe and Triton approach this copy plan in two conceptually different ways.

\begin{figure}[t]
  \centering
  \includegraphics[width=\columnwidth]{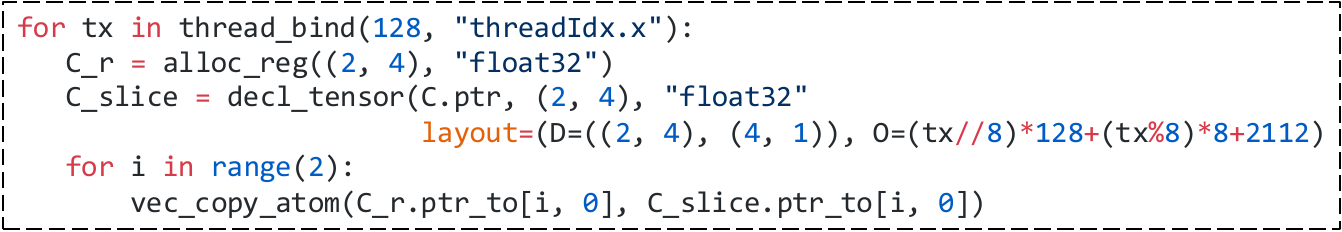}
  \caption{An example Axe DSL snippet showing thread-level loop transformation and thread binding. In actual CuTe programs, \code{C_slice} is derived by combinations of partition APIs.}
  \label{fig:cute-example}
\end{figure}
\textbf{CuTe: thread-local loop transformation and thread binding.}
CuTe defines algorithm atoms and splits up the work into atoms per thread, whose effective loop transformations are all supported by CuTe layout algebra. In our case, we use a 4-element vectorized load as the copy atom. Then we define the work partition over $C$ as
\[
\begin{pmatrix}
  16 & 8 & 2 & 4 \\
  128@\texttt{m} & 8@\texttt{m} & 4@\texttt{m} & 1@\texttt{m}
\end{pmatrix}
\;+\; 2112@\texttt{m}.
\]
Suppose each thread is identified as $tx$ (\code{threadIdx.x}); we bind the first 2 loops to $tx//8, tx\%8$ respectively. Then, for each thread, the loops remain as
\[
\begin{pmatrix}
  2 & 4 \\
  4@\texttt{m} & 1@\texttt{m}
\end{pmatrix}
\;+\;
\big((tx//8)*128 + (tx\%8)*8 + 2112\big)@\texttt{m},
\]
where the inner iter corresponds to a single atom, and the outer iter with extent 2 is iterations over the atom (Figure~\ref{fig:cute-example}).

\begin{figure}[t]
  \centering
  \includegraphics[width=\columnwidth]{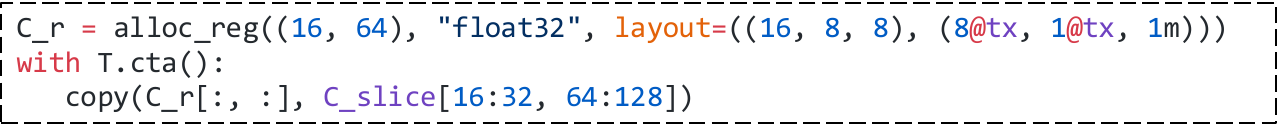}
  \caption{An example Axe DSL snippet showing thread-block collective semantics.}
  \label{fig:triton-example}
\end{figure}

\textbf{Triton: CTA collective semantics.}
Triton doesn’t expose thread-level control but a CTA-wide operator model. Instead of partitioning the source global tensor per thread, it organizes the local registers into a CTA-collective tensor, so that the semantics of the copy are precisely reflected (Figure~\ref{fig:triton-example}). The copy is represented as
\[
C_{\text{local}}:\begin{pmatrix}
  16 & 8 & 8 \\
  8@\texttt{tx} & 1@\texttt{tx} & 1@\texttt{reg}
\end{pmatrix} = C[16{:}32,\ 64{:}128].
\]

\textbf{Summary.} Axe provides a single mechanism to represent both perspectives in our programming model to \textbf{achieve the best of both worlds}: peak kernel performance with thread-level control, and lowest possible development cost with CTA-wise operator. Even in state-of-the-art kernels crafted with intense engineering effort, many standard subprocedures can be abstracted away by collective operators.

\presubsec
\subsection{Axe Compiler DSL}
\postsubsec

As shown in Figure~\ref{fig:compiler}, Axe starts from a minimal, native kernel language: structured control flow (\texttt{for}/\texttt{while}/\texttt{if} with \texttt{break}/\texttt{continue}), expressions, and calls to hardware intrinsics. This is sufficient to write a conventional native GPU/AI-accelerator kernel. On this base, we introduce three first-class constructs that make programs \emph{multi-granularity} and \emph{distribution‑aware}:
(i) \emph{execution scopes}, (ii) the \emph{tensor} abstraction,
(iii) \emph{operators with schedules}.

\textbf{Execution scopes.}
To implement a multi-granularity programming model, a data structure for granularity notation is a must. We introduce explicit constructs to denote groups of threads (or devices) that will execute an operator together. These include scopes like \texttt{kernel} (all threads in a kernel launch), \texttt{cta} (a thread block, a.k.a.\ cooperative thread array), \texttt{warpgroup}, \texttt{warp}, and \texttt{thread}. In the IR, operators can be written relative to a certain scope, meaning they will be executed by each entity in that scope.

\begin{figure}[t]
  \centering
  \includegraphics[width=\columnwidth]{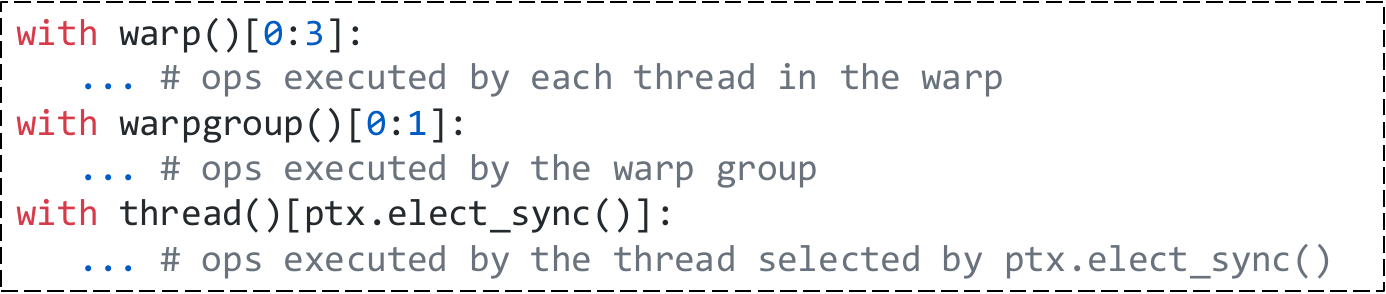}
  \caption{Axe DSL execution scope slice API.}
  \label{fig:scope-slice}
\end{figure}

Within a scope, we can further define sub-scopes or thread subsets (\autoref{fig:scope-slice}). For instance, on Hopper and Blackwell architectures, it’s common to dedicate some warps in a CTA to data loading (via asynchronous copy) and others to computation. For example, in a 256-thread CTA (8 warps), warps [0:3] could be producers and [4:7] consumers.

Note that since this effectively defines a set of threads, we can use Axe layout to represent scope slices. But keeping it as simple as a predicate or a region for now is satisfactory.

\begin{figure}[t]
  \centering
  \includegraphics[width=\columnwidth]{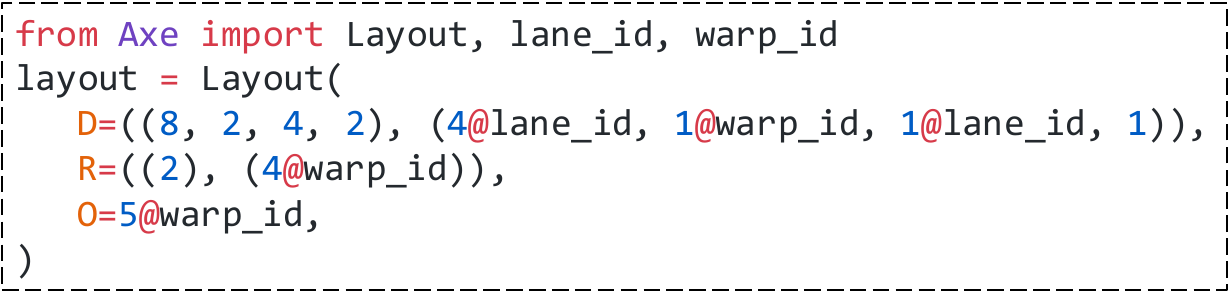}
  \caption{Axe Layout Python API. If some stride is not paired with an axis, the axis \texttt{m} is used by default.}
  \label{fig:axe-python-api}
\end{figure}

\begin{figure}[!t]
  \centering
  \includegraphics[width=\columnwidth]{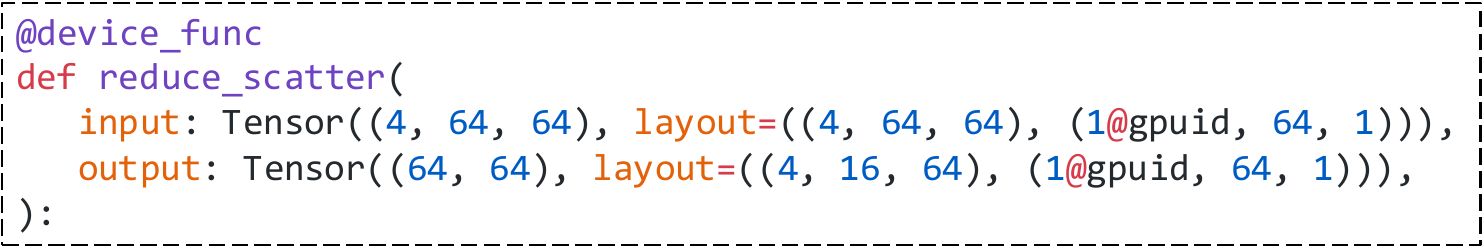}
  \caption{Example distributed tensor signature for reduce scatter.}
  \vspace{-10pt}
  \label{fig:axe-dtensor}
\end{figure}

\paragraph{Tensor and Layout.}
Tensors in our compiler carry \texttt{shape}, \texttt{layout}, \texttt{scope}, \texttt{pointer}, and \texttt{dtype}. To define Axe layouts in Python, we can use the API shown in \autoref{fig:axe-python-api}. 
We compute the address of an tensor element by adding the base pointer to memory components in the layout. 
This tensor structure lets users write fine-grained thread-level code or collective operators by choosing appropriate Axe layouts. 


\textbf{Representing Distributed Tensor.} Because Axe layout naturally supports distributed execution, we can use it to represent the
distributed sharding constraints of a distributed tensor~\cite{xu2021gspmd, ptdtensor}. \autoref{fig:axe-dtensor} shows an example of how
we can use Axe to represent a reduce-scatter kernel that accepts a DTensor with shape $(4, 64, 64)$ that shards over the first dimension,
and sums over $0$, generating an output DTensor with shape $(64,64)$ that shards over the first dimension.
The compiler will generate runtime checks to ensure the consistency of the input DTensor and the declared Axe layout.

\textbf{Operators and schedules.}
We provide high-level operators in the compiler IR for common tasks (copy, pointwise operators, reductions, matrix multiply, etc.), akin to CUB~\cite{CUB} or other collective libraries embedded in native kernel languages but generalized. Developers are free to expand the operator library to fit their use cases. A schedule is a concrete implementation of an operator; we use the terms interchangeably below. Each operator can have multiple schedules, which are selected based on the context.

For example, a \texttt{copy} operator can be implemented in various ways: (1) If used at thread scope on register tensors, it might compile down to simple load/store instructions per thread. (2) If used at CTA scope to move data from global memory to shared memory, the compiler might choose a vectorized LDG/STG sequence or a special asynchronous transfer (like \texttt{cp.async} / TMA on NVIDIA GPUs), depending on hardware capabilities. (3) If the source or destination is distributed across devices (e.g., one tensor is sharded across GPUs and another is replicated), a \texttt{copy} might involve an all-gather or broadcast under the hood, implemented by NVSHMEM~\cite{nvshmem} primitives.

\begin{figure}[t]
  \centering
  \includegraphics[width=\columnwidth]{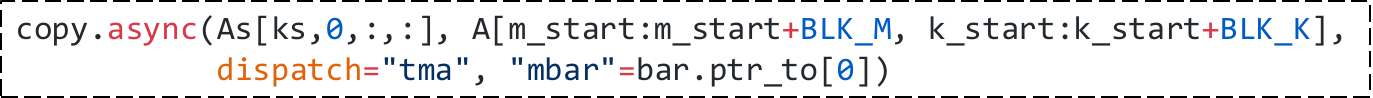}
  \caption{Axe operator invocation is effectively a library method call, accepting other configuration arguments to guide the compiler’s schedule.}
  \label{fig:async-op}
\end{figure}

One can also designate the dispatched implementation by adjusting configurations when invoking operators (Figure~\ref{fig:async-op}). This typically happens when calling asynchronous operators, since they require completion mechanisms (bulk-group, \texttt{mbarrier}) that are non-local decisions.

\presubsec
\subsection{Layout Operations}
\postsubsec

\label{sec:layout-xforms}

Layouts are key to compiler analysis, especially in operator schedule dispatching. Axe layout provides the following utilities to help such an analysis. 

\textbf{Canonicalize.} For two structurally different layout triples $L_1=(D_1, R_1, O_1)$ and $L_2=(D_2, R_2, O_2)$, we want to verify if they represent the same induced function $f$.

We define a procedure to simplify a layout without changing its semantics—by handling the $D$ part and the $R$/$O$ parts separately. We prove the sanity of such a rewrite process, and under certain conditions (which real-world cases lie in), we derive the unique canonical form. Refer to Appendix~\ref{appendix:canonicalization} for details.

\textbf{Tile.} One key optimization is to utilize SIMD/tensorized instructions. These instructions typically require (some slice of) tensors to have layouts that are effectively tiles of some atom layout designated by the instruction.

\textit{Precondition (Group).} Tile is an operation of two \emph{tensors}. To define tile for two \emph{layouts}, we need to associate each layout with a logical shape, which leads us to grouping.

A shape $S=(S_0,\dots,S_{r-1})$ \emph{groups} $L$ (denoted $L_{||S}$) only if the ordered list of iters in $D$ can be split or fused consecutively into $r$ blocks whose extent products equal $S_i$. Among many possible candidates, we pick the (unique) one with the fewest iters. A concrete grouping algorithm is shown in Appendix~\ref{appendix:grouping}.


\textit{Tile (Kronecker product).} The formula below states the property of a tiled layout $T$’s induced function. Let $A,B$ be layouts. Suppose there exist shapes $S_A,S_B$ of the same rank $r$ such that
the groupings $A_{||S_A}$ and $B_{||S_B}$ \emph{exist}. Define the tiled layout
$T := A_{||S_A}\otimes B_{||S_B}$ over the domain
$
S_A \otimes S_B = \prod_{j=0}^{r-1}\big([0,S_A[j))\times[0,S_B[j))\big)
$
by
\[
f_T(x||y)\;=\;f_{A_{||S_A}}(x)\ \odot\ \mathrm{span}\!\big(f_{B_{||S_B}}\big)\ +\ f_{B_{||S_B}}(y).
\]
Here $B$ supplies intra-tile offsets; $A$ supplies inter-tile placements scaled by
the axis-wise span of $B$ to avoid overlap.

\textit{Examples.} Suppose $A$ is a layout for a tile of shape $(P, Q)$ in row-major order and $B$ is a layout for an $(M, N)$ grid of such tiles in row-major order. Then $A \otimes B$ yields a layout for a $(PM) \times (QN)$ matrix that is tiled into an $M \times N$ grid of $P \times Q$ submatrices (also known as a block layout). For instance,
\[
\begin{pmatrix}
  2 & 3 \\
  3 & 1
\end{pmatrix}
\ \otimes\
\begin{pmatrix}
  8 & 8 \\
  8 & 1
\end{pmatrix}
\;=\;
\begin{pmatrix}
  2 & 8 & 3 & 8 \\
  192 & 8 & 64 & 1
\end{pmatrix}
\]
which corresponds to a $16\times24$ matrix stored as an $8\times8$ grid of $2\times3$ tiles ($@\texttt{m}$ omitted for simplicity).


\textit{Algorithm.} Grouping ensures we can simply interleave and scale iters of $A$ and $B$ to derive $A\otimes B$. See Appendix~\ref{appendix:tiling} for details. We can also check whether layout $A$ is a tile of layout $B$, and infer the layout $C$ such that $A = C \otimes B$ if it is. See Appendix~\ref{appendix:tile-of-check} for details.

\textbf{Slice.} Operators typically work over a slice/region of some tensor. Given the tensor with layout $L$, logical shape $S$, and the focused region $R$, schedule implementations can be simplified if we can derive a layout $L[R:S]$ whose domain is purely within $R$’s extents but maintains $L$’s mapping; i.e., it is the layout of the sliced subtensor.

The formula below states the property of a sliced layout $L[R:S]$’s induced function. Let \(L\) be an Axe layout and let \(S=(S_0,\ldots,S_{r-1})\) be a shape
\emph{admitted} by \(L\).
Fix an axis-aligned region
\[
R \;=\; \prod_{i=0}^{r-1} [\,b_i,\; b_i+T_i) \;\subseteq\; \prod_{i=0}^{r-1}[\,0,\;S_i),~T=(T_0,\ldots,T_{r-1}) .
\]
We say that \emph{\(L\) admits a slice on \(R\) (relative to \(S\))} if there exists an Axe layout
\(L[R\!:\!S]\) whose admitted shape is \(T\) such that the following equality of
induced maps holds:
\begin{equation*}\label{eq:slicing-equality}
\forall\,u\in\prod_i[0,T_i):\quad
f_{\,L[R:S]\langle T\rangle}(u)
\;=\;
f_{\,L\langle S\rangle}(u+b),
\end{equation*}
where \(u+b\) denotes the component-wise shift \((u_0+b_0,\ldots,u_{r-1}+b_{r-1})\).
We call
\(L[R\!:\!S]\) a \emph{slice} (of \(L\) by \(R\) w.r.t.\ \(S\)).

\textit{Examples.} Suppose we have
\(L=\begin{pmatrix}
  2 & 8 & 3 & 8 \\
  192 & 8 & 64 & 1
\end{pmatrix}\) (omit \(@\texttt{m}\) for simplicity).
For \(\,S=(16, 24),\,R=[0\!:\!8)\times[8\!:\!24)\),
\[
L[R\!:\!S]\;=\;
\begin{pmatrix}
  1 & 8 & 2 & 8\\
  192 & 8 & 64 & 1
\end{pmatrix}
\;+\;64.
\]

\textit{Algorithm.} 
We provide a concrete sufficient condition and a constructive algorithm in Appendix~\ref{appendix:slicing}.

\presubsec
\subsection{Code Generation}
\postsubsec
\label{sec:codegen}

We give several concrete key code-generation examples leveraging Axe layouts, especially operator schedules.

\textbf{TMA asynchronous copy.}
\label{sec:tma-lowering}
Figure~\ref{fig:async-op} shows that an asynchronous copy operator to be dispatched to TMA copy. TMA allows users to specify a multi-dimensional box region in global memory (with \code{CuTensorMap}) to copy to a shared-memory region given the starting pointer. The algorithm to implement is conceptually simple: logically partition the shared memory into atoms, iterate over the source and destination atom-copy shapes, and issue a copy instruction for each.

Let $G$ be a global-memory tensor with layout $L_G$ and logical shape $E_G$ and $S$ a shared-memory tensor with layout $L_S$ and logical shape $E_S$. We copy a rectangular region $\mathcal{R}_G$ in $G$ to region $\mathcal{R}_S$ in $S$. We decompose the dispatch in the following key steps:

\textit{(1) Slice view}: with slicing, we first derive $L_G[\mathcal{R}_G:E_G]$ and $L_S[\mathcal{R}_S:E_S]$. For simplicity of notation, we rename them $L_G$ and $L_S$, and the extents of $\mathcal{R}_G$ and $\mathcal{R}_S$ to be $E_G$ and $E_S$.

\textit{(2) Determine shared-memory copy atom (with swizzle)}: A TMA atom given $S$ with shape $E_S$ and dtype $d$ for swizzle mode $a\in\{32,64,128\}\,\mathrm{B}$ is an innermost memory box. An atom’s logical shape $E_{d,a}$ has the innermost two dimensions $8$ and $a / \mathrm{sizeof}(d)$; otherwise they are $1$, and $|E_{d,a}| = |E_S|$. An atom’s intra-box layout $L_{d,a}$ is a hardware swizzle (modeled as a separate innermost \code{SwizzleLayout}).

We need there to exist a tiler layout $T$ such that
$
(L_S)_{||E_S} \ \equiv\ T_{||E_o}\ \otimes (L_{d,a})_{||E_{d,a}},
$
where $E_o$ derives from pointwise division of $E_S$ by $E_{d,a}$. We loop over iters of $T$ to enumerate shared-memory atoms.

\textit{(3) Craft \code{CuTensorMap} for global memory}:
We first translate the shared-memory atom shape $E_{d,a}$ to its global counterpart $E_{d,a}^G$. After grouping $(L_G)_{||E_G}$, we verify for each group $i$ that $E_{d,a}^G(i)$ is exactly some suffix product of iter extents in group $i$ (or $L_G$ can be a direct sum over $L_G[E^G_{d,a}:E_G]$; see Appendix~\ref{appendix:direct-sum-tiling-domain}). If verification is successful, we can encode the shape–stride using $L_G$.

\textbf{AI accelerator support: Systolic Array GEMM.}
We provide a concrete example of code generation for an AI accelerator, using Trainium 1 as an example.
It contains a core compute unit for matrix multiplication called the Tensor Engine (TensorE).
Generating code for the TensorE requires adhering to strict layout constraints imposed by its hardware design (Appendix~\ref{sec:appendix-trn-code-gen}). The high-level idea of dispatching a \texttt{matmul} operator is to find the largest possible matmul-instruction shape, and then build a loop nest along the $M$, $N$, and $K$ dimensions to cover the logical matmul.

(1) \textit{Group}. Define the concatenation of two shapes $S_1, S_2$ to be $(S_1,S_2)$. Find $S_M, S_N, S_K$ such that $L_A':=(L_A)_{||(S_M, S_K)}$ has its iter extents be exactly $(S_M, S_K)$. Similarly, $L_B':=(L_B)_{||(S_N,S_K)}$ has iter extents exactly $(S_N,S_K)$ and $L_C':=(L_C)_{||(S_M,S_N)}$ has iter extents exactly $(S_M,S_N)$.

(2) \textit{K Intersection}. Given two iters $I_1=\begin{pmatrix}
    e_1\\s_1@a
\end{pmatrix}$ and $I_2=\begin{pmatrix}
    e_2\\s_2@a
\end{pmatrix}$, define $I_1 \cap I_2 = \begin{pmatrix}
    e\\s@a
\end{pmatrix}$ such that $I_1\cap I_2$ produces the exact same values as the intersection of what $I_1$ produces and what $I_2$ produces. Fail when such an iter does not exist.

Given two iter lists $L_A'$ and $L_B'$, derive a new iter list $L_K$ by enumerating $L_A'[i], L_B'[i]$. If they both have axis $\texttt{P}$, append $L_A'[i] \cap L_B'[i]$.

(3) \textit{MN Intersection}. First do $M$ intersection. Given $L_A'$ and $L_C'$, keep only iters in them where $L_A'[i]$ has axis $\texttt{F}$ and $L_C'[i]$ has axis $\texttt{P}$. For the rest of the $L_C'$ iters, find an index set $I$ such that $L_C'[I]$ can be canonicalized to a single iter (viewed as an $R$ set) and at the same time it contains the iter with the smallest stride of $L_C'$. Keep only $L_M^A=L_A'[I]$ and $L_M^C=L_C'[I]$.

$N$ intersection is almost the same for $L_B'$ and $L_C'$. The only difference is to pick $L_C'[i]$ with axis $\texttt{F}$, and the index set $I$ can be selected from either $B$ or $C$ (pick the larger one). We get $L_N^B$ and $L_N^C$.

(4) \textit{Finalize}. The extents of $L_M^A(L_M^C), L_N^B(L_N^C), L_K$ are the largest possible $M$, $N$, $K$ instruction shapes we can use. By construction, they are subsets of $L_A'$, $L_B'$, and $L_C'$ in step (1); the remaining iters are used to generate loops over the instruction.

\section{Evaluation}
\label{sec:evaluation}

We implement Axe’s layout system and compiler on top of TensorIR~\cite{feng2023tensorir} in Apache TVM~\cite{chen2018tvm}.
The same design can also be applied to other machine learning compilers and DSLs.
This section asks the following questions:

\squishlist
    \vspace{-5pt}
    \item Can our approach bring near–best performance on the latest GPU architecture~(\S\ref{sec:single-gpu})?
    \item Can our approach improve multi-device execution~(\S\ref{sec:multi-gpu})?
    \item Can our approach support heterogeneous hardware environments~(\S\ref{sec:ai-acc})?
    \vspace{-5pt}
\squishend


\begin{figure*}[!t]
  \centering
  \includegraphics[width=\textwidth]{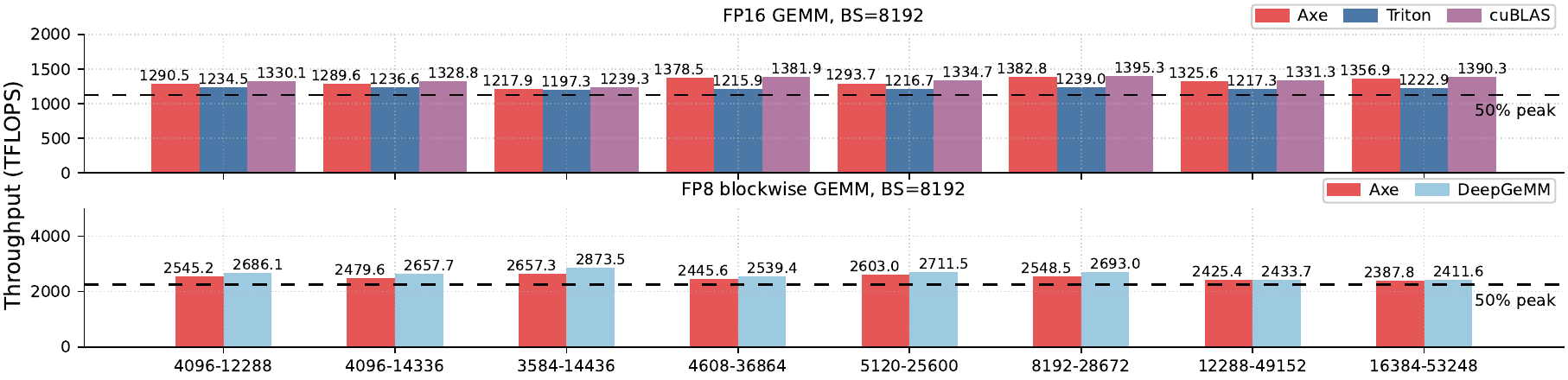}
  \caption{FP16 GEMM and FP8 GEMM throughput (TFLOP/s) at batch size 8192 across different weight shapes. The dashed line marks 50\% of device peak (for FP16 and FP8, respectively). Higher is better.}
  \label{fig:eval-op}
\end{figure*}

\begin{figure*}[!t]
  \centering
  \includegraphics[width=\textwidth]{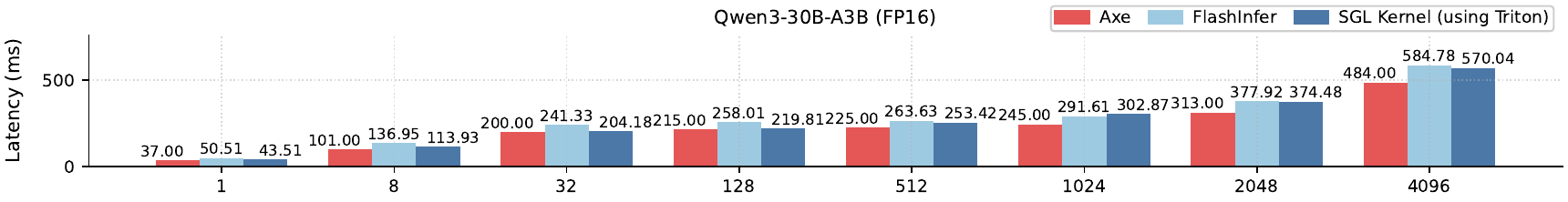}
  \caption{Qwen3-30B MoE layer latency (ms) evaluated across different numbers of input tokens. Lower is better.}
  \label{fig:eval-layer}
\end{figure*}

\presubsec
\subsection{Kernel Performance on NVIDIA B200}
\postsubsec
\label{sec:single-gpu}
In this section, we evaluate the performance of the Axe compiler kernel on the latest NVIDIA B200 GPU.
We conduct the evaluation on a DGX B200 server with CUDA~13.0 and driver~580.82.07.
Each experiment runs 1000 warm-up iterations; we report FLOPs from the average time over 3000 repeat iterations.

\textbf{FP16 GEMM.} We first evaluate FP16 GEMMs and FP8 (e4m3) GEMMs.
We use batch size 8192 and real-world weight shapes from Qwen3-8B/32B~\cite{yang2025qwen3},
LLaMA-3.1-8B/70B/405B~\cite{grattafiori2024llama}, Gemma-2-9B/27B~\cite{team2024gemma}, and GPT-3-175B~\cite{brown2020language}.
We use cuBLAS~\cite{cublas} and Triton~\cite{tillet2019triton} as our baselines.
Figure~\ref{fig:eval-op} shows the results.
For FP16 GEMM, Axe reaches at least 97\% of cuBLAS throughput across all shapes and typically falls between 97\% and 100\%.
Triton reaches mostly around 90\% of cuBLAS and dips to about 87\% on the hardest shape.

\textbf{Case study of FP16 GEMM.}
Since the Hopper family, NVIDIA GPUs have leaned heavily on warp specialization.
Warps (or warp-groups) are assigned roles in a pipeline, where each role handles a distinct stage. The exact stage partition is application-dependent.
Additionally, thread-block clusters facilitate cooperative execution across streaming multiprocessors (SMs).
On Blackwell, for example, two SMs can partition inputs \(A\) and \(B\) and collaborate on a single GEMM tile.
In terms of programming effort,
the Axe FP16 GEMM kernel is about 250 lines of Python.
We use copy operators 
and GEMM operators to keep development effort low. 
We specify warp assignments explicitly: one warp for load, two warps for GEMM, and two warps for write-back, and we also orchestrate their synchronization pipeline explicitly.
We also use a size-2 cluster so that two SMs perform one GEMM tile together.
The Triton kernel has about 80 lines, but it leaves warp specialization and clustering to the compiler.
For FP16 GEMM, the generated plan used two load warps, one GEMM warp, and one write-back warp, with no cluster cooperation, which leads to suboptimal performance.
Noticeably, this issue is also brought up in Triton’s community and resulted in a concurrent work, Gluon~\cite{gluon}, bringing in explicit controls.

\textbf{FP8 Blockwise GEMM.}
We also evaluate FP8 (e8m0) block-wise scaling against the baseline DeepGEMM~\cite{deepgemm}.
Axe delivers between 92\% and 96\% of DeepGEMM throughput, averaging near 94\% across shapes.

\textbf{Mixture-of-Experts (MoE) Layer.} Finally, we evaluated our solution on real-world fused MoE models. We build support for fused FP16 MoE layers with Qwen3-30B-A3B configurations and vary the number of input tokens. We compare to FlashInfer~\cite{ye2025flashinfer} and SGLang~\cite{zheng2024sglang} (Triton internally).

The Axe kernels leverage a finer-grained pipeline between the first and second group GEMM, where some tiles of the second GEMM can start once their dependent tile in the first GEMM is completed. The results are shown in \autoref{fig:eval-layer}. Axe achieves a \(1.20\times\) to \(1.36\times\) speedup over FlashInfer across batch sizes. Relative to SGLang, Axe reaches \(1.18\times\) at batch size 1, \(1.12\times\) at 8, about \(1.02\times\) at 32 and 128, and \(1.12\times\) to \(1.23\times\) from 512 to 4096. Axe enables us to orchestrate such a sophisticated pipeline across kernels while reusing high-level operators to implement group GEMMs used in MoE.

\presubsec
\subsection{Multi-GPU Kernel Performance}
\postsubsec

\label{sec:multi-gpu}
\begin{figure*}[!t]
  \centering
  \includegraphics[width=\textwidth]{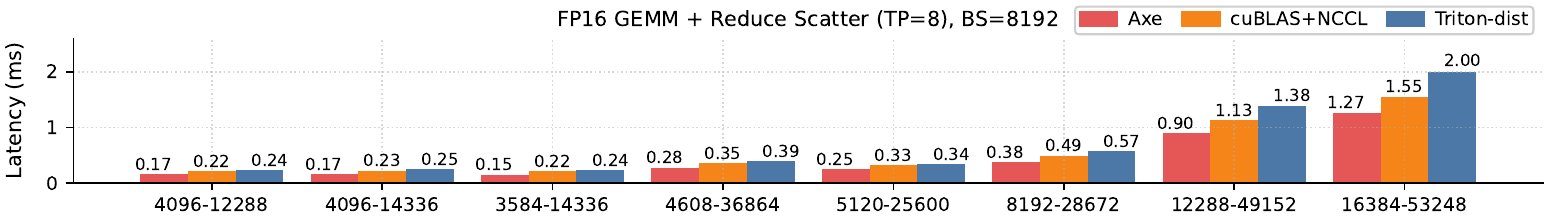}
  \caption{FP16 GEMM + Reduce-Scatter latency (ms) evaluated across different weight shapes. Lower is better.}
  \label{fig:multi-gpu}
\end{figure*}

This section evaluates Axe’s distributed-awareness ability to generate efficient kernels that overlap communication and computation.
We choose GEMM + Reduce-Scatter workloads from the same MLP layers in the previous section.
The Axe kernel composes a distributed tensor, invokes the sum operator, and leverages the compiler to dispatch to \code{multimem.ld_reduce} on B200.
We pick cuBLAS + NCCL~\cite{nccl} (a non-fused baseline) and Triton-distributed~\cite{zheng2025tritondistributedprogrammingoverlappingkernels} as our baselines
and run the evaluation on a DGX B200 server.
The results are shown in \autoref{fig:multi-gpu}. Axe delivers the lowest latency across the cases, gaining up to
\(1.40\times\) speedup over the best baseline.
The speedup comes from a fine-grained overlap of communication and computation in a single kernel, resulting in better memory bandwidth and Tensor Core utilization.
Triton-distributed’s slowdown mainly comes from the slower performance of GEMM on the B200 platform.

\presubsec
\subsection{Supporting Heterogeneous Hardware Backends}
\postsubsec

This section evaluates Axe’s ability to target heterogeneous backends.
We evaluate FP16 GEMM and Multi-head Attention performance on a \texttt{trn1.2xlarge} AWS instance with Trainium 1 AI accelerator.
We compare Axe kernels to vendor-provided reference libraries (handcrafted in Neuron Kernel Interface (NKI) DSL~\cite{nki}) and the Neuron compiler~\cite{neuron},
and report the relative performance to the Neuron compiler.
The results are shown in \autoref{fig:trn}. Our FP16 GEMM kernel ($M=N=K$) matches the performance of the handcrafted NKI library in every configuration.
On the MHA workload, the Axe kernel achieves up to 1.44$\times$ speedup and 1.26$\times$ on average over NKI.
Axe obtains the speedup by orchestrating the software pipeline schedule and memory allocation plan.
The manually optimized NKI implementation takes 120 lines for GEMM and 1188 lines for MHA, while the Axe kernel uses only 78 lines for GEMM and 228 lines for MHA.
Axe DSL helps simplify the operator schedue and address calculation, and enables us to generate efficient NKI programs from a higher-level.

\label{sec:ai-acc}
\begin{figure}[t]
  \centering
  \includegraphics[width=\columnwidth]{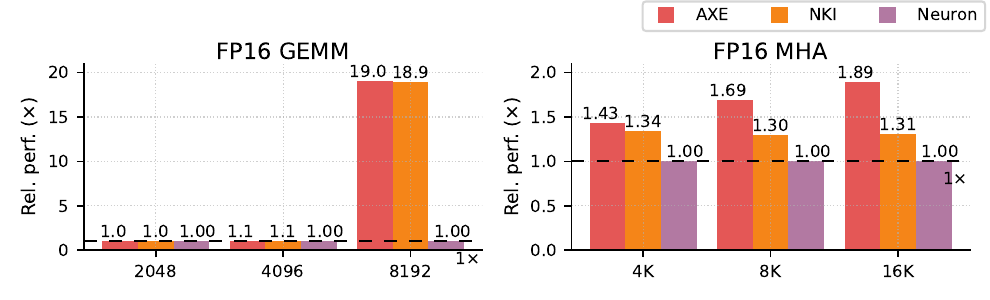}
  \caption{FP16 GEMM and Multi-head Attention test results. FP16 GEMM is tested on square shapes. MHA is tested with varying input lengths with no causal mask.}
  \label{fig:trn}
  \vspace{-15pt}
\end{figure}
\section{Related Works}
\label{sec:related}

\textbf{Layout Systems.} There are several lines of work formalizing the mapping from data tensors to hardware units~\cite{Graphene, ding2025tilus}, most efforts addressing part of the stack. The closest works to ours are CuTe~\cite{cute} and Triton linear layouts~\cite{linear}.

\emph{Relation to CuTe.} Axe uses the same shape and stride arithmetic as CuTe. CuTe generalizes strides to elements of an integer module and commonly uses unit vectors to target multidimensional TMA coordinates in global memory. Axe introduces explicitly named axes to form a vocabulary of hardware resources. Inside an atom, CuTe maps the pair \((t,v)\) to a logical index for work partitioning and remains single-valued. Axe maps logical indices to physical coordinates and adds \(R\) and \(O\) for replication and offset, making the forward map set-valued.

\emph{Relation to linear layouts.} Linear layouts employ bit-linear forms for layout conversion and swizzle compatibility.
This design enforces power-of-two shapes in the internal layout, which can have limitations for cases such as DeepGEMM~\cite{deepgemm}
and distributed settings, where non-power-of-two shapes are required. Appendix~\ref{sec:non-bit-linear} also provides more discussion on this tradeoff.

Both CuTe and linear layouts are designed for intra-GPU layout needs, while Axe is designed to also support
distributed settings and heterogeneous backends.
Our work is complementary to these existing layout systems and can interoperate with these layouts when needed.
Axe takes inspiration from the shape-and-stride mechanism of common array APIs~\cite{numpy, paszke2019pytorch}, while providing a
simple yet effective generalization of named axes that unlocks support for device memory, distributed, and heterogeneous platforms.

\textbf{Deep Learning Compilers and DSLs.}
Halide and TVM separate algorithm and schedule~\cite{halide, chen2018tvm, feng2023tensorir}. CuTeDSL exposes atoms that mirror hardware instructions and gives users loop transformations and thread-level partitioning~\cite{CuTeDSL}. Graphene introduces a GPU-centric intermediate representation for optimized tensor computations, targeting intra-GPU kernel optimization~\cite{Graphene}. Triton provides a block-collective programming model and lets the compiler decide per-thread implementations~\cite{tillet2019triton}. TileLang and Tilus extend this style while keeping tile-level abstractions for specific scenarios~\cite{wang2025tilelang, ding2025tilus}. Pallas offers a low-level kernel DSL integrated with JAX and TPU backends, and AWS Neuron provides a Trainium stack~\cite{pallas, nki}.
Our techniques can potentially be incorporated into those efforts to broaden coverage and improve productivity.

\textbf{Distributed Machine Learning Frameworks.}
Many systems study sharding and placement over device meshes. Mesh TensorFlow introduced named dimensions for SPMD. JAX GSPMD and shard map unify data and model parallelism with PartitionSpec, while Alpa and FlexFlow search over parallelization choices~\cite{mesh-tf, xu2021gspmd, alpa, flexflow}. TensorFlow DTensor and PyTorch Distributed Tensor surface sharding and replication in the core frameworks~\cite{tensorflow, paszke2019pytorch}. TileLink and Triton Distributed bring collectives into kernels so communication can overlap execution at fine granularity~\cite{zheng2025tilelink, zheng2025tritondistributedprogrammingoverlappingkernels}.
Axe can be used to cover distributed tensor formats while adding intra-GPU tiling details that are not captured by current distributed formats.

\section{Conclusion}
We presented \textbf{Axe}, a unified layout abstraction that maps logical coordinates to a multi-axis physical space via \textbf{D} (shard), \textbf{R} (replica), and \textbf{O} (offset), providing one vocabulary for placement across intra-GPU, inter-GPU, and AI accelerator needs. On top of Axe, our multi-granularity, distributed-aware model compiles to efficient kernels with reusable, layout-driven operators. Axe delivers competitive operator performance and practical wins for various backends, offering a solid foundation for unifying layout semantics across a heterogeneous software–hardware stack.

\bibliography{references}
\bibliographystyle{mlsys2025}

\clearpage
\appendix

\section{Canonicalization}
\label{appendix:canonicalization}
\subsection{Canonicalization procedure}

For the ordered list $D = (e_i^D, s_i^D, a_i^D)_{i=0}^{n_D-1}$, we apply the following rewrite rules repeatedly until none applies:

\textbf{D0 (remove unit extent):} If any iter has $e_i^D = 1$ (extent 1), delete it. (Such an iter contributes nothing to $f_D$.)

\textbf{D1 (merge adjacent iters on same axis):} If two consecutive iters target the same axis and the stride of the earlier one equals the later iter’s extent times its stride ($a_i^D = a_{i+1}^D$ and $s_i^D = e_{i+1}^D \cdot s_{i+1}^D$), then merge them into a single iter: replace 
\[ 
(e_i^D, s_i^D, a_i^D),(e_{i+1}^D, s_{i+1}^D, a_i^D) 
\] 
with 
\[ 
(e_i^D \cdot e_{i+1}^D, s_{i+1}^D, a_i^D). 
\] 
This effectively concatenates the two factors along the same axis. 

These rules yield a unique \textbf{normalized $D$} (in which no redundant 1-extent iters or mergeable pairs remain). We denote the normalized sharded tuple as $D^{\text{canon}}$.

For the offset and replication part $(O,R)$, consider each axis independently and apply:

\textbf{C0 (remove unit extent):} Remove any iter in $R$ with $e_j^R = 1$ (no effect on replication).

\textbf{C1 (normalize sign):} If an iter has a negative stride $s_j^R < 0$, replace it by an equivalent positive stride. Specifically, let $s = s_j^R$ and $e = e_j^R$; update $s_j^R \leftarrow -s$ and update 
\[ 
O \leftarrow O + (e-1)\cdot s @ a_j^R.
\] 
(This is because iterating $r$ from $0$ to $e-1$ with a stride of $-s$ is the same as iterating with stride $s$ but starting at an offset $(e-1)\cdot (-s)$ on that axis.)

\textbf{C2 (absorb multiples):} If there exist two replication iters on the same axis $a$ with strides $s_i^R$ and $s_j^R$ such that $s_j^R$ is an integer multiple of $s_i^R$ (say $s_j^R = q  s_i^R$ for some $1 \le q < e_i^R$), then absorb the latter into the former. That is, replace the two iters by a single iter 
\[ 
\big(e_i^R + q\cdot(e_j^R - 1), s_i^R, a\big). 
\] 
This effectively merges the replication patterns on that axis into one iter with a larger extent.

Apply these rules until none applies on any axis. The result is a canonical $(O^{\text{canon}}, R^{\text{canon}})$. We further say that the replication list $R^{\text{canon}}$ is \textbf{saturated} if no further $R$-absorbing merge is possible (i.e., we have applied C2 to a fixpoint). We also impose a mild \textbf{gap condition (GC)}: if we list the distinct stride values in $R^{\text{canon}}$ for a given axis in increasing order $\sigma_1 < \sigma_2 < \cdots < \sigma_m$ (with corresponding extents $E_1, E_2, \dots, E_m$), then we require 
\[ 
\forall k \in [1, m-1]: \qquad \sigma_{k+1} > E_k \cdot \sigma_k.
\] 
In essence, GC says that the replication points along an axis do not “fill” the space so densely as to create ambiguous aliasing with a smaller stride. In well-behaved layouts this is always true; GC mainly rules out pathological cases where the same physical coordinate could be reachable via different $(r_0,\dots,r_{n_R-1})$ settings.

It can be shown that these rewrite systems are confluent and terminating, and yield a unique canonical form:
\begin{proposition}
The $D$-rewrite rules (D0, D1) always terminate and produce a unique $D^{\mathrm{canon}}$ for a given $D$. Likewise, the $(O,R)$ rules (C0, C1, C2) terminate and produce a unique $(O^{\mathrm{canon}}, R^{\mathrm{canon}})$ for a given $(O,R)$. Moreover, these transformations preserve the semantics: $f_D$ and $f_L$ remain unchanged.
\end{proposition}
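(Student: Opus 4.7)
The plan is to prove the three claims sequentially for each rewrite system: (i) termination, (ii) semantic preservation rule by rule, and (iii) uniqueness of the normal form.

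First, termination. On the $D$ side, both D0 and D1 strictly decrease $|D|$, so the natural-number measure $|D|$ suffices. On the $(O,R)$ side, I would use the lexicographic measure $\mu(L)=(|R|,\ \#\{j:s_j^R<0\})$: rules C0 and C2 strictly decrease $|R|$, while C1 preserves $|R|$ but strictly decreases the number of negative strides (and never creates new ones, since its only effect on other iters is a constant shift of $O$). Since $\mu$ lives in $\mathbb{N}^2$ under the lexicographic order, both systems terminate.

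Second, semantic preservation, one rule at a time. For D0, an iter with $e=1$ has $\iota(x)_k=0$ for every $x$ and contributes nothing to the sum, while $E_D$ is unchanged; hence $f_D$ is unchanged. For D1, the identity $x_1 s_1 + x_2 s_2=(x_1 e_2+x_2)s_2$ combined with the bijection $(x_1,x_2)\mapsto x_1 e_2+x_2$ from $[0,e_1)\times[0,e_2)$ onto $[0,e_1 e_2)$ shows the merged iter yields the same contribution along axis $a$. For C0, the single value $r_j=0$ contributes $0$, so $f_R$ is unchanged. For C1, the set $\{0,-s,\dots,-(e-1)s\}$ equals $\{(e-1)s,(e-2)s,\dots,0\}$ after the compensating offset $(e-1)s@a$ is absorbed into $O$. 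For C2, the arithmetic content is the sumset identity: when $s_j^R=q s_i^R$ with $1\le q<e_i^R$, one has $\{r_i+q r_j:r_i\in[0,e_i),\ r_j\in[0,e_j)\}=[0,\ e_i+q(e_j-1))$, which coincides exactly with the image of the merged iter of extent $e_i+q(e_j-1)$ and stride $s_i^R$. Because $f_L$ is set-valued, this equality of reached coordinates implies $f_L$ is unchanged.

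Third, uniqueness. Both systems are terminating by the first step, so by Newman's lemma it suffices to verify local confluence. On the $D$ side I would enumerate the critical pairs: two D0's at distinct positions trivially commute; a unit-extent iter sandwiched between two same-axis mergeable iters yields the same final tuple whether D0 or D1 is applied first; and two overlapping D1 applications to $(I_1,I_2,I_3)$ with $s_1=e_2 s_2$ and $s_2=e_3 s_3$ both produce $(e_1 e_2 e_3,\ s_3,\ a)$ by associativity of the stride identity. On the $(O,R)$ side, C0/C1 commute trivially with the others, and the substantive case is two overlapping applications of C2 on the same axis. Here the gap condition GC is essential: it forces the distinct strides appearing on a given axis to form a strictly increasing chain $\sigma_1<\sigma_2<\cdots<\sigma_m$ with $\sigma_{k+1}>E_k\sigma_k$, so each absorbable pair is determined by smallest-stride-first, making the two C2 merges joinable. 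I would also verify that GC is preserved under each rewrite step, so the saturated fixpoint satisfies GC and is therefore uniquely determined.

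The main obstacle will be the C2 confluence argument. Two applicable C2 merges can overlap, and the merged iter may itself become absorbable into a third; the cleanest proof maintains GC as an invariant and uses it to show that the multiset of $(E_k,\sigma_k)$ at fixpoint is determined by the reachable coordinate set $\{f_R(r):r\}$ alone. The termination, semantic-preservation, and $D$-side confluence steps are then routine, and the proposition follows by assembling these pieces via Newman's lemma.
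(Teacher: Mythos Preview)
Your termination measures and rule-by-rule semantic preservation arguments are correct and match what the paper does (the paper's paragraph ``Absorbing multiples (C2) is exact'' is exactly your sumset identity). For the $D$-side, your Newman's-lemma route is a valid alternative to the paper's method: the paper instead shows directly that a normalized $D$ can be \emph{reconstructed} from $f_D$ (via a M\"obius inversion on divisor levels), so any two normal forms of the same $D$ must coincide. Both work; yours is more syntactic, the paper's is semantic.

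The gap is on the $(O,R)$ side. Rule C2 is \emph{not} locally confluent, so the Newman's-lemma plan cannot succeed as stated. Concretely, take $R=\{(4,2@a),\ (3,3@a),\ (2,6@a)\}$. Merging $(4,2)$ with $(2,6)$ (here $q=3<4$) gives $(7,2)$, after which no further C2 applies and the normal form is $\{(7,2),(3,3)\}$. Merging $(3,3)$ with $(2,6)$ (here $q=2<3$) gives $(5,3)$, and the normal form is $\{(4,2),(5,3)\}$. These are distinct, yet both produce the same reachable set $\{f_R(r)\}$. Neither normal form satisfies GC, which is precisely why the paper \emph{imposes} GC on the output rather than deriving it. Your plan to ``verify that GC is preserved under each rewrite step'' is therefore ill-posed: GC need not hold at intermediate steps (nor initially), and when it does hold no nontrivial C2 merge with $q>1$ is even applicable (since GC forces $\sigma_{k+1}>E_k\sigma_k$). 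The paper's route around this is purely semantic: assuming the saturated fixpoint satisfies GC, it shows that the pairs $(\sigma_k,E_k)$ are uniquely recoverable from the set $S=\{f_R(r)\}$ by a peel-and-lower-boundary argument (their Theorem on set-only recovery). You actually gesture at this idea in your last paragraph (``the multiset of $(E_k,\sigma_k)$ at fixpoint is determined by the reachable coordinate set''); that is the argument you need, and it should replace the critical-pair analysis for C2 rather than supplement it.
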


\begin{theorem}[Canonical form uniqueness under GC]
\label{thm:global-canon}
If two layouts $L = (D,R,O)$ and $L' = (D',R',O')$ induce the same mapping ($f_L \equiv f_{L'}$), and we transform both into their canonical forms satisfying the gap condition, then we will find $D^{\mathrm{canon}} = D'^{\mathrm{canon}}$, $R^{\mathrm{canon}} = R'^{\mathrm{canon}}$, and $O^{\mathrm{canon}} = O'^{\mathrm{canon}}$. In other words, under GC the canonical representation of a layout is unique.
\end{theorem}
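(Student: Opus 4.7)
The plan is to show that the induced map $f_L$ determines each of $D^{\text{canon}}$, $R^{\text{canon}}$, and $O^{\text{canon}}$ uniquely, so that two layouts with $f_L \equiv f_{L'}$ must share the same canonical triple. By the preceding Proposition, applying the rewrites to $L$ and $L'$ preserves $f_L$ and terminates in canonical form, so I would first reduce to the case where both $L$ and $L'$ are already canonical, and then argue that the triple can be read off from $f_L$ in only one way.

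First I would recover the replica set $\mathcal{R} := \{f_R(r) : r \in \prod_t [0, e_{J_t})\}$. Since $\mathcal{R}$ always contains $0$ (take $r = 0$) and, after C1, lies in the non-negative orthant on every used axis, $f_L(x) = f_D(x) + O + \mathcal{R}$ is a single-vector translate of $\mathcal{R}$. Hence the difference set $f_L(x) - f_L(x)$ equals $\mathcal{R} - \mathcal{R}$, and $\mathcal{R}$ itself is the translate of $f_L(0)$ whose axis-wise minimum sits at the origin. So $\mathcal{R}$ is an invariant of $f_L$, and the canonical offset is forced as $O^{\text{canon}} = \min f_L(0)$, taken component-wise over axes (using $f_D(0)=0$ and $\min \mathcal{R}=0$).

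Next I would extract $R^{\text{canon}}$ from $\mathcal{R}$. Because replica iters act independently on their axes, $\mathcal{R}$ decomposes as a direct sum $\bigoplus_a \mathcal{R}_a$ of axis-wise subsets of $\mathbb{Z}$, so it suffices to handle one axis. Under saturation (C2) together with GC, the axis-wise set $\mathcal{R}_a = \{\sum_k t_k \sigma_k : 0 \le t_k < E_k\}$ with $\sigma_{k+1} > E_k \sigma_k$ admits a unique base-like factorization: $\sigma_1$ is forced to be the smallest positive element of $\mathcal{R}_a$, $E_1$ is forced by $|\mathcal{R}_a \cap [0,\sigma_2)|$, and peeling off this iter leaves a strictly smaller set still satisfying GC, so a short induction on the number of iters recovers the full multiset.

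With $\mathcal{R}$ and $O$ in hand, the sharded map follows as $f_D(x) = \min f_L(x) - O$, a well-defined function $[0, E_D) \to \mathbb{Z}A$. I would then reconstruct $D^{\text{canon}}$ from $f_D$ by an inner-to-outer induction on the row-major unflatten: since D0 forbids unit extents, every iter has extent $\ge 2$, so $f_D(1)$ equals the stride-times-axis of the innermost iter, identifying its $(s,a)$; its extent is the largest $e$ with $f_D(k) = k\cdot s @ a$ for $0 \le k < e$; and D1 guarantees that the transition at $k = e$ cannot continue the same axis/stride naturally, so the boundary is unambiguous. Quotienting and recursing on the residue recovers the outer iters. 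The hard part will be the two base-like uniqueness arguments---on $\mathcal{R}_a$ under GC and on the ordered $D$-iter list under D0/D1. Both amount to positional-notation uniqueness with strictly growing stride gaps; the $D$ side is slightly subtler because the iters are ordered and the row-major unflatten couples them, whereas on the $R$ side uniqueness must be phrased modulo reordering of the multiset.
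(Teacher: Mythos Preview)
Your proposal is correct and shares the paper's high-level architecture: recover $O$ from the fiber minimum at $x=0$, deduce $f_D(x)=\min f_L(x)-O$, reconstruct $D$ from $f_D$, and reconstruct $R$ from the replica set $\mathcal{R}=f_L(0)-O$ per axis. The paper uses a positive linear functional $\theta$ where you use the component-wise minimum; both work because after C1 the replica set sits in the nonnegative orthant with $0\in\mathcal{R}$.

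The genuine difference is in the $D$-reconstruction. You peel the innermost iter by reading $(s_{n-1},a_{n-1})$ from $f_D(1)$, reading $e_{n-1}$ as the length of the arithmetic run $f_D(k)=k\,s_{n-1}@a_{n-1}$, and recursing on $y\mapsto f_D(e_{n-1}y)$; D1 guarantees the run terminates exactly at $k=e_{n-1}$. The paper instead expands each axis component as $v_a(x)=\sum_i c_i^{(a)}\lfloor x/p_i\rfloor$, observes that the first difference $\Delta v_a(m-1)=\sum_{d\mid m}C_a(d)$ is a divisor sum, and applies M\"obius inversion to isolate the support $\{p_i\}$ and the coefficients $c_i^{(a)}$, from which $(e_i,s_i,a_i)$ are read off in one pass. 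Your argument is more elementary and arguably cleaner for a proof; the paper's buys a global, non-recursive recovery formula (useful if one wants a closed-form decision procedure rather than an induction).

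On the $R$ side your sketch matches the paper's intent, but two points are stated loosely. Writing ``$E_1$ is forced by $|\mathcal{R}_a\cap[0,\sigma_2)|$'' is circular since $\sigma_2$ is not yet known; the non-circular version is $E_1=1+\max\{t\ge 0:t\sigma_1\in\mathcal{R}_a\}$, with saturation plus GC ensuring $E_1\sigma_1\notin\mathcal{R}_a$. And ``peeling off this iter leaves a strictly smaller set still satisfying GC'' needs a concrete definition of the peel; the paper makes this precise via the lower-boundary operator $\mathrm{LB}_{\sigma}(X)=\{x\in X:x-\sigma\notin X\}$ and proves $\mathrm{LB}_{\sigma_k}(S_k)=S_{k+1}$ under GC. With those two fixes your $R$-argument is the same as the paper's.
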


The above canonicalization is valuable for the compiler: it provides a normal form to test equivalence of layouts and to perform algebraic manipulations without worrying about superficial differences (like an extra unit stride or a different choice of indexing origin in replication).

\subsection{Canonicality of Layouts: Full Statements and Proofs}

Throughout, we use the notation from the main text. In particular, $D=(e_i,s_i,a_i)_{i=0}^{n-1}$ is an ordered list of iters, and $E_D=\prod_{i=0}^{n-1}e_i$.

\subsubsection{Uniqueness of a normalized $D$ from $f_D$}

We call $D$ \emph{normalized} (i.e., $D^{\mathrm{canon}}$) if:
(i) no $e_i=1$,
(ii) no $s_i=0$, and
(iii) no adjacent equal–axis pair $(a_i=a_{i+1})$ satisfies $s_i=e_{i+1}s_{i+1}$.

Define suffix products $p_i:=\prod_{t=i+1}^{n-1}e_t$ (so $p_{n-1}=1$) and total size $E:=E_D$. For $x\in[0,E)\cap\mathbb{Z}$, define digits
\[
d_i(x):=\big\lfloor x/p_i\big\rfloor\bmod e_i 
\]
$\text{so that}\quad f_D(x)=\sum_{i=0}^{n-1} \big(d_i(x) s_i\big)@a_i.$
Also set $\phi_i(x):=\lfloor x/p_i\rfloor$ and $\phi_{-1}\equiv 0$.

\begin{lemma}[Exact digit identity]\label{lem:digit}
For all $i$ and $x$, $d_i(x)=\phi_i(x)-e_i\,\phi_{i-1}(x)$.
\end{lemma}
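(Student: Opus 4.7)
The plan is to reduce the identity to the standard nested-floor relation \(\lfloor \lfloor x/p_i\rfloor / e_i\rfloor = \lfloor x/(e_i p_i)\rfloor\) together with the defining equation of the mod operation. The key algebraic fact is that the suffix products telescope: \(p_{i-1} = e_i\, p_i\), which follows directly from the definition \(p_i = \prod_{t=i+1}^{n-1} e_t\).

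First I would handle the generic case \(i \geq 1\). Using \(p_{i-1} = e_i p_i\) and the classical identity \(\lfloor \lfloor y\rfloor / m\rfloor = \lfloor y/m\rfloor\) for integer \(m>0\) applied with \(y = x/p_i\) and \(m = e_i\), I get
\[
\phi_{i-1}(x) \;=\; \bigl\lfloor x/p_{i-1}\bigr\rfloor \;=\; \bigl\lfloor x/(e_i p_i)\bigr\rfloor \;=\; \bigl\lfloor \phi_i(x)/e_i\bigr\rfloor.
\]
Substituting into the standard division identity \(a = e_i \lfloor a/e_i\rfloor + (a\bmod e_i)\) with \(a=\phi_i(x)\) gives
\[
\phi_i(x) \;=\; e_i\,\phi_{i-1}(x) \;+\; d_i(x),
\]
which, upon rearrangement, yields the claim.

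Next I would dispatch the boundary case \(i=0\), where \(\phi_{-1}\equiv 0\) by convention, so the desired identity degenerates to \(d_0(x) = \phi_0(x)\). Since \(x \in [0,E_D)\) and \(E_D = e_0 \cdot p_0\), we have \(\phi_0(x) = \lfloor x/p_0\rfloor \in [0,e_0)\), hence \(\phi_0(x) \bmod e_0 = \phi_0(x)\), matching the definition of \(d_0(x)\).

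There is no real obstacle here; the only subtlety is ensuring the nested-floor step is applied correctly (it does require \(e_i\) to be a positive integer, which holds by the definition of an iter). Everything else is bookkeeping on the suffix products and the definition of mod, so the proof should be short and self-contained.
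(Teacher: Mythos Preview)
Your proposal is correct and follows essentially the same route as the paper: both use $p_{i-1}=e_i p_i$ together with the division identity $a=e_i\lfloor a/e_i\rfloor+(a\bmod e_i)$ applied to $a=\phi_i(x)$ and then rearrange. The only difference is that you make the nested-floor step and the $i=0$ boundary case explicit, whereas the paper compresses everything into a single line and leans on the convention $\phi_{-1}\equiv 0$.
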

\begin{proof}
Since $p_{i-1}=e_i p_i$, we have
$\lfloor x/p_i\rfloor=e_i\lfloor x/p_{i-1}\rfloor + (\lfloor x/p_i\rfloor\bmod e_i)=e_i\phi_{i-1}(x)+d_i(x)$. Rearranging gives the claim.
\end{proof}

For each axis $a$, let $v_a(x)$ be the $a$–component of $f_D(x)$.

\begin{lemma}[Axis–wise coefficient expansion]\label{lem:coeff}
For each axis $a$,
\begin{align*}
v_a(x)&=\sum_{i=0}^{n-1} c_i^{(a)}\,\phi_i(x),\\
c_i^{(a)}&=\mathbf{1}[a_i=a]\,s_i-\mathbf{1}[a_{i+1}=a]\,e_{i+1}s_{i+1},
\end{align*}
with the convention $\mathbf{1}[a_n=a]e_ns_n:=0$.
\end{lemma}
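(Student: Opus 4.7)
The plan is to start from the defining expression $v_a(x)=\sum_{i:a_i=a} d_i(x)\,s_i$, substitute the identity $d_i(x)=\phi_i(x)-e_i\phi_{i-1}(x)$ from Lemma~\ref{lem:digit}, and then re-index the resulting sums so that both groups of terms are indexed by the same variable. Once that is done, collecting the coefficient of $\phi_i(x)$ should yield exactly $c_i^{(a)}$.

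Concretely, I would first write
\[
v_a(x)=\sum_{i=0}^{n-1}\mathbf{1}[a_i=a]\,s_i\,d_i(x),
\]
and expand each $d_i(x)$ via Lemma~\ref{lem:digit}, splitting the result into a ``$\phi_i$ part'' $\sum_i \mathbf{1}[a_i=a]\,s_i\,\phi_i(x)$ and a ``$\phi_{i-1}$ part'' $-\sum_i \mathbf{1}[a_i=a]\,e_i s_i\,\phi_{i-1}(x)$. In the second sum I would shift the index via $j=i-1$ so the summand becomes $\mathbf{1}[a_{j+1}=a]\,e_{j+1}s_{j+1}\,\phi_j(x)$, running over $j=-1,\dots,n-2$. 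The boundary term at $j=-1$ vanishes because $\phi_{-1}\equiv 0$, and the convention $\mathbf{1}[a_n=a]\,e_n s_n:=0$ lets me extend the upper end to $j=n-1$ at no cost.

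Combining the two sums over the common range $i=0,\dots,n-1$ then gives
\[
v_a(x)=\sum_{i=0}^{n-1}\bigl(\mathbf{1}[a_i=a]\,s_i-\mathbf{1}[a_{i+1}=a]\,e_{i+1}s_{i+1}\bigr)\phi_i(x),
\]
which is exactly the claimed formula with $c_i^{(a)}$ as stated.

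There is essentially no hard step here; the whole argument is an index shift. The only subtlety worth being careful about is handling the boundary indices $i=0$ and $i=n-1$ correctly, which is why $\phi_{-1}\equiv 0$ and the convention for the $i=n$ indicator are introduced in the statement. These boundary conventions are tailored precisely so the telescoping re-indexing produces a single clean sum indexed on $0,\dots,n-1$, so once they are invoked the identity falls out immediately.
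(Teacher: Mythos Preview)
Your proposal is correct and follows essentially the same approach as the paper: substitute the digit identity $d_i=\phi_i-e_i\phi_{i-1}$ from Lemma~\ref{lem:digit}, then re-index the $\phi_{i-1}$ sum so that both pieces are indexed by $\phi_i$, using $\phi_{-1}\equiv 0$ and the stated convention at $i=n$ to handle the boundaries. The paper compresses this into two lines, but the argument is identical.
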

\begin{proof}
By Lemma~\ref{lem:digit}, $d_i=\phi_i-e_i\phi_{i-1}$. Then
\begin{align*}
v_a(x)&=\sum_{i:a_i=a} s_i(\phi_i-e_i\phi_{i-1}) \\
&=\sum_{i=0}^{n-1}
\Big(\mathbf{1}[a_i=a]\,s_i-\mathbf{1}[a_{i+1}=a]\,e_{i+1}s_{i+1}\Big)\phi_i(x).
\end{align*}
\end{proof}

\begin{lemma}[First–difference periodicity]\label{lem:delta}
Let $\Delta v_a(x):=v_a(x+1)-v_a(x)$. Then
\[
\Delta v_a(x)=\sum_{i=0}^{n-1} c_i^{(a)}\,\mathbf{1}[p_i\mid x+1].
\]
\end{lemma}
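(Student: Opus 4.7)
The plan is to reduce the claim to a single elementary identity about how $\phi_i(x)=\lfloor x/p_i\rfloor$ changes when $x$ increments by $1$, and then plug in Lemma~\ref{lem:coeff}. Concretely, I would start from the axis-wise expansion $v_a(x)=\sum_{i=0}^{n-1} c_i^{(a)}\phi_i(x)$ already established in Lemma~\ref{lem:coeff}. Since $\Delta$ is a linear operator in $x$ (acting on $\mathbb{Z}$-valued functions), I get immediately
\[
\Delta v_a(x)\;=\;\sum_{i=0}^{n-1} c_i^{(a)}\,\Delta\phi_i(x),
\]
so the entire lemma comes down to identifying $\Delta\phi_i(x)=\phi_i(x+1)-\phi_i(x)$ with the indicator $\mathbf{1}[p_i\mid x+1]$.

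Next I would prove that $\phi_i(x+1)-\phi_i(x)=\mathbf{1}[p_i\mid x+1]$ by a short case split on the residue $r:=x\bmod p_i$. If $r<p_i-1$, then $x+1$ lies in the same block of length $p_i$ as $x$, so both floors equal $\lfloor x/p_i\rfloor$ and the difference is $0$; meanwhile $p_i\nmid x+1$, matching the indicator. If $r=p_i-1$, then $x+1$ is a multiple of $p_i$, the floor jumps by exactly $1$, and the indicator fires. Substituting this indicator formula back into the sum yields exactly the claimed expression $\Delta v_a(x)=\sum_i c_i^{(a)}\mathbf{1}[p_i\mid x+1]$.

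Two small bookkeeping points I would verify in passing. First, the boundary convention $\mathbf{1}[a_n=a]e_n s_n:=0$ from Lemma~\ref{lem:coeff} is preserved automatically, since $\Delta$ does not introduce new indices. Second, the formula is stated for $x\in[0,E)\cap\mathbb{Z}$ in the ambient definition of $f_D$, so one should note that it holds for $x$ with $x+1\in[0,E)$ as well, with the understanding that $p_0=E$ and $\phi_0(x)\equiv 0$ on the domain (so the $i=0$ term contributes only at the wrap-around, if ever).

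I do not anticipate a real obstacle here: the substantive work — expressing $v_a(x)$ as a linear combination of the $\phi_i$ with telescoping coefficients $c_i^{(a)}$ — is already done in Lemma~\ref{lem:coeff}. The only content of the present lemma is the floor-difference identity, which is a one-line case analysis. If anything, the care needed is notational: making sure the convention at $i=n-1$ (where $p_{n-1}=1$ so $\mathbf{1}[p_{n-1}\mid x+1]\equiv 1$) correctly cancels against the $\mathbf{1}[a_n=a]e_n s_n=0$ convention, so that the rightmost iter contributes its natural stride increment at every step along its axis.
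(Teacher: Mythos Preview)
Your argument is correct and is exactly the paper's proof: apply Lemma~\ref{lem:coeff} to write $v_a(x)=\sum_i c_i^{(a)}\phi_i(x)$, then use the elementary identity $\Delta\phi_i(x)=\mathbf{1}[p_i\mid x+1]$. One small slip in your side remarks (harmless to the proof): $p_0=\prod_{t\ge 1}e_t=E/e_0$, not $E$, and $\phi_0$ is not identically zero on $[0,E)$---you are thinking of $\phi_{-1}$.
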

\begin{proof}
$\Delta\phi_i(x)=1$ iff $p_i\mid x+1$, else $0$. Apply Lemma~\ref{lem:coeff}.
\end{proof}

For $m\mid E$, set $G_a(m):=\Delta v_a(m-1)$ and $C_a(d):=\sum_{i:\,p_i=d}c_i^{(a)}$.

\begin{lemma}[M\"obius isolation on divisors]\label{lem:mobius}
For all $m\mid E$,
\[
G_a(m)=\sum_{d\mid m} C_a(d),\qquad
C_a(d)=\sum_{m\mid d}\mu\!\left(\frac{d}{m}\right)G_a(m).
\]
Moreover, $C_a(d)=c_i^{(a)}$ if $d=p_i$, and $C_a(d)=0$ otherwise.
\end{lemma}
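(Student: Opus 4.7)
The plan is to derive the forward direction directly from Lemma~\ref{lem:delta} by grouping its indicator sum according to the value of $p_i$, then obtain the inverse identity by applying classical M\"obius inversion on the divisor lattice of $E$.

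First I would substitute $x=m-1$ into Lemma~\ref{lem:delta}; since $\mathbf{1}[p_i\mid(m-1)+1]=\mathbf{1}[p_i\mid m]$, this gives
\[
G_a(m)\;=\;\sum_{i=0}^{n-1} c_i^{(a)}\,\mathbf{1}[p_i\mid m].
\]
Every $p_i$ divides $E$, and $m\mid E$, so the condition $p_i\mid m$ sorts the indices into buckets $\{i:p_i=d\}$ indexed by divisors $d$ of $m$. Reindexing yields
\[
G_a(m)\;=\;\sum_{d\mid m}\sum_{i:\,p_i=d} c_i^{(a)}\;=\;\sum_{d\mid m} C_a(d),
\]
which is the first claimed identity. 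For the inverse relation I would invoke standard M\"obius inversion on the divisor poset of $E$: from $G_a(m)=\sum_{d\mid m}C_a(d)$ holding for every $m\mid E$, inversion produces $C_a(d)=\sum_{m\mid d}\mu(d/m)G_a(m)$ for every $d\mid E$.

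For the ``moreover'' clause I would invoke rule D0: normalization guarantees $e_t\ge 2$ for every $t$, so the suffix products $p_i=\prod_{t=i+1}^{n-1}e_t$ are strictly decreasing in $i$ and hence pairwise distinct. Therefore each fiber $\{i:p_i=d\}$ has at most one element, giving $C_a(d)=c_i^{(a)}$ when $d=p_i$ for the (unique) such $i$ and $C_a(d)=0$ otherwise. The proof is mostly mechanical; the one place that needs care is precisely this final clause, which silently depends on the normalization hypothesis inherited from the canonicalization setup of this subsection. Without D0, several indices could share a common $p_i$ and the clean one-to-one correspondence $d\leftrightarrow i$ would break, so I would flag this dependency at the outset of the proof.
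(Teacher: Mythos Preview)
Your argument is correct and, for the two displayed identities, essentially identical to the paper's: both derive $G_a(m)=\sum_{i:p_i\mid m}c_i^{(a)}$ from Lemma~\ref{lem:delta} at $x=m-1$, regroup by the value of $p_i$, and then invoke classical M\"obius inversion on the divisor lattice of $E$.

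For the ``moreover'' clause you take a shorter route than the paper. You read it off \emph{directly from the definition} $C_a(d)=\sum_{i:p_i=d}c_i^{(a)}$, once the $p_i$ are known to be pairwise distinct---which you obtain from rule D0 ($e_t\ge 2$, hence the suffix products are strictly decreasing). The paper instead re-substitutes the forward identity into the inverted formula, swaps the order of summation, and evaluates the inner sum $\sum_{m\mid d,\ p_i\mid m}\mu(d/m)$ via the change of variable $m=p_i u$, obtaining the indicator $[d=p_i]$; this lands at $\sum_i c_i^{(a)}\,[d=p_i]$, which is just the definition again and still tacitly needs distinctness to isolate a single $c_i^{(a)}$. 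Your route is cleaner and, as you note, makes the dependence on the normalization hypothesis explicit; the paper's computation functions more as a consistency check that M\"obius inversion indeed recovers the defined $C_a$.
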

\begin{proof}
By Lemma~\ref{lem:delta}, $G_a(m)=\sum_{i:\,p_i\mid m}c_i^{(a)}=\sum_{d\mid m} C_a(d)$. Invert via classical M\"obius inversion on the divisor poset. For the last claim,
\[
C_a(d)=\sum_{m\mid d}\mu(d/m)\sum_{i:\,p_i\mid m}c_i^{(a)}
=\sum_i c_i^{(a)}\!\!\sum_{\substack{m\mid d\\ p_i\mid m}}\mu(d/m).
\]
Write $m=p_i u$ with $u\mid d/p_i$. Then $\sum_{u\mid d/p_i}\mu(d/(p_i u))=1$ iff $d=p_i$, else $0$.
\end{proof}

\begin{corollary}[Recover levels and extents]\label{cor:extents}
Let $\mathcal{P}:=\{\,d\mid E:\exists a,\ C_a(d)\ne 0\,\}$. Then $\mathcal{P}=\{p_0>\cdots>p_{n-1}=1\}$ (strictly decreasing), and
\[
e_{i+1}=\frac{p_i}{p_{i+1}}\in\mathbb{Z}_{\ge2},\qquad e_0=\frac{E}{p_0}.
\]
\end{corollary}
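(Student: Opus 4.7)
The plan is to leverage Lemma~\ref{lem:mobius}, which already pins down $C_a(d)$ axis-by-axis: it is $c_i^{(a)}$ when $d = p_i$ for some $i$ and zero otherwise. Thus $\mathcal{P}$ is automatically a subset of $\{p_0,\dots,p_{n-1}\}$, and the entire argument reduces to (a) showing the reverse inclusion via the normalization hypotheses, (b) showing the $p_i$ are pairwise distinct and strictly decreasing, and (c) reading off the extents from the $p_i$.

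For (a), I would pick, for each index $i$, the axis $a := a_i$ and evaluate the formula
\[
c_i^{(a_i)} = s_i - \mathbf{1}[a_{i+1}=a_i]\,e_{i+1}s_{i+1}.
\]
Two cases. If $a_{i+1}\neq a_i$ (or $i=n-1$), then $c_i^{(a_i)}=s_i$, which is nonzero by normalization condition (ii). If $a_{i+1}=a_i$, then $c_i^{(a_i)}=s_i - e_{i+1}s_{i+1}$, which is nonzero by condition (iii) (the non-merge condition, $s_i\neq e_{i+1}s_{i+1}$). Either way, $C_{a_i}(p_i)\neq 0$, so $p_i\in\mathcal{P}$.

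For (b), condition (i) forbids $e_j=1$, so each $e_j\ge 2$. Since $p_{i}/p_{i+1}=e_{i+1}\ge 2$, the sequence $p_0>p_1>\cdots>p_{n-1}$ is strictly decreasing, and $p_{n-1}$ is the empty product, hence $1$. Combined with (a), this gives $\mathcal{P}=\{p_0>\cdots>p_{n-1}=1\}$ as claimed.

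For (c), the extent identities are immediate from the suffix-product definition: $e_{i+1}=p_i/p_{i+1}\in\mathbb{Z}_{\ge 2}$ (an integer because $p_{i+1}\mid p_i$, and at least $2$ by condition (i)); and since $E=\prod_{j=0}^{n-1}e_j=e_0\,p_0$, we obtain $e_0=E/p_0$. The only place where real care is needed is step (a), where one must invoke all three normalization conditions to guarantee $c_i^{(a_i)}\neq 0$; everything else is bookkeeping on the suffix products and reuse of Lemma~\ref{lem:mobius}.
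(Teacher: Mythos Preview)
Your proposal is correct and follows the same approach as the paper: invoke Lemma~\ref{lem:mobius} for the containment $\mathcal{P}\subseteq\{p_i\}$, then use the normalization hypotheses to show each $c_i^{(a_i)}\neq 0$ for the reverse inclusion, and read off the extents from the suffix-product definition. In fact your argument is more complete than the paper's terse one-line proof, which leaves the reverse inclusion implicit (the verification that normalization forces $\mathbf{C}(p_i)\neq 0$ only surfaces later, inside the proof of Theorem~\ref{thm:D-unique}).
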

\begin{proof}
By Lemma~\ref{lem:mobius}, $\mathcal{P}=\{p_i\}$, and by definition $p_i=e_{i+1}p_{i+1}$.
\end{proof}

\begin{theorem}[Uniqueness of normalized $D$]\label{thm:D-unique}
Let $D,D'$ be normalized sharded lists with the same $E$ and $f_D\equiv f_{D'}$ on $[0,E)$. Then $n=n'$ and $(e_i,s_i,a_i)=(e'_i,s'_i,a'_i)$ for all $i$.
\end{theorem}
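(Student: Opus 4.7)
The plan is to read off the structure of a normalized $D$ directly from the induced map $f_D$ using the inversion machinery of Lemmas~\ref{lem:coeff}--\ref{lem:mobius} and Corollary~\ref{cor:extents}, then argue uniqueness in two stages: first recover the shape skeleton $(n, (e_i), (p_i))$, then recover each $(a_i, s_i)$ by reverse induction on $i$.

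Since $f_D \equiv f_{D'}$ on $[0,E)$, the axis components $v_a$ and hence the first differences $\Delta v_a$ coincide for both layouts, so by Lemma~\ref{lem:mobius} the per-divisor coefficients $C_a(d)$ and therefore $c_i^{(a)}$ agree. Normalization ensures that every level $p_i$ really appears in $\mathcal{P}$: indeed, if $a_i \neq a_{i+1}$ then $c_i^{(a_i)} = s_i \neq 0$ by condition (ii), while if $a_i = a_{i+1}$ then $c_i^{(a_i)} = s_i - e_{i+1}s_{i+1} \neq 0$ by condition (iii). Applying Corollary~\ref{cor:extents} to both layouts now forces $n = n'$ and $(e_i) = (e'_i)$.

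For the inductive recovery of $(a_i, s_i)$, the base case $i = n-1$ is immediate: Lemma~\ref{lem:coeff} gives $c_{n-1}^{(a)} = \mathbf{1}[a_{n-1}=a]\,s_{n-1}$ (the boundary term vanishes), and since $s_{n-1} \neq 0$ exactly one axis carries a nonzero coefficient, identifying $a_{n-1}$ and its stride. For the step, once $(a_{i+1}, s_{i+1}, e_{i+1})$ is fixed, define
\[
\tilde c_i^{(a)} \;:=\; c_i^{(a)} \;+\; \mathbf{1}[a_{i+1}=a]\,e_{i+1}s_{i+1}.
\]
By Lemma~\ref{lem:coeff} this simplifies to $\tilde c_i^{(a)} = \mathbf{1}[a_i = a]\,s_i$, which is supported on a single axis with value $s_i \neq 0$. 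Hence $a_i$ and $s_i$ are read off uniquely, and the same extraction applied to $D'$ (with identical $c_i^{(a)}$ and inductively identical $(a_{i+1}', s_{i+1}', e_{i+1}')$) yields $(a'_i, s'_i) = (a_i, s_i)$.

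The main subtlety will be the equal-axis case $a_i = a_{i+1}$: the raw coefficient $c_i^{(a_i)}$ combines contributions from two levels and does not on its own reveal $s_i$, so a naive per-axis reading could mis-assign the stride. The boundary-correction term in $\tilde c_i$ is exactly what undoes the confounding contribution from level $i+1$, which is why reverse induction is the natural direction. Two roles of normalization should be highlighted cleanly in the write-up: conditions (i)--(iii) make the level set $\mathcal{P}$ faithful so that $n$ and $(e_i)$ are recoverable, whereas condition (ii) alone drives the single-axis support argument that pins down each $(a_i, s_i)$.
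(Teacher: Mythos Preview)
Your proposal is correct and follows essentially the same approach as the paper: recover the level set $\mathcal{P}$ and hence $(n,(e_i))$ via Corollary~\ref{cor:extents}, then reconstruct $(a_i,s_i)$ by reverse induction on $i$ using the shared coefficients $c_i^{(a)}$. Your boundary-corrected quantity $\tilde c_i^{(a)} = c_i^{(a)} + \mathbf{1}[a_{i+1}=a]\,e_{i+1}s_{i+1}$ is a cleaner, case-free packaging of the paper's two-case analysis on whether $\mathbf{C}(p_i)$ has support off $a_{i+1}$, but the underlying computation is identical; one minor expository point is that the clause ``and therefore $c_i^{(a)}$ agree'' is strictly justified only after you have matched the $p_i$ (since the index $i$ is layout-dependent), so in the write-up place that conclusion after invoking Corollary~\ref{cor:extents}.
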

\begin{proof}
Compute $G_a$ and $C_a(d)$ from $f_D$ (Lemma~\ref{lem:mobius}); the same values arise from $f_{D'}$ since $f_D=f_{D'}$. Thus both lists share the same decreasing $(p_i)$ and, by Cor.~\ref{cor:extents}, the same extents. For each $i$, set $\mathbf{C}(p_i):=(C_a(p_i))_{a\in A}=(c_i^{(a)})_a$; this vector is common to both lists. At $i=n-1$, $c_{n-1}^{(a)}=\mathbf{1}[a_{n-1}=a]\,s_{n-1}$, so $\mathbf{C}(1)$ identifies $a_{n-1}$ and $s_{n-1}$. Proceeding upward, suppose $a_{i+1},s_{i+1}$ are known. If $\mathbf{C}(p_i)$ has a nonzero entry at $\beta\neq a_{i+1}$, then necessarily $\mathbf{C}(p_i)[a_{i+1}]=-e_{i+1}s_{i+1}$ and $\mathbf{C}(p_i)[\beta]=s_i$, so $a_i:=\beta$. Otherwise $\mathbf{C}(p_i)$ is supported only at $a_{i+1}$; then $a_i=a_{i+1}$ and $s_i=\mathbf{C}(p_i)[a_{i+1}]+e_{i+1}s_{i+1}$. Normalization guarantees $\mathbf{C}(p_i)\neq 0$ (no merged adjacency and no trivial iter). Hence $(a_i,s_i)$ are uniquely reconstructed for both lists and must coincide.
\end{proof}

\subsubsection{Canonical $(O{+}R)$ under saturation and GC}

Fix an axis $a$ and consider the (post C0–C1–C2) per–axis replication list with strictly increasing strides $\sigma_1<\cdots<\sigma_J$ and extents $E_i\ge 1$. Define
\[
S_k:=\Bigl\{\sum_{i=k}^{J} r_i \sigma_i\ \Bigm|\ 0\le r_i<E_i\Bigr\}\subset\mathbb{Z}_{\ge0},\qquad S:=S_1.
\]
Write $L_k:=\{0,\sigma_k,\dots,(E_k-1)\sigma_k\}$ and for $g>0$ define the $g$–lower boundary operator $\mathrm{LB}_g(X):=\{x\in X\mid x-g\notin X\}$.

Assume \emph{saturation} (no residual C2 applies) and \emph{GC}:
\begin{align*}
\text{(Sat)}\quad \sigma_{k}\notin \{q\sigma_i: i<k,\ 1\le q\le E_i\},\\
\text{(GC)}\quad \sigma_{k+1}>E_k\sigma_k\ \ (k\ge 1).
\end{align*}

\begin{lemma}[Cumulative separation]\label{lem:cum}
For every $k\ge 2$, $\sum_{i=1}^{k-1}(E_i-1)\sigma_i<\sigma_k$.
\end{lemma}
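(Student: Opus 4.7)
The plan is to prove Lemma~\ref{lem:cum} by a one-line induction on $k\ge 2$, using only the Gap Condition (GC) and the positivity of $\sigma_i$ and $E_i$. Neither the saturation hypothesis (Sat) nor the structure of the sum-sets $S_k$ is needed here, since the claim is purely a numeric bound on strides and extents along a single axis.

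For the base case $k=2$, the sum on the left collapses to $(E_1-1)\sigma_1$. Because $\sigma_1>0$ and $E_1\ge 1$, this quantity is strictly smaller than $E_1\sigma_1$, and GC at index $k=1$ asserts $\sigma_2>E_1\sigma_1$, which chains to yield the desired inequality.

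For the inductive step, assume $\sum_{i=1}^{k-1}(E_i-1)\sigma_i<\sigma_k$. Adding $(E_k-1)\sigma_k$ to both sides gives
\[
\sum_{i=1}^{k}(E_i-1)\sigma_i \;<\; \sigma_k + (E_k-1)\sigma_k \;=\; E_k\sigma_k \;<\; \sigma_{k+1},
\]
where the last step invokes GC at index $k$. This closes the induction and establishes the lemma.

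There is essentially no obstacle: the only point to verify is that GC is applicable at every index $k\ge 1$ used by the induction, which matches its stated hypothesis. The real content lies downstream --- this cumulative bound will be used to show that any partial sum $\sum_{i<k} r_i\sigma_i$ with $0\le r_i<E_i$ remains strictly below $\sigma_k$, so that higher-stride contributions cannot be mimicked by lower-stride ones, which is exactly what underpins the unique decodability (and ultimately the canonicality) of the per-axis replication list.
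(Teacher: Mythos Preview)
Your proof is correct and follows essentially the same induction as the paper's: the base case $(E_1-1)\sigma_1<E_1\sigma_1<\sigma_2$ by (GC), and the step $\sum_{i\le k}(E_i-1)\sigma_i<\sigma_k+(E_k-1)\sigma_k=E_k\sigma_k<\sigma_{k+1}$. Your added remarks on which hypotheses are actually used and on the downstream role of the bound are accurate.
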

\begin{proof}
For $k=2$, $(E_1-1)\sigma_1< E_1\sigma_1<\sigma_2$ by (GC). Induct:
$\sum_{i\le k}(E_i-1)\sigma_i<\sigma_k+(E_k-1)\sigma_k=E_k\sigma_k<\sigma_{k+1}$.
\end{proof}

\begin{lemma}[Uniqueness of digits]\label{lem:digits-unique}
If $\sum_{i=1}^J r_i\sigma_i=\sum_{i=1}^J r'_i\sigma_i$ with $0\le r_i,r'_i<E_i$, then $r_i=r'_i$ for all $i$.
\end{lemma}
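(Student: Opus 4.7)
The plan is to use the standard ``highest disagreeing digit'' trick combined with the cumulative separation bound from Lemma~\ref{lem:cum}. I would assume for contradiction that $r \neq r'$, and then pick the largest index $k \in \{1,\ldots,J\}$ with $r_k \neq r'_k$; by symmetry I may assume $r_k > r'_k$. Since all coefficients at indices strictly larger than $k$ agree, subtracting the two sides of the hypothesis and rearranging gives $(r_k - r'_k)\sigma_k = \sum_{i=1}^{k-1}(r'_i - r_i)\sigma_i$. The left-hand side is at least $\sigma_k$ because $r_k - r'_k \ge 1$, whereas the right-hand side is bounded in absolute value by $\sum_{i=1}^{k-1}(E_i - 1)\sigma_i$, which Lemma~\ref{lem:cum} guarantees is strictly smaller than $\sigma_k$. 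This contradicts equality, so no disagreement is possible.

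The only edge case I would flag is $k=1$, where the right-hand sum is empty; in that regime the equation reduces to $(r_1 - r'_1)\sigma_1 = 0$ with $r_1 \neq r'_1$ and $\sigma_1 > 0$, giving an immediate contradiction that does not need Lemma~\ref{lem:cum}. The $J=1$ case collapses to this same triviality. In both the generic and edge cases, I rely on $\sigma_k > 0$ (implicit in the construction via sign normalization C1) to turn the difference into a strictly positive lower bound.

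I do not expect any real obstacle here: the argument is a routine mixed-radix uniqueness proof, and the only nontrivial tool required is the strict separation $\sum_{i<k}(E_i-1)\sigma_i < \sigma_k$ that Lemma~\ref{lem:cum} (itself a consequence of GC) already supplies. Notably, neither the saturation condition (Sat) nor the strict monotonicity $\sigma_1 < \cdots < \sigma_J$ enters this specific step beyond allowing Lemma~\ref{lem:cum} to apply, so the proof writes itself once the largest-disagreement index is chosen and the cumulative separation is cited.
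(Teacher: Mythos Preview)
Your proposal is correct and matches the paper's own proof essentially line for line: pick the largest index $k$ with $r_k\neq r'_k$, rearrange, and bound the tail by $\sum_{i<k}(E_i-1)\sigma_i<\sigma_k$ via Lemma~\ref{lem:cum} to force a contradiction. Your explicit handling of the $k=1$ edge case is a small elaboration beyond what the paper writes, but the core argument is identical.
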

\begin{proof}
Let $k$ be the largest index with $r_k\ne r'_k$. Then
$0=(r_k-r'_k)\sigma_k+\sum_{i<k}(r_i-r'_i)\sigma_i$. The tail has absolute value $\le\sum_{i<k}(E_i-1)\sigma_i<\sigma_k$ by Lemma~\ref{lem:cum}, forcing $r_k=r'_k$.
\end{proof}

\begin{lemma}[Window decomposition and boundaries]\label{lem:window}
For every $k$:
\begin{itemize}
\item[(i)] $S_k\cap[0,\sigma_{k+1})=L_k$ (with the convention $\sigma_{J+1}:=+\infty$).
\item[(ii)] $S_k=\bigsqcup_{B\in S_{k+1}} \bigl(B+L_k\bigr)$ (disjoint union).
\item[(iii)] $\mathrm{LB}_{\sigma_k}(S_k)=S_{k+1}$.
\end{itemize}
\end{lemma}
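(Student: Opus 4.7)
The plan is to establish the three parts in order, using (i) and (ii) to expose the block structure of $S_k$ that makes (iii) manageable. For (i), one direction is immediate: $L_k\subseteq S_k$ is just the choice of digits $r_k<E_k$ and $r_i=0$ for $i>k$, and the bound $(E_k-1)\sigma_k<\sigma_{k+1}$ from (GC) places $L_k$ inside the window. For the converse, if $x=\sum_{i\ge k}r_i\sigma_i\in S_k$ has $r_m>0$ for some $m>k$, then nonnegativity of all summands forces $x\ge\sigma_m\ge\sigma_{k+1}$, contradicting $x<\sigma_{k+1}$; hence all higher digits vanish and $x\in L_k$.

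Part (ii) then follows directly from the digit decomposition: $\sum_{i\ge k}r_i\sigma_i=r_k\sigma_k+\sum_{i\ge k+1}r_i\sigma_i\in L_k+S_{k+1}$ shows both inclusions, and disjointness is exactly Lemma~\ref{lem:digits-unique} applied inside $S_k$, since any coincidence $B+r\sigma_k=B'+r'\sigma_k$ would yield two valid $S_k$-digit tuples for the same integer and must therefore agree.

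For (iii), I would decompose any $x\in S_k$ via (ii) as $x=B+t\sigma_k$ with $B\in S_{k+1}$ and $t\in[0,E_k)$. If $t\ge 1$, then $x-\sigma_k=B+(t-1)\sigma_k$ still lies in the same block, so $x\notin\mathrm{LB}_{\sigma_k}(S_k)$. Otherwise $t=0$, i.e.\ $x=B$, and I must show $B-\sigma_k\notin S_k$. For $B=0$ this is immediate since $S_k\subseteq\mathbb{Z}_{\ge 0}$. For $B>0$, I would assume for contradiction $B-\sigma_k=B'+t'\sigma_k\in S_k$, giving $B=B'+(t'+1)\sigma_k$: if $t'+1<E_k$ this is a second $S_k$-digit representation of $B\in S_{k+1}$ with nonzero $\sigma_k$-digit, contradicting Lemma~\ref{lem:digits-unique}; the only surviving case is $t'+1=E_k$, which yields the delicate identity $B-B'=E_k\sigma_k$ with $B,B'\in S_{k+1}$.

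Ruling this identity out is the main obstacle. I plan to prove the auxiliary estimate, by induction on $m>k$: for any $|e_i|\le E_i-1$,
\[
E_k\sigma_k+\sum_{i=k+1}^{m-1}e_i\sigma_i<\sigma_m,
\]
with base case $m=k+1$ being (GC) and the inductive step using $E_m\sigma_m<\sigma_{m+1}$ from (GC) combined with the worst-case choice $e_i=E_i-1$ of the inductive hypothesis. Applying this both with $e_i$ and with $-e_i$ yields the two-sided bound $\bigl|\sum e_i\sigma_i\bigr|<\sigma_m-E_k\sigma_k$. Expanding $E_k\sigma_k=\sum_{i\ge k+1}d_i\sigma_i$ with $d_i=r_i^B-r_i^{B'}$ and $|d_i|\le E_i-1$, let $m$ be the largest index with $d_m\ne 0$; a standard largest-term analysis using Lemma~\ref{lem:cum} forces $d_m=1$ (the cases $d_m\ge 2$ and $d_m\le-1$ each immediately contradict $0<E_k\sigma_k<\sigma_m$), reducing to $\sum_{k+1\le i<m}d_i\sigma_i=E_k\sigma_k-\sigma_m$, whose absolute value $\sigma_m-E_k\sigma_k$ saturates the auxiliary bound, a contradiction. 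The hard part is arranging the induction so that both signs of the tail are controlled in one stroke; once that is in place, everything else is bookkeeping on digit representations.
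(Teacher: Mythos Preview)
Your proof is correct. Parts (i) and (ii) are essentially the paper's argument; your use of Lemma~\ref{lem:digits-unique} for disjointness in (ii) is in fact cleaner than the paper, which instead invokes the (unproved) claim that any nonzero difference of elements of $S_{k+1}$ has absolute value at least $\sigma_{k+1}$.

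The real divergence is in (iii). The paper does not split on the value of $t'$: from $B-\sigma_k\in S_k$ it writes
\[
\Bigl(\textstyle\sum_{i>k} r_i\sigma_i\Bigr)-\Bigl(\textstyle\sum_{i>k} r'_i\sigma_i\Bigr)=(1+r_k)\sigma_k,
\]
observes that the left side is a difference of elements of $S_{k+1}$ (hence $0$ or $\ge\sigma_{k+1}$ in absolute value), while the right side lies in $(0,E_k\sigma_k]$; since $E_k\sigma_k<\sigma_{k+1}$ by (GC), this is an immediate contradiction covering all $r_k$ at once. Your route instead peels off the easy range $t'+1<E_k$ via digit-uniqueness and then isolates the boundary identity $B-B'=E_k\sigma_k$, which you kill with your inductive auxiliary estimate. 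Both are valid; the paper's argument is a one-liner \emph{given} the separation claim, whereas your auxiliary estimate is precisely the induction that justifies that claim (indeed, your inequality $E_k\sigma_k+\sum_{i=k+1}^{m-1}e_i\sigma_i<\sigma_m$ specialized to $E_k\sigma_k\to 0$ and telescoped gives $\sigma_m-\sum_{i=k+1}^{m-1}(E_i-1)\sigma_i>\sigma_{k+1}$, which is exactly what the paper uses). So your argument is longer but more self-contained; the paper's is terser but leans on an asserted lemma that your estimate supplies.
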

\begin{proof}
(i) If $x<\sigma_{k+1}$ and $x=\sum_{i\ge k}r_i\sigma_i$, then $r_i=0$ for $i>k$ (else the sum of deeper strides $\ge \sigma_{k+1}$ by (GC)), hence $x=r_k\sigma_k\in L_k$.

(ii) Any $x\in S_k$ can be written $x=B+r_k\sigma_k$ with $B:=\sum_{i>k}r_i\sigma_i\in S_{k+1}$ and $r_k\in[0,E_k-1]$, so $x\in B+L_k$. Disjointness: if $B+r\sigma_k=B'+r'\sigma_k$ with $B\neq B'$, then $|B-B'|=|r'-r|\sigma_k\le (E_k-1)\sigma_k<\sigma_{k+1}$ by (GC), but any nonzero difference of elements of $S_{k+1}$ is $\ge \sigma_{k+1}$. Contradiction. Thus $B=B'$ and $r=r'$; the latter by Lemma~\ref{lem:digits-unique}.

(iii) $(\subseteq)$ Let $B\in S_{k+1}$. Then $B\in S_k$ (choose $r_k=0$). If $B-\sigma_k\in S_k$, then there exist digits with $(B-\sigma_k)=r_k\sigma_k+\sum_{i>k} r_i\sigma_i$. Moving $\sigma_k$ to the right gives
\[
\sum_{i>k} r_i\sigma_i - \sum_{i>k} r'_i\sigma_i = (1+r_k)\sigma_k,
\]
for some representation $B=\sum_{i>k} r'_i\sigma_i$. The LHS is $0$ or $\ge \sigma_{k+1}$; the RHS $\le E_k\sigma_k$. By (GC) neither case is possible; hence $B-\sigma_k\notin S_k$ and $B\in \mathrm{LB}_{\sigma_k}(S_k)$.

$(\supseteq)$ Let $x\in \mathrm{LB}_{\sigma_k}(S_k)$ with unique digits $x=r_k\sigma_k+\sum_{i>k} r_i\sigma_i$ (Lemma~\ref{lem:digits-unique}). If $r_k\ge 1$, then $x-\sigma_k=(r_k-1)\sigma_k+\sum_{i>k} r_i\sigma_i\in S_k$, contradicting $x\in \mathrm{LB}$. Thus $r_k=0$ and $x\in S_{k+1}$.
\end{proof}

\begin{theorem}[Set–only recovery under saturation + GC]\label{thm:R-unique}
Let $S:=S_1$ be the replication set of a saturated $R$ satisfying GC. Define recursively
\begin{align*}
B_1&:=S,\\
\sigma_k&:=\min\big(B_k\setminus\{0\}\big),\\
E_k&:=1+\max\{\,t\ge 0: t\sigma_k\in B_k\,\},\\
B_{k+1}&:=\mathrm{LB}_{\sigma_k}(B_k).
\end{align*}
Then $B_k=S_k$ for all $k$, and the pairs $(\sigma_k,E_k)$ coincide with the true strides and extents. Consequently, any representation of $S$ reduces (by C0–C2 and the same saturation) to the same $R$ (per axis, up to permutation).
\end{theorem}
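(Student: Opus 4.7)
The plan is to prove the theorem by induction on $k$, using the window decomposition in Lemma~\ref{lem:window} to inductively peel off the smallest stride of $S_k$ and recover the true pair $(\sigma_k, E_k)$ together with the residual set $S_{k+1}$.

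For the base case, $B_1 := S = S_1$ holds by definition. Suppose inductively that $B_k = S_k$ for some $k \ge 1$. I would then invoke Lemma~\ref{lem:window}(i): $S_k \cap [0,\sigma_{k+1}) = L_k = \{0,\sigma_k,\dots,(E_k{-}1)\sigma_k\}$, so the minimum nonzero element of $B_k$ is exactly the true stride $\sigma_k$, proving that $\sigma_k := \min(B_k\setminus\{0\})$ recovers it. Having recovered $\sigma_k$, an application of Lemma~\ref{lem:window}(iii) gives $B_{k+1} = \mathrm{LB}_{\sigma_k}(B_k) = \mathrm{LB}_{\sigma_k}(S_k) = S_{k+1}$, which closes the induction on sets. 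The recursion terminates once $B_{J+1} = S_{J+1} = \{0\}$, so the algorithm returns the same ordered list $(\sigma_k,E_k)_{k=1}^{J}$ for every saturated representation of $S$.

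The one delicate step is the recovery of $E_k$, which I would read as the length of the longest initial run $0,\sigma_k,2\sigma_k,\dots$ contained in $B_k$. The forward inclusion $\{0,\sigma_k,\dots,(E_k{-}1)\sigma_k\}\subseteq S_k$ is immediate from $L_k\subseteq S_k$. For the reverse, I would show $E_k\sigma_k\notin S_k$: if $E_k\sigma_k\in S_k$, the unique digit representation (Lemma~\ref{lem:digits-unique}) yields $E_k\sigma_k = r_k\sigma_k + R$ with $r_k<E_k$ and $R\in S_{k+1}$, so $(E_k-r_k)\sigma_k = R$. The LHS lies in $[\sigma_k,E_k\sigma_k]$, and by (GC) this interval is contained in $[0,\sigma_{k+1})$; but the RHS is either $0$ (which forces $r_k=E_k$, a contradiction) or at least $\sigma_{k+1}$. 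Either branch contradicts the equation, so $E_k\sigma_k\notin S_k$ and the initial run has length exactly $E_k$.

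The hardest part will be this last $E_k$ argument, and it is also where the phrasing of the theorem must be interpreted as the length of the initial contiguous run of multiples of $\sigma_k$ in $B_k$ rather than the unrestricted maximum $t$ with $t\sigma_k\in B_k$: deeper iters can occasionally place isolated larger multiples of $\sigma_k$ back into $S_k$ (via combinations $\sum_{i>k} r_i\sigma_i$ that happen to be multiples of $\sigma_k$), and only (GC) combined with the unique-digit property rules out a contiguous extension past $E_k\sigma_k$. Once $(\sigma_k,E_k)$ are uniquely determined from $S$, the saturation rules C0--C2 force any representation of the same $S$ to reduce to the same per-axis iter list up to permutation, completing the proof.
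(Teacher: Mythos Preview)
Your proposal is correct and follows essentially the same inductive route as the paper's proof: invoke Lemma~\ref{lem:window}(i) to identify $\sigma_k$ and the run length $E_k$, then Lemma~\ref{lem:window}(iii) to pass from $B_k=S_k$ to $B_{k+1}=S_{k+1}$. Your expanded argument that $E_k\sigma_k\notin S_k$ via digit uniqueness and (GC) is in fact more careful than the paper's one-line appeal to saturation, and your observation that the formula $E_k=1+\max\{t\ge 0:t\sigma_k\in B_k\}$ must be read as the length of the initial contiguous run (not the literal maximum multiple, which can fail---e.g.\ $\sigma_1=1,E_1=2,\sigma_2=3,E_2=2$ gives $3\cdot 1\in S_1$) is a genuine correction to the statement.
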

\begin{proof}
By Lemma~\ref{lem:window}(i), $\sigma_1=\min(S\setminus\{0\})$ and $E_1$ is the exact run length along $\sigma_1$; saturation ensures $E_1\sigma_1\notin S$. By Lemma~\ref{lem:window}(iii), $B_2=\mathrm{LB}_{\sigma_1}(S)=S_2$. Assume $B_k=S_k$. Lemma~\ref{lem:window}(i) yields the true $(\sigma_k,E_k)$; Lemma~\ref{lem:window}(iii) gives $B_{k+1}=S_{k+1}$. Induct on $k$.
\end{proof}

\paragraph{Absorbing multiples (C2) is exact.}
Suppose on one axis we have two replication iters $(E_1,\sigma)$ and $(E_2,q\sigma)$ with $1\le q\le E_1$. Then
\begin{align*}
\{r_1\sigma + r_2 q\sigma \mid 0\le r_1<E_1,\ 0\le r_2<E_2\} \\
=\{r'\sigma \mid 0\le r' \le (E_1-1)+q(E_2-1)\},
\end{align*}
since for each fixed $r_2$, the set $\{r_1+q r_2:0\le r_1<E_1\}$ is a contiguous block of length $E_1$, and the union over $r_2=0,\dots,E_2-1$ produces a contiguous interval from $0$ to $(E_1-1)+q(E_2-1)$. This proves the correctness of C2 and shows its result is independent of the order in which multiples are absorbed along a chain (hence confluence per axis).

\subsubsection{Global canonicality}

\begin{lemma}[Fiber minima pin down $O$]\label{lem:fiber}
Fix a linear functional $\theta:\mathbb{Z}A\to\mathbb{Z}$ with strictly positive weights on each axis. After sign–normalizing $R$ (all replication strides $>0$),
\[
\min\nolimits_{\theta} f_L(x)=f_D(x)+O\quad\text{for all }x.
\]
\end{lemma}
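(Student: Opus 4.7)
The plan is to reduce the minimization to a minimization purely over the replication multi-index, since $f_D(x)$ and $O$ do not depend on $r$. By linearity of $\theta$, for any $y = f_D(x) + f_R(r) + O \in f_L(x)$,
\[
\theta(y) = \theta(f_D(x)) + \theta(O) + \theta(f_R(r)),
\]
so minimizing $\theta$ over $f_L(x)$ is equivalent to minimizing $\theta(f_R(r))$ over $r \in \prod_t [0, e_{J_t})$.

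Next, I would expand $\theta(f_R(r))$ explicitly. Writing $\theta = \sum_a w_a \cdot (\cdot)[a]$ with $w_a > 0$ for every axis, and using $f_R(r) = \sum_t (r_t s_{J_t}) @ a_{J_t}$,
\[
\theta(f_R(r)) = \sum_t r_t \, s_{J_t} \, w_{a_{J_t}}.
\]
After sign-normalization via C1, every $s_{J_t} > 0$, and by hypothesis every $w_{a_{J_t}} > 0$; the digits satisfy $r_t \ge 0$. Hence each summand is non-negative, so $\theta(f_R(r)) \ge 0$ with equality iff $r_t = 0$ for all $t$, i.e.\ iff $r = 0$. Since $r = 0 \in \prod_t [0, e_{J_t})$ (each $e_{J_t} \ge 1$), this minimizer lies in the index domain.

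Substituting $r = 0$ gives $f_R(0) = 0$ and therefore the unique minimizer in $f_L(x)$ is $f_D(x) + O$. Interpreting $\min_\theta f_L(x)$ as the argmin element, this yields the claimed identity. The argument also covers the degenerate case $R = \varnothing$, where $f_L(x) = \{f_D(x) + O\}$ is a singleton and the statement is immediate.

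The proof is essentially routine once the sign normalization from C1 is in place; the only subtlety to flag is that the statement genuinely requires \emph{all} replication strides to have been sign-normalized, not merely the canonical form of $R$. Without C1, a negative stride on a positively weighted axis would flip the sign of a summand and could make $\theta(f_R(r)) < 0$ for some $r \ne 0$, breaking the uniqueness of the minimizer at $r = 0$. So the main thing to be careful about is ensuring the invocation of this lemma is always preceded by C1 normalization.
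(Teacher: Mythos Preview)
Your proof is correct and follows essentially the same approach as the paper: use linearity of $\theta$ to reduce to minimizing over the replication part, then observe that after sign-normalization every term $r_t s_{J_t} w_{a_{J_t}}$ is nonnegative, so $r=0$ is the unique minimizer. The paper's proof is terser (it phrases the linearity step as $\min_\theta(g+S)=g+\min_\theta S$), but the argument is the same.
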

\begin{proof}
For any finite $S\subset\mathbb{Z}A$ and any $g\in\mathbb{Z}A$, $\min_\theta(g+S)=g+\min_\theta S$ because $\theta(g+s)=\theta(g)+\theta(s)$. Every nonzero $r\in f_R(\cdot)$ has a positive $\theta$–value (all strides $>0$), so $0$ is the unique $\theta$–minimum in the replication fiber; hence $\min_\theta(f_D(x)+O+f_R(\cdot))=f_D(x)+O$.
\end{proof}

\begin{theorem}[Global canonicality under saturation + GC]\label{thm:global}
Let $L=(D,R,O)$ with $D$ normalized and $R$ saturated and satisfying GC. If $L'=(D',R',O')$ induces the same $f_{L'}\equiv f_L$, then after $D$–normalization of $D'$ and saturation of $O'{+}R'$,
\[
D'=D,\qquad R'=R,\qquad O'=O.
\]
\end{theorem}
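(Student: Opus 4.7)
The plan is to decouple the three components of $L$ and $L'$ via Lemma~\ref{lem:fiber}, then invoke the per-part uniqueness theorems already proved.

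First I would fix a strictly positive linear functional $\theta:\mathbb{Z}A\to\mathbb{Z}$, e.g.\ by assigning a positive integer weight to every axis. Because both $R$ and $R'$ have already been sign-normalized by C1, every replication stride is positive, so Lemma~\ref{lem:fiber} applies and yields $\min_{\theta} f_L(x)=f_D(x)+O$ and $\min_{\theta} f_{L'}(x)=f_{D'}(x)+O'$. The hypothesis $f_L\equiv f_{L'}$ then gives $f_D(x)+O=f_{D'}(x)+O'$ for every $x$. Evaluating at $x=0$, where $\iota(0)=(0,\dots,0)$ makes every digit vanish so $f_D(0)=f_{D'}(0)=0$, produces $O=O'$ and hence $f_D\equiv f_{D'}$. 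Since both sharded lists are already $D$-normalized, Theorem~\ref{thm:D-unique} forces $D=D'$.

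With $D=D'$ and $O=O'$ in hand, cancelling the common shift from $f_L(x)=f_{L'}(x)$ yields $\{f_R(r)\}_r=\{f_{R'}(r')\}_{r'}$ as subsets of $\mathbb{Z}A$. Each iter in $R$ or $R'$ is supported on a single axis, so both sets factor as Minkowski sums indexed by axis; projecting onto any axis $a$ gives a common per-axis set $S_a$. The $a$-restrictions of $R$ and of $R'$ inherit saturation and GC from the global hypothesis and both realize $S_a$. Theorem~\ref{thm:R-unique} then recovers the same sequence of strides $\sigma_k$ and extents $E_k$ on each axis from the intrinsic data of $S_a$, so the per-axis iter multisets agree, and therefore $R=R'$ as a multiset.

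The main obstacle will be the per-axis decomposition in the third step: one must argue that the global replication set in $\mathbb{Z}A$ splits cleanly across axes so that the one-axis machinery of Theorem~\ref{thm:R-unique} applies independently. This follows from the axis-separable form $f_R(r)=\sum_t (r_t s_{J_t})@a_{J_t}$, in which distinct-axis iters contribute to orthogonal coordinates and so cannot interfere in projection; but it should be verified rather than waved through. Once this factorization is pinned down, the remaining work is a single appeal to Lemma~\ref{lem:fiber}, an evaluation at $x=0$, and the two existing uniqueness theorems.
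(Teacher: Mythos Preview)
Your proposal is correct and follows essentially the same route as the paper: use Lemma~\ref{lem:fiber} to extract $f_D+O$ as the fiber minimum, evaluate at $x=0$ to pin down $O$, invoke Theorem~\ref{thm:D-unique} for the sharded part, and then apply Theorem~\ref{thm:R-unique} per axis to the residual replication set. One minor wording slip: you say the $a$-restrictions of $R'$ ``inherit saturation and GC from the global hypothesis,'' but the hypothesis only guarantees GC for $R$; this is harmless because Theorem~\ref{thm:R-unique} needs GC for just one realization of the set $S_a$ (here $R$) and then forces any saturated representation of the same set to coincide with it.
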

\begin{proof}
By Lemma~\ref{lem:fiber}, $O=\min_\theta f_L(0)=\min_\theta f_{L'}(0)=O'$. Then $f_D(x)=\min_\theta f_L(x)-O=f_{D'}(x)$, so Theorem~\ref{thm:D-unique} gives $D'=D$. Finally,
\[
f_R(\cdot)=f_L(0)-(f_D(0)+O)=f_{L'}(0)-O'=f_{R'}(\cdot).
\]
Apply Theorem~\ref{thm:R-unique} per axis to conclude $R'=R$ (up to permutation).
\end{proof}

\section{Grouping}
\label{appendix:grouping}

This appendix gives a constructive algorithm for \emph{grouping} a layout by a target shape, together with correctness proofs and complexity bounds.

\begin{algorithm}[H]
\caption{\textsc{Group-By-Shape}: canonical gcd-driven grouping}
\label{alg:group}
\begin{algorithmic}[1]
\REQUIRE \(D=[(e_0,s_0,a_0),\ldots,(e_{n-1},s_{n-1},a_{n-1})]\), \(S=[S_0,\ldots,S_{r-1}]\) with \(\prod_k e_k=\prod_i S_i\)
\ENSURE success/failure; if success, refined \(D'\) and block boundaries \(\{B_i\}_{i=0}^{r-1}\)
\STATE \textbf{if} \(\prod_k e_k \neq \prod_i S_i\) \textbf{then return} \textsc{Failure} \COMMENT{admission check}
\STATE \(j \gets 0\); \(D' \gets [\ ]\); boundaries \(\gets [\ ]\)
\FOR{\(i=0\) \textbf{to} \(r-1\)}
  \STATE \(T \gets S_i\) \COMMENT{target product for block \(i\)}
  \STATE \(\mathrm{cur} \gets 1\) \COMMENT{product accumulated for block \(i\)}
  \WHILE{\(\mathrm{cur} < T\)}
    \IF{\(j \ge\) current length of (possibly split) source list}
       \STATE \textbf{return} \textsc{Failure}
    \ENDIF
    \STATE \((e,s,a) \gets\) current iter at position \(j\)
    \STATE \(\mathrm{rem} \gets T / \mathrm{cur}\) \COMMENT{integer by invariant}
    \STATE \(g \gets \gcd(e, \mathrm{rem})\)
    \IF{\(g = 1\)} \STATE \textbf{return} \textsc{Failure} \COMMENT{cannot advance this block}
    \ENDIF
    \STATE \(e_{\text{head}}\gets g\), \(e_{\text{tail}}\gets e/g\)
    \STATE append \((e_{\text{head}},\, e_{\text{tail}}\, s,\, a)\) to \(D'\) \COMMENT{split; Lem.~\ref{lem:split}}
    \STATE \(\mathrm{cur} \gets \mathrm{cur}\cdot e_{\text{head}}\)
    \IF{\(e_{\text{tail}} > 1\)}
      \STATE replace source iter at \(j\) by \((e_{\text{tail}}, s, a)\)
    \ELSE
      \STATE \(j \gets j+1\) \COMMENT{consumed this iter}
    \ENDIF
  \ENDWHILE
  \STATE record boundary at current end of \(D'\) as \(B_i\)
\ENDFOR
\STATE \textbf{return} \textsc{Success} with \(D'\) and \(\{B_i\}\)
\end{algorithmic}
\end{algorithm}

\subsection{Problem statement and notation}

Let \(L=(D,R,O)\) be an Axe layout with
\[
D=(I_0,\ldots,I_{n-1}),\qquad
I_k=(e_k,\ s_k,\ a_k),
\]
where each extent \(e_k\in\mathbb{Z}_{>0}\), stride \(s_k\in\mathbb{Z}\setminus\{0\}\), and axis \(a_k\) is drawn from a fixed axis set \(A\).
Write \(E_D:=\prod_{k=0}^{n-1} e_k\).

Let \(S=(S_0,\ldots,S_{r-1})\) be a target shape with \(\prod_{i=0}^{r-1} S_i = E_D\).
Recall from §\ref{sec:layout-xforms} that \(L\) \emph{groups by} \(S\) iff the ordered list of iters in \(D\) can be split and fused (preserving order) into \(r\) consecutive blocks whose extent products equal \(S_i\).
When the grouping exists we write \(L_{||S}\) for the grouped view; it induces the same map \(f_L\) but with domain \(\prod_i [0,S_i)\).
Replication \(R\) and offset \(O\) are unaffected by grouping.

Our objectives are:
(i) decide if \(L\) groups by \(S\); and
(ii) if yes, construct a refined iter list \(D'\) and block boundaries that realize the grouping without changing \(f_L\).

\subsection{Semantics-preserving split/fuse}

\begin{lemma}[Split rule]\label{lem:split}
Let \(I=(e,s,a)\) with \(e=e_1 e_2\) and \(e_1,e_2\in\mathbb{Z}_{>0}\).
Replacing \(I\) by two consecutive iters
\[
I^{\uparrow}=(e_1,\ e_2 s,\ a),\qquad
I^{\downarrow}=(e_2,\ s,\ a)
\]
does not change the induced map \(f_D\).
\end{lemma}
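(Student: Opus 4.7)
The plan is a direct verification that the lexicographic unflattening gives the same axis-$a$ coordinate before and after the split, with every other part of the layout unchanged. Since the split only modifies one iter inside $D$ (leaving $R$ and $O$ alone), I would isolate the contribution of position $k$ (where $I$ sits) and show that the two new iters together reproduce it, while all other positions contribute identically.

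First I would track how suffix products change. Let $P:=\prod_{t>k} e_{I_t}$. For iters at positions $j<k$, the suffix product in $D'$ is $e_1 e_2 \cdot P = e \cdot P$, which equals the suffix product in $D$; so their digits under $\iota$ and hence their contributions are unchanged. For positions $t>k$ in $D$ (equivalently $t+1>k+1$ in $D'$), the suffix products are again equal. Thus the only thing to check is the axis-$a$ contribution at the split location.

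Next I would compute that contribution explicitly. Define $y := \lfloor x/P\rfloor \bmod e$, the digit of $x$ associated with $I$ in $D$; the contribution in $D$ is $(y\,s)@a$. In $D'$, the two inserted iters $I^{\uparrow}=(e_1, e_2 s, a)$ and $I^{\downarrow}=(e_2, s, a)$ have suffix products $e_2 P$ and $P$ respectively, so their digits are $y_1 := \lfloor x/(e_2 P)\rfloor \bmod e_1$ and $y_2 := \lfloor x/P\rfloor \bmod e_2$. Using $e=e_1 e_2$ and elementary divisibility, $y_1 = \lfloor y/e_2 \rfloor$ and $y_2 = y \bmod e_2$. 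Their combined contribution is
\[
\bigl(y_1 \cdot e_2 s + y_2 \cdot s\bigr)@a = \bigl((y_1 e_2 + y_2)\,s\bigr)@a,
\]
and the division algorithm gives $y_1 e_2 + y_2 = y$, recovering the contribution in $D$.

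There is no real obstacle here; it is a routine base-$e_2$ expansion of the single digit $y$. The only care needed is bookkeeping: the list grows by one, so suffix-product indices shift for positions after the split, and one must verify (as above) that the shift is exactly absorbed by the identity $e_1 e_2 = e$. Combining the per-position equalities across all indices in $D'$ yields $f_{D'}(x) = f_D(x)$ for every $x\in[0,E_D)$, which is the claim.
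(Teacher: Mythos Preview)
Your proposal is correct and follows essentially the same approach as the paper: both reduce to the base-$e_2$ decomposition $y=y_1 e_2+y_2$ of the digit at position $k$ and observe that $(y_1\cdot e_2 s+y_2\cdot s)=y\,s$. Your version is more explicit about the suffix-product bookkeeping (showing that positions $j<k$ and $t>k$ see unchanged suffix products), which the paper compresses into the single clause ``unflattening respects this lexicographic refinement.''
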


\begin{proof}
A digit \(d\in[0,e)\) contributes \(d\,s@a\).
Writing \(d=d_1 e_2+d_2\) with \(d_1\in[0,e_1)\), \(d_2\in[0,e_2)\), the contribution equals
\((d_1 e_2+d_2)s=d_1(e_2 s)+d_2 s\), which matches the sum of contributions from \(I^{\uparrow},I^{\downarrow}\) with digits \((d_1,d_2)\). Unflattening respects this lexicographic refinement, hence \(f_D\) is unchanged.
\end{proof}

\begin{corollary}[Fuse rule]\label{cor:fuse}
Conversely, any consecutive pair \((e_1,e_2 s,a),(e_2,s,a)\) may be fused into \((e_1 e_2,s,a)\) without changing \(f_D\).
\end{corollary}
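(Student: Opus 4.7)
The plan is to observe that Corollary \ref{cor:fuse} is the direct converse of Lemma \ref{lem:split} and to discharge it by invoking that lemma in reverse. Given any consecutive pair of the stated form $(e_1,\, e_2 s,\, a),\,(e_2,\, s,\, a)$ appearing in $D$, I would set the candidate fused iter $I := (e_1 e_2,\, s,\, a)$ and verify that applying the split rule of Lemma \ref{lem:split} to $I$ with the factorization $e = e_1\cdot e_2$ yields precisely $I^{\uparrow} = (e_1,\, e_2 s,\, a)$ and $I^{\downarrow} = (e_2,\, s,\, a)$—the very pair we started with.

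Because Lemma \ref{lem:split} establishes an \emph{equality} of induced maps between the version of $D$ containing $I$ and the version containing the split pair $(I^{\uparrow},I^{\downarrow})$, the equivalence is inherently symmetric. Replacing the pair by $I$ (fusing) therefore preserves $f_D$ in exactly the same way that replacing $I$ by the pair (splitting) does. No additional digit arithmetic is needed beyond what the lemma's proof already carries out, and the operation is local: it leaves the order of the other iters, their axes, and the total extent product $E_D$ unchanged, since $e_1\cdot e_2$ matches the combined extent of the two iters being replaced and no axis other than $a$ is touched.

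The main obstacle, if it can be called one, is purely bookkeeping: confirming that the roles of $e_1$ and $e_2$ in the corollary's stated pair align with the split rule's $I^{\uparrow}/I^{\downarrow}$ naming under the lexicographic unflattening convention used in Lemma \ref{lem:split}—in particular, that the leading iter has stride $e_2 s$ (the coarser mixed-radix digit) while the trailing iter has stride $s$ (the finer digit). Once this alignment is verified, the corollary follows as an immediate one-line consequence of Lemma \ref{lem:split}. If an entirely self-contained argument were preferred over a citation, one could instead repeat the mixed-radix identity $(d_1 e_2 + d_2)s = d_1(e_2 s) + d_2 s$ for digits $d_1\in[0,e_1)$ and $d_2\in[0,e_2)$, but this would duplicate the core computation already done in the proof of Lemma \ref{lem:split}.
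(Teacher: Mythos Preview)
Your proposal is correct and matches the paper's intent exactly: the paper states the fuse rule as a corollary with no separate proof, treating it as the immediate converse of Lemma~\ref{lem:split}, which is precisely what you do. Your bookkeeping check that the pair $(e_1,e_2 s,a),(e_2,s,a)$ coincides with $I^{\uparrow},I^{\downarrow}$ from the split of $I=(e_1 e_2,s,a)$ is the only thing needed, and the symmetry of the equality established in the lemma finishes it.
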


\subsection{A gcd–driven canonical grouping algorithm}

The algorithm~\ref{alg:group} refines \(D\) by peeling off, left-to-right, the largest factor needed to complete the current shape block; it never reorders iters.

\section{Tiling}
\label{appendix:tiling}

This appendix gives a constructive algorithm for forming the \emph{tiled} layout
\[
T \ :=\ A_{||S_A}\ \otimes\ B_{||S_B},
\]
together with correctness proofs. We follow the definition in §\ref{eq:tile}:
for layouts \(A=(D^A,R^A,O^A)\) and \(B=(D^B,R^B,O^B)\), and shapes \(S_A,S_B\) of equal rank \(r\), the tiled map is
\[
f_T(x\ \|\ y)\;=\; f_{A_{||S_A}}(x)\ \odot\ \mathrm{span}\!\bigl(f_{B_{||S_B}}\bigr)\ +\ f_{B_{||S_B}}(y),
\]
with domain \(\prod_{j=0}^{r-1}\big([0,S_A[j))\times[0,S_B[j))\big)\).
Here \(\odot\) is the axis-wise (Hadamard) product and \(\mathrm{span}\) is taken axis-wise as in §\ref{sec:axe-formal}.

\subsection{Problem statement and notation}

Write
\begin{align*}
D^A=(I^A_0,\ldots,I^A_{n_A-1}),\quad I^A_k=(e^A_k,\ s^A_k,\ a^A_k),\\
D^B=(I^B_0,\ldots,I^B_{n_B-1}),\quad I^B_j=(e^B_j,\ s^B_j,\ a^B_j).
\end{align*}
Let \(R^A=(\widetilde I^A_t)_{t=0}^{m_A-1}\) and \(R^B=(\widetilde I^B_u)_{u=0}^{m_B-1}\) be the replicated iters, with the same \((e,s,a)\)-structure.
Assume \(A\) admits \(S_A\) and \(B\) admits \(S_B\) (i.e. \(\prod e^A_k=\prod S_A[i]\) and \(\prod e^B_j=\prod S_B[i]\)), and that \(\operatorname{rank}(S_A)=\operatorname{rank}(S_B)=r\).

\subsection{Axis-wise span in closed form}

We use the following closed-form for the axis-wise span; it follows immediately from independence of iter digits.

\begin{lemma}[Axis-wise span]\label{lem:span-closed}
For any layout \(L=(D,R,O)\), the span length on axis \(a\) is
\[
\mathrm{span}_a\!\bigl(f_L\bigr)\;=\;
1\ +\sum_{\substack{I\in D\\ a_I=a}} |s_I|\,(e_I-1)
\ +\sum_{\substack{\widetilde I\in R\\ a_{\widetilde I}=a}} |s_{\widetilde I}|\,(e_{\widetilde I}-1).
\]
Hence \(\mathrm{span}(f_L)=\sum_{a\in A}\mathrm{span}_a(f_L)\, @a\).
\end{lemma}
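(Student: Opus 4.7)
The plan is to unpack the axis $a$–component of $f_L$, exploit the independence of the sharded and replicated digits, and compute the resulting interval of integer values on that axis.

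First, I would write out the $a$–component explicitly. From the definition of $f_L$ in Definition~\ref{eq:fL-def}, any value $y\in f_L(x)$ has
\[
y[a]\;=\;\sum_{k:\,a_{I_k}=a}\iota(x)_k\,s_{I_k}\;+\;\sum_{t:\,a_{J_t}=a}r_t\,s_{J_t}\;+\;O[a],
\]
for some $r\in\prod_t[0,e_{J_t})$. As $x$ ranges over $[0,E_D)$, the map $\iota$ is a bijection onto $\prod_k[0,e_{I_k})$, so the digits $\iota(x)_k$ can be chosen independently in $[0,e_{I_k})$; similarly, the replication coordinates $r_t$ vary independently in $[0,e_{J_t})$. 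Consequently $\mathrm{Vals}_{L,a}$ is exactly the set of integers of the above form with digits ranging independently.

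Next I would compute the extremes. Each individual summand $d\cdot s$ with $d\in[0,e)$ takes integer values from $0$ to $(e-1)s$ (for $s>0$) or from $(e-1)s$ to $0$ (for $s<0$); in either case its maximum minus its minimum is $|s|(e-1)$, and both extremes are attained. Since the digits are independent, $\max \mathrm{Vals}_{L,a}-\min \mathrm{Vals}_{L,a}$ equals the sum of these per–summand widths, and the constant $O[a]$ cancels in the difference. Adding $1$ to convert the range into an inclusive count yields
\[
\mathrm{span}_a(f_L)\;=\;1+\sum_{I\in D,\,a_I=a}|s_I|(e_I-1)+\sum_{\widetilde I\in R,\,a_{\widetilde I}=a}|s_{\widetilde I}|(e_{\widetilde I}-1),
\]
as claimed. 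The second statement then follows by assembling the per–axis spans into the axis space $\mathbb{Z}A$.

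No step is technically deep; the only subtlety to watch is justifying independence, namely that the joint distribution of sharded digits is the full Cartesian product $\prod_k[0,e_{I_k})$ (this is immediate from $\iota$ being a bijection onto that product) and that replication digits are free by definition. A minor corner case worth mentioning is the edge convention $\mathrm{span}_a=1$ when no iter touches axis $a$ and $O[a]=0$: both summations in the formula are empty, giving $1$, and $\mathrm{Vals}_{L,a}=\{O[a]\}$ when some summand mentions the axis only through $O$, which still gives span $1$, consistent with the convention stated in §\ref{sec:axe-formal}.
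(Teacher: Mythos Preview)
Your proof is correct and matches the paper's approach exactly: the paper gives no detailed argument beyond ``it follows immediately from independence of iter digits,'' and you have simply written out that immediate argument---decompose the $a$-component, use that $\iota$ is a bijection so sharded digits range over the full product, compute $\max-\min$ as the sum of per-term widths $|s|(e-1)$, and add $1$.
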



\subsection{Construction recipe}

Intuitively, tiling multiplies all coordinates produced by \(A\) by the per-axis span of \(B\) (to avoid overlap) and then adds the coordinates produced by \(B\). This yields a simple \((D,R,O)\) construction.

\textbf{Preparation: group both inputs.}
Use the grouping algorithm from Appendix~\ref{appendix:grouping} to obtain block decompositions
\begin{align*}
A_{||S_A}: \ \ D^{A,\mathrm{grp}}=\bigl(\mathcal{B}^A_0\,|\,\cdots\,|\,\mathcal{B}^A_{r-1}\bigr),
\\
B_{||S_B}: \ \ D^{B,\mathrm{grp}}=\bigl(\mathcal{B}^B_0\,|\,\cdots\,|\,\mathcal{B}^B_{r-1}\bigr),
\end{align*}
where each block \(\mathcal{B}^A_i\) (resp. \(\mathcal{B}^B_i\)) is a consecutive list of iters whose extent product equals \(S_A[i]\) (resp. \(S_B[i]\)).

\textbf{Compute the scaling vector.}
Let \(\Sigma:=\mathrm{span}\!\bigl(f_{B_{||S_B}}\bigr)\in \mathbb{Z}A\).
By Lemma~\ref{lem:span-closed},
\begin{align*}
\Sigma[a]\;&=\;\mathrm{span}_a\!\bigl(f_{B_{||S_B}}\bigr)\;=\; \\
&1+\sum_{I\in \mathcal{B}^B_{0:r-1},\,a_I=a}|s_I|(e_I-1)
  +\sum_{\widetilde I\in R^B,\,a_{\widetilde I}=a}|s_{\widetilde I}|(e_{\widetilde I}-1).
\end{align*}

\paragraph{Emit the tiled layout \(T=(D^T,R^T,O^T)\).}
For \(i=0,\ldots,r-1\) in order, append to \(D^T\):
\begin{enumerate}
\item All iters of \(\mathcal{B}^A_i\), \emph{scaled} by \(\Sigma\): replace each \((e,s,a)\) by \((e,\ \Sigma[a]\cdot s,\ a)\).
\item All iters of \(\mathcal{B}^B_i\) \emph{as-is}.
\end{enumerate}
Set the replication multiset to the Cartesian product of (scaled) \(R^A\) and \(R^B\):
\[
R^T\ :=\ \bigl\{(e,\ \Sigma[a]\cdot s,\ a)\ :\ (e,s,a)\in R^A\bigr\}\ \cup\ R^B.
\]
Set the offset to
\[
O^T\ :=\ O^A\ \odot\ \Sigma \ +\ O^B.
\]
The resulting \(D^T\) is naturally grouped by the interleaved shape
\[
S_T\ :=\ (\,S_A[0],\ S_B[0],\ \ldots,\ S_A[r-1],\ S_B[r-1]\,).
\]

\subsection{Correctness}

\begin{theorem}[Soundness]\label{thm:tiling-sound}
Let \(T\) be produced by Algorithm~\ref{alg:tile}. Then for all
\(
(x,y)\in \prod_{j=0}^{r-1}\big([0,S_A[j))\times[0,S_B[j))\big)
\)
we have
\begin{align*}
f_{T\langle S_T\rangle}(x\ \|\ y)
\;&=\;
f_{A_{||S_A}}(x)\ \odot\ \Sigma\ +\ f_{B_{||S_B}}(y) \\
\;&=\;
f_{A_{||S_A}}(x)\ \odot\ \mathrm{span}\!\bigl(f_{B_{||S_B}}\bigr)\ +\ f_{B_{||S_B}}(y).
\end{align*}
\end{theorem}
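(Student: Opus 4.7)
The plan is to reduce the claim to a block-wise digit-decomposition identity and then to linearity of iter evaluation together with distributivity of $\odot$ over axis-wise addition. I would first record that $S_T$ is admitted by $T$ since $\prod_i S_T[i] = \prod_i S_A[i]\cdot S_B[i] = E_{D^A}\cdot E_{D^B} = E_{D^T}$, so $\mathrm{flat}_{S_T}$ sends the interleaved tuple $(x_0,y_0,\ldots,x_{r-1},y_{r-1})$ into $[0,E_{D^T})$ and the lex unflatten over $D^T$ produces one digit per iter of $D^T$.

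The core step is a short digit-decomposition lemma: because row-major flattening is associative over consecutive coordinate groups, and because $D^T$ was built by concatenating, in order, the scaled block $\mathcal{B}^A_i$ and then $\mathcal{B}^B_i$ for each $i$, the digits produced inside block $i$ of $D^T$ coincide with (a) the digits of the lex unflatten of $x_i$ over the iter extents of $\mathcal{B}^A_i$, followed by (b) the digits of the lex unflatten of $y_i$ over the extents of $\mathcal{B}^B_i$. I would prove this by induction on $r$ using only the mixed-radix identity $\mathrm{flat}_{(N_1,N_2)}(n_1,n_2) = n_1 N_2 + n_2$ applied repeatedly, together with the block-boundary guarantee supplied by the grouping algorithm (Appendix~\ref{appendix:grouping}).

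With this identity in hand, I would evaluate $f_{D^T}$ on the resulting digits. Scaling every A-iter from $(e,s,a)$ to $(e,\Sigma[a]\cdot s,a)$ multiplies its axis contribution by $\Sigma[a]$, so summing A-iter contributions across all blocks yields $f_{D^A}(\mathrm{flat}_{S_A}(x))\odot\Sigma$, while the B-iters contribute $f_{D^B}(\mathrm{flat}_{S_B}(y))$. For the rest, $R^T$ is the disjoint union (as multisets) of a scaled copy of $R^A$ and an unchanged $R^B$, whose iter indices range independently, so the set $f_{R^T}$ equals $\{u\odot\Sigma + v : u\in f_{R^A},\, v\in f_{R^B}\}$ by the same scaling argument. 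Combined with $O^T = O^A\odot\Sigma + O^B$ and distributivity of $\odot$ over axiswise addition, the set $f_T(\mathrm{flat}_{S_T}(x\|y))$ becomes
\[
\bigl\{(f_{D^A}(\mathrm{flat}_{S_A}(x))+r_A+O^A)\odot\Sigma + (f_{D^B}(\mathrm{flat}_{S_B}(y))+r_B+O^B)\bigr\},
\]
ranged over all $r_A\in\prod_t[0,e_{\widetilde I^A_t})$ and $r_B\in\prod_u[0,e_{\widetilde I^B_u})$, which is exactly $\{u\odot\Sigma + v : u\in f_{A_{||S_A}}(x),\, v\in f_{B_{||S_B}}(y)\}$, matching the first equality in the statement. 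The second equality follows immediately because $\Sigma = \mathrm{span}(f_{B_{||S_B}})$ by construction (Lemma~\ref{lem:span-closed}).

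The main obstacle I anticipate is not the algebra but the bookkeeping for the digit-decomposition lemma: one has to verify carefully that the interleaved row-major flatten of $S_T$ composes with the lex unflatten over the concatenated iter list of $D^T$ to yield precisely the same digits that one would obtain by processing $A_{||S_A}$ and $B_{||S_B}$ separately, so that the A-iter and B-iter contributions truly decouple block by block. Once that identification is pinned down, everything else is routine linearity in $\mathbb{Z}A$ and $\odot$-distributivity over Minkowski sums of finite sets.
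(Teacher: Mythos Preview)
Your proposal is correct and follows essentially the same route as the paper's proof: both separate the contribution of $T$ into the scaled $A$-part and the unscaled $B$-part (sharded, replicated, offset) and then reassemble. The paper does this axis-by-axis and simply names the per-iter digits $\delta_A(x),\delta_B(y)$ without further justification, whereas you make that identification explicit via your digit-decomposition lemma; this is the only substantive difference, and it is an elaboration of a step the paper takes for granted rather than a different argument.
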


\begin{proof}
Fix an axis \(a\). In \(T\), the contribution on axis \(a\) decomposes as
\begin{align*}
&\underbrace{\sum_{I\in \mathcal{B}^A_{0:r-1},\,a_I=a} \delta_A(x)_I\bigl(\Sigma[a]\cdot s_I\bigr)}_{\text{scaled $A$-sharded}}
\;+\;
\underbrace{\sum_{J\in \mathcal{B}^B_{0:r-1},\,a_J=a} \delta_B(y)_J s_J}_{\text{$B$-sharded}}
\;\\&+\;
\underbrace{\sum_{\tilde I\in R^A,\,a_{\tilde I}=a} \rho_A(\tilde I)\,\Sigma[a]\, s_{\tilde I}}_{\text{scaled $A$-replicated}}
\;+\;
\underbrace{\sum_{\tilde J\in R^B,\,a_{\tilde J}=a} \rho_B(\tilde J)\, s_{\tilde J}}_{\text{$B$-replicated}}
\;\\&+\;
\underbrace{O^A[a]\Sigma[a]+O^B[a]}_{\text{offset}},
\end{align*}
where \(\delta_A,\delta_B\) are the per-iter digits and \(\rho_A,\rho_B\) the replication digits. Rearranging gives
\begin{align*}
\Sigma[a]\Bigl(\sum_{I:a_I=a}\delta_A(x)_I s_I + \sum_{\tilde I:a_{\tilde I}=a}\rho_A(\tilde I)s_{\tilde I} + O^A[a]\Bigr)
\;\\+\;
\Bigl(\sum_{J:a_J=a}\delta_B(y)_J s_J + \sum_{\tilde J:a_{\tilde J}=a}\rho_B(\tilde J)s_{\tilde J} + O^B[a]\Bigr),
\end{align*}
which equals \(\bigl(f_{A_{||S_A}}(x)[a]\bigr)\Sigma[a] + f_{B_{||S_B}}(y)[a]\).
Collecting over all axes yields the vector identity in the theorem. Finally \(\Sigma=\mathrm{span}(f_{B_{||S_B}})\) by definition and Lemma~\ref{lem:span-closed}.
\end{proof}

\begin{proposition}[Grouping of \(T\)]\label{prop:tiling-grouping}
The iter order emitted by Algorithm~\ref{alg:tile} is grouped by the interleaved shape
\(S_T=(S_A[0],S_B[0],\ldots,S_A[r-1],S_B[r-1])\).
\end{proposition}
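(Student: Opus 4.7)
My plan is to directly unwind the emission order of Algorithm~\ref{alg:tile} and verify the grouping definition of §\ref{sec:layout-xforms} block by block, using the fact that scaling strides does not alter extents. Specifically, recall that a shape groups a layout iff the ordered iter list of its $D$-part can be split/fused consecutively so that the extent products of the resulting $r$ blocks match the shape entries. Here $S_T$ has $2r$ entries, so I need to exhibit a consecutive partition of $D^T$ into $2r$ blocks with matching extent products.

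First, I will read off the emission recipe: for each $i=0,\ldots,r-1$, the algorithm appends the iters of $\mathcal{B}^A_i$ (each $(e,s,a)$ rewritten as $(e,\Sigma[a]\cdot s,a)$) and then the iters of $\mathcal{B}^B_i$ unchanged. Therefore the emitted $D^T$ is literally the concatenation
\[
\widehat{\mathcal{B}}^A_0\ \|\ \mathcal{B}^B_0\ \|\ \widehat{\mathcal{B}}^A_1\ \|\ \mathcal{B}^B_1\ \|\ \cdots\ \|\ \widehat{\mathcal{B}}^A_{r-1}\ \|\ \mathcal{B}^B_{r-1},
\]
where $\widehat{\mathcal{B}}^A_i$ is the rescaling of $\mathcal{B}^A_i$. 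Declaring the block boundaries at the end of each $\widehat{\mathcal{B}}^A_i$ and each $\mathcal{B}^B_i$ gives a candidate consecutive partition into $2r$ blocks aligned with the $2r$ entries of $S_T$.

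Second, I will verify that each candidate block has the correct extent product. The key observation is that stride scaling $(e,s,a)\mapsto(e,\Sigma[a]\cdot s,a)$ preserves the extent $e$, so the extent multiset of $\widehat{\mathcal{B}}^A_i$ equals that of $\mathcal{B}^A_i$. By the grouping hypotheses on the inputs, $\mathcal{B}^A_i$ is precisely the $i$th block of the grouping of $A$ by $S_A$, so $\prod_{(e,s,a)\in\mathcal{B}^A_i}e=S_A[i]$; analogously $\prod_{(e,s,a)\in\mathcal{B}^B_i}e=S_B[i]$. Thus the $(2i)$th and $(2i{+}1)$th candidate blocks have extent products $S_A[i]$ and $S_B[i]$, matching $S_T[2i]$ and $S_T[2i{+}1]$.

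The only potential subtlety is that the grouping definition permits split/fuse operations, but since the blocks $\mathcal{B}^A_i$ and $\mathcal{B}^B_i$ themselves were already produced by the grouping algorithm from Appendix~\ref{appendix:grouping} and have the correct per-block extent products, no additional splitting or fusing is needed; consecutive concatenation is already a valid grouping witness. Hence $S_T$ groups $T$, establishing the proposition. I do not anticipate any real obstacle here beyond bookkeeping; the main care point is simply noting that stride rescaling is extent-preserving, so the grouping structure of the inputs carries over verbatim to the emitted $D^T$.
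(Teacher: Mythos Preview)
Your proposal is correct and follows essentially the same approach as the paper's own proof: observe that stride scaling preserves extents so each scaled block $\widehat{\mathcal{B}}^A_i$ has extent product $S_A[i]$, each $\mathcal{B}^B_i$ has extent product $S_B[i]$, and concatenating these $2r$ blocks in emission order witnesses the grouping by $S_T$. The paper's argument is simply a terser version of what you wrote.
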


\begin{proof}
Within each \(i\)-th pair of blocks, the product of extents of the scaled \(\mathcal{B}^A_i\) equals \(S_A[i]\) (scaling does not change extents), and the product for \(\mathcal{B}^B_i\) equals \(S_B[i]\). Concatenating pairs over \(i\) gives the stated grouping.
\end{proof}

\begin{algorithm}[H]
\caption{\textsc{Tile-Layouts} \(\bigl(A,S_A;\ B,S_B\bigr)\)}
\label{alg:tile}
\begin{algorithmic}[1]
\REQUIRE layouts \(A=(D^A,R^A,O^A)\), \(B=(D^B,R^B,O^B)\); shapes \(S_A,S_B\) with \(\operatorname{rank}(S_A)=\operatorname{rank}(S_B)=r\), and \(\prod e^A = \prod S_A\), \(\prod e^B = \prod S_B\)
\ENSURE tiled layout \(T=(D^T,R^T,O^T)\), grouped by \(S_T=(S_A[0],S_B[0],\ldots,S_A[r-1],S_B[r-1])\)
\STATE \((D^{A,\mathrm{grp}},\,\{\mathcal{B}^A_i\}_{i=0}^{r-1}) \gets \textsc{Group-By-Shape}(D^A,S_A)\)
\STATE \((D^{B,\mathrm{grp}},\,\{\mathcal{B}^B_i\}_{i=0}^{r-1}) \gets \textsc{Group-By-Shape}(D^B,S_B)\)
\IF{either grouping failed} \STATE \textbf{return} \textsc{Failure}
\ENDIF
\STATE Compute \(\Sigma[a] \gets 1 + \sum_{I\in D^{B,\mathrm{grp}},\,a_I=a} |s_I|(e_I-1) + \sum_{\widetilde I\in R^B,\,a_{\widetilde I}=a} |s_{\widetilde I}|(e_{\widetilde I}-1)\) \COMMENT{Lemma~\ref{lem:span-closed}}
\STATE \(D^T \gets [\ ]\)
\FOR{$i=0$ \textbf{to} $r-1$}
  \STATE \textbf{for each} $(e,s,a)\in \mathcal{B}^A_i$ \textbf{do}
    \STATE \quad append $(e,\ \Sigma[a]\cdot s,\ a)$ to $D^T$
  \STATE \textbf{end for}
  \STATE \textbf{for each} $(e,s,a)\in \mathcal{B}^B_i$ in order \textbf{do}
    \STATE \quad append $(e,\ s,\ a)$ to $D^T$
  \STATE \textbf{end for}
\ENDFOR
\STATE \(R^T \gets \{(e,\ \Sigma[a]\cdot s,\ a):(e,s,a)\in R^A\}\ \cup\ R^B\)
\STATE \(O^T \gets O^A \odot \Sigma + O^B\)
\STATE \textbf{return} \(T=(D^T,R^T,O^T)\)
\end{algorithmic}
\end{algorithm}

\section{Deciding $A$ is a tile of $B$ and recovering $C$ in $A = C \otimes B$}
\label{appendix:tile-of-check}

We give a constructive procedure to decide whether a layout $A$ (with admitted
shape $S_A$) is a \emph{tile} of a layout $B$ (with admitted shape $S_B$), and, if so, to derive the outer layout $C$ such that
\[
A \;=\; C \otimes B,
\qquad\text{i.e.}\qquad
f_{A_{\|S_A}}(\,\cdot\,)\;=\; f_{(C_{\|S_C})\otimes (B_{\|S_B})}(\,\cdot\,)
\]
with $S_C[j]=S_A[j]/S_B[j]$ coordinatewise.
We assume the D–part of all layouts
has been canonicalized (D0/D1), as in Appendix~\ref{appendix:canonicalization}.
Unless noted otherwise, replication $(R)$ is empty; the extension to nonempty $R$ is
covered at the end of this section.

\subsection{Preliminaries and necessary shape conditions}

Let $r:=\mathrm{rank}(S_A)=\mathrm{rank}(S_B)$. A \emph{necessary} shape condition for
$A=C\otimes B$ to exist is that $S_B$ divides $S_A$ coordinatewise:
\[
\forall j\in[0,r):\qquad S_B[j]\ \mid\ S_A[j],
\]
in which case we define
\(
S_C[j]:=S_A[j]/S_B[j].
\)
In addition, we require that the \emph{groupings} $A_{\|S_A}$ and $B_{\|S_B}$ exist
(Def.~\ref{sec:layout-xforms}).

Write the grouped, canonical D–lists as
\begin{align*}
D_{A\|S_A}=\bigl[\ \mathcal{A}^{(0)}\ \Vert\ \mathcal{A}^{(1)}\ \Vert\ \cdots\ \Vert\ \mathcal{A}^{(r-1)}\ \bigr],
\\
D_{B\|S_B}=\bigl[\ \mathcal{B}^{(0)}\ \Vert\ \mathcal{B}^{(1)}\ \Vert\ \cdots\ \Vert\ \mathcal{B}^{(r-1)}\ \bigr],
\end{align*}
where each block $\mathcal{A}^{(j)}$ (resp. $\mathcal{B}^{(j)}$) is a \emph{consecutive} subsequence
of iters whose extent product equals $S_A[j]$ (resp. $S_B[j]$).
Let
\[
W\ :=\ \mathrm{span}\!\bigl(f_{\,B_{\|S_B}}\bigr)\ \in \mathbb{Z}_{>0}A
\]
be the axis–wise span vector of $B_{\|S_B}$ (Def.~\ref{sec:axe-formal}, “Axis-wise span”);
write $W[a]\in\mathbb{Z}_{>0}$ for the span along axis $a$.

Intuitively, if $A=C\otimes B$ then, at each rank position $j$,
$\mathcal{A}^{(j)}$ must be an interleaving of (i) the block $\mathcal{B}^{(j)}$
(\emph{inner} part) and (ii) a block $\mathcal{C}^{(j)}$ obtained by taking the
$C$–block and \emph{multiplying each stride by} the appropriate axis–wise span $W$
(\emph{outer} part). Our checker formalizes this by scanning $\mathcal{A}^{(j)}$
left$\to$right, greedily matching a copy of $\mathcal{B}^{(j)}$ as a subsequence and
requiring the remaining iters to be $W$–scaled.

\subsection{Algorithm (tile-of check \& $C$ recovery)}

\medskip
\noindent
\textbf{Helpers.} We assume: (i) \textsc{GroupOrFail}$(L,S)$ returns the grouped,
canonical D–list $D_{L\|S}$ partitioned into blocks $\mathcal{L}^{(j)}$,
or \textsc{Fail} if grouping does not exist; (ii) \textsc{AxisSpan}$(D)$ returns
$W=\mathrm{span}(f)$ for the grouped layout; (iii) \textsc{EqualIter} compares iters
for exact axis/stride/extent equality; (iv) \textsc{DivSpanScale} checks that an iter
$(e,s@a)$ is $W$–scaled, i.e.\ that $W[a]\mid s$, and returns $(e,(s/W[a])@a)$.

We write \(\mathrm{append}\) to postpend to a list (left$\to$right order) and
\(\mathrm{extend}\) to concatenate lists.

\begin{algorithm}[]
\caption{\textsc{TileOf\_AndRecoverC} (decide $A=C\otimes B$ and return $C$)}
\label{alg:tile-of}
\begin{algorithmic}[1]
\REQUIRE Layouts $A,B$; shapes $S_A,S_B$ with $\mathrm{rank}(S_A)=\mathrm{rank}(S_B)=r$
\ENSURE Success: grouped $D_{C\|S_C}$ and $S_C$ such that $A=C\otimes B$; or \textsc{Fail}
\STATE \textit{\# 0) necessary shape checks}
\IF{$\exists j: S_B[j]\nmid S_A[j]$} \STATE \RETURN \textsc{Fail} \ENDIF
\STATE $S_C[j]\gets S_A[j]/S_B[j]$ for all $j$
\STATE \textit{\# 1) grouping (must exist)}
\STATE $D_{A\|S_A}\gets$\textsc{GroupOrFail}$(A,S_A)$;\quad
       $D_{B\|S_B}\gets$\textsc{GroupOrFail}$(B,S_B)$
\IF{$D_{A\|S_A}$ or $D_{B\|S_B}$ is \textsc{Fail}} \STATE \RETURN \textsc{Fail} \ENDIF
\STATE \textit{\# 2) per-axis span of $B$}
\STATE $W\gets$\textsc{AxisSpan}$(D_{B\|S_B})$ \COMMENT{$W[a]\in\mathbb{Z}_{>0}$ for each axis $a$}
\STATE \textit{\# 3) for each rank position $j$, split $A$'s block into inner($B$) and outer($C$) parts}
\STATE $D_{C\|S_C}\gets[\,]$  \COMMENT{to collect blocks $\mathcal{C}^{(j)}$ in rank order}
\FOR{$j=0$ \textbf{to} $r-1$}
  \STATE $\mathcal{A}\gets$ block $j$ of $D_{A\|S_A}$;\quad $\mathcal{B}\gets$ block $j$ of $D_{B\|S_B}$
  \STATE $p\gets 1$;\; $q\gets 1$;\; $\mathcal{C}\gets[\,]$ \COMMENT{$p$ scans $\mathcal{A}$, $q$ scans $\mathcal{B}$}
  \WHILE{$p\le|\mathcal{A}|$}
     \IF{$q\le|\mathcal{B}|$ \textbf{and} \textsc{EqualIter}$(\mathcal{A}[p],\mathcal{B}[q])$}
       \STATE $p\gets p+1$;\; $q\gets q+1$ \COMMENT{consume next $B$-iter in order}
     \ELSE
       \STATE $(e,s@a)\gets\mathcal{A}[p]$
       \STATE $(\mathrm{ok},\,\tilde{\imath})\gets$\textsc{DivSpanScale}$\bigl((e,s@a),W\bigr)$
       \IF{\textbf{not} $\mathrm{ok}$} \STATE \RETURN \textsc{Fail} \ENDIF
       \STATE $\mathrm{append}(\mathcal{C},\,\tilde{\imath})$;\quad $p\gets p+1$
     \ENDIF
  \ENDWHILE
  \IF{$q\neq|\mathcal{B}|+1$} \STATE \RETURN \textsc{Fail} \COMMENT{$\mathcal{B}$ was not fully matched as a subsequence} \ENDIF
  \STATE \COMMENT{extent product sanity for block $j$}
  \IF{$\prod_{(e,\cdot)\in \mathcal{C}} e\ \neq\ S_C[j]$} \STATE \RETURN \textsc{Fail} \ENDIF
  \STATE $\mathrm{extend}(D_{C\|S_C},\,\mathcal{C})$
\ENDFOR
\STATE \RETURN $(D_{C\|S_C},\,S_C)$
\end{algorithmic}
\end{algorithm}

\paragraph{Offsets and replication (optional checks).}
If offsets are present, a necessary consistency at the block origin is
\[
\mbox{$O_A \stackrel{?}{=} O_C \odot W + O_B$ \quad (axiswise)},
\]
i.e.\ for each axis $a$, $(O_A[a]-O_B[a])$ must be divisible by $W[a]$, and we then set
$O_C[a]=(O_A[a]-O_B[a])/W[a]$.
If replication is present in $B$, its span is already accounted for in $W$.
If replication is present in $A$, then, for $A=C\otimes B$ to hold, the replication
part of $A$ must decompose as the \emph{Minkowski sum} of the replication of $B$ and a
$W$–scaled replication of $C$.

\subsection{Correctness (sufficiency)}

\begin{theorem}[If the algorithm succeeds, $A = C \otimes B$]
\label{thm:tileof-suff}
Assume the necessary shape divisibility and that \textsc{TileOf\_AndRecoverC} returns
$(D_{C\|S_C},S_C)$. Then
\[
f_{A_{\|S_A}}(\,\cdot\,)\;=\; f_{(C_{\|S_C})\otimes(B_{\|S_B})}(\,\cdot\,).
\]
\end{theorem}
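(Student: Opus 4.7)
The plan is to prove $f_{A_{\|S_A}} \equiv f_{(C_{\|S_C})\otimes(B_{\|S_B})}$ pointwise, under the natural bijection $\prod_j [0,S_A[j)) \cong \prod_j [0,S_C[j))\times[0,S_B[j))$ that is induced by the per-block partition the algorithm produces. I would first reduce the global equality to per-block identities: because the groupings of both $A$ and $B$ exist and the row-major unflattening is compatible with the block decomposition, the digits of any input $u$ factor across the $r$ blocks and each iter contributes only within its own block. Hence it suffices to verify, for every rank position $j$, that $f_{\mathcal{A}^{(j)}}(u_j) = f_{\mathcal{B}^{(j)}}(y_j) + f_{\mathcal{C}^{(j)}}(x_j)\odot W$, where $(x_j,y_j)$ is the image of $u_j$ under the block-level bijection.

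Next I would establish this block-level identity. Unflatten $u_j$ inside $\mathcal{A}^{(j)}$ to get digits $(d_i)_i$. The algorithm's greedy pointer $q$ advances left-to-right and matches each $B$-iter in $\mathcal{A}^{(j)}$ with the next expected entry of $\mathcal{B}^{(j)}$, so the restricted tuple $(d_i)_{i\in P_B}$ is exactly the ordered digit tuple of $y_j$ under $\mathcal{B}^{(j)}$'s extents; dually for $P_C$ and $x_j$. The extent-product sanity check at the end of each block iteration certifies $\prod_{i\in P_C} e_i = S_C[j]$, which makes the block-level bijection well-defined. Combining this with the two local guarantees enforced by the algorithm---iter equality $(e_i,s_i,a_i)=(e^B,s^B,a^B)$ for $i\in P_B$ and the factorization $s_i = W[a_i]\,\tilde s_i$ returned by \textsc{DivSpanScale} for $i\in P_C$---the axis-wise sum decomposes as
\begin{equation*}
\sum_{i} d_i\, s_i\,@\,a_i \;=\; \sum_{i\in P_B} d_i\, s_i\,@\,a_i \;+\; \sum_{i\in P_C} d_i\,(W[a_i]\,\tilde s_i)\,@\,a_i,
\end{equation*}
whose two halves are precisely $f_{\mathcal{B}^{(j)}}(y_j)$ and $f_{\mathcal{C}^{(j)}}(x_j)\odot W$, respectively.

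Finally, since $W$ is computed as $\textsc{AxisSpan}(D_{B\|S_B})$, which equals $\mathrm{span}(f_{B_{\|S_B}})$ by Lemma~\ref{lem:span-closed}, summing the block identity over $j$ reproduces the tiling formula of §\ref{sec:layout-xforms} verbatim. When nonempty $R$ or $O$ are present, I would mirror the same split: the offset analog $O_A = O_C\odot W + O_B$ and a Minkowski-sum check $R_A = (W\odot R_C) \oplus R_B$ would be added as parallel cases using axis-wise $\odot$-linearity, with no new machinery.

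The main obstacle will be the digit-splitting lemma inside a single block: when the positions of a product decomposition $\prod_{i=1}^k [0,e_i)$ are partitioned into subsets $P_B\sqcup P_C$, the map sending $u$ to the pair of re-flattened subsequences is a bijection, and one must verify that this bijection commutes with iter-wise sums. Both ingredients---the in-order matching of $B$-iters preserving their lexicographic role, and the extent-product match certifying the $C$-factor---are invariants the algorithm maintains, so the bijection is well-defined, but pinning down the precise indexing and the per-axis place-value bookkeeping is where the care lies. Once that lemma is in place, the remaining algebra over $\mathbb{Z}A$ is linear and immediate.
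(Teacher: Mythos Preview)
Your proposal is correct and follows essentially the same route as the paper's proof: both argue block-by-block that the algorithm's scan partitions each $\mathcal{A}^{(j)}$ into an in-order copy of $\mathcal{B}^{(j)}$ and a $W$-scaled residual $\widehat{\mathcal{C}}^{(j)}$, so that the per-block contribution decomposes exactly as $f_{\mathcal{B}^{(j)}}(y_j)+f_{\mathcal{C}^{(j)}}(x_j)\odot W$, and then concatenate over $j$. If anything, you are slightly more careful than the paper in making explicit the domain bijection $[0,S_A[j))\cong[0,S_C[j))\times[0,S_B[j))$ induced by the interleaving (the paper simply asserts ``this interleaving is exactly $\mathcal{A}^{(j)}$'' at the $D$-list level), but the substance is the same.
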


\begin{proof}
Fix a rank position $j$. By construction, the block $\mathcal{A}^{(j)}$ of
$D_{A\|S_A}$ has been partitioned into two subsequences that preserve order:
(i) a copy of $\mathcal{B}^{(j)}$, and (ii) a residual block $\mathcal{C}^{(j)}$
whose iters are precisely the $W$–\emph{descaled} versions of those residual iters in
$\mathcal{A}^{(j)}$. Let $\widehat{\mathcal{C}}^{(j)}$ be the corresponding original
(iter, stride)-list in $\mathcal{A}^{(j)}$; by construction
\[
\widehat{\mathcal{C}}^{(j)}[t]
\;=\;
\bigl(e_t,\; (s_t\cdot W[a_t])@a_t\bigr)
\quad\text{whenever}\quad
\mathcal{C}^{(j)}[t]=(e_t,\,s_t@a_t).
\]
Let $W\_*$ denote the linear map “multiply axiswise by $W$”. Then the D–list that defines
$C\otimes B$ at block $j$ is the \emph{interleaving} of $W\_*(\mathcal{C}^{(j)})$ with
$\mathcal{B}^{(j)}$, in the same relative order. This interleaving is exactly
$\mathcal{A}^{(j)}$ by the way the scan partitions were formed. Concatenating over all
$r$ blocks yields $D_{(C\otimes B)\| (S_C,S_B)} = D_{A\|S_A}$ as ordered lists of iters,
hence the induced maps coincide. (Offsets and replication, if checked as above, also
match by axiswise additivity and the definition of $W$.)
\end{proof}





\subsection{Extension: replication and offsets}

If replication is present, first canonicalize $(O,R)$ (Appendix~\ref{appendix:canonicalization})
and require that the per-axis replication set of $A$ equals the Minkowski sum of
that of $B$ and a $W$–scaled replication set of $C$ (this condition is both natural
and checkable per axis under saturation+GC). Offsets must satisfy the axiswise equation
$O_A=O_C\odot W+O_B$ at the region origin; the candidate $O_C$ is then deduced by axiswise
division by $W$.







\section{Slicing}
\label{appendix:slicing}

We give \emph{sufficient} conditions (with explicit constructions) under which a rectangular
region over a grouped block admits a layout that agrees with the original map on that region.

\paragraph{Standing assumption (canonicalized blocks).}
We assume the chain-elimination canonicalization from the canonicalization appendix
has been applied already:
no adjacent pair of iters on the same axis satisfies the chain relation
\(S_k=E_{k+1}S_{k+1}\).
All statements below are made \emph{after} this canonicalization.

\paragraph{Notation (region symbol).}
Let a grouped block be
\[
\mathcal{B}=\bigl[(E_0,S_0@a_0),\ \dots,\ (E_{m-1},S_{m-1}@a_{m-1})\bigr],
\]
Set \(E_{[k:\ell)}:=\prod_{t=k}^{\ell-1}\!E_t\) and \(B_k:=E_{[k+1:m)}\) (so \(B_{m-1}=1\)).
For \(u\in[0,E_{[0:m)})\),
\begin{align*}
d_k(u):=\Big\lfloor\frac{u}{B_k}\Big\rfloor \bmod E_k,\\
f_{\text{blk}}(u)=\sum_{k=0}^{m-1} S_k\,d_k(u)@a_k.
\end{align*}
Fix \(\mathcal{R}=[b,b+T)\) and write the region-origin address
\(O^\star:=f_{\,L\langle S\rangle}(b)\) and start digits
\(d_k^0:=\lfloor b/B_k\rfloor\bmod E_k\).

\paragraph{Greedy peeling and pivot.}
A digit \(j\) is \emph{peelable} iff \(d_j^0=0\) and \(E_j\mid T\).
Peeling appends \((E_j,S_j@a_j)\) and replaces \(T\leftarrow T/E_j\).
Peel greedily from the fastest digit \(m{-}1\) leftwards while peelable.
If \(T=0\), the peeled iters with offset \(O^\star\) realize the block on \(\mathcal{R}\).
If \(T>0\), let \(k\) be the \emph{pivot} (rightmost unpeeled); then
\(d_k^0\neq 0\) or \(T\not\equiv 0\pmod{E_k}\).

\paragraph{Left-digit behavior}
Digits \(<k\) are \emph{not} guaranteed to be constant in general.
They remain fixed in the no-wrap form below.
In the one-wrap symmetric form below, only digit \(k{-}1\) increases (by exactly \(+1\)),`
and all digits \(<k{-}1\) remain fixed, provided the immediate-left capacity
\(d_{k-1}^0+1\le E_{k-1}\) holds (vacuous if \(k=0\)).

\subsection{Algorithm (per canonicalized block; sufficient checks only)}

\begin{algorithm}[]
\caption{\textsc{SliceBlockAfterCanon\_Sufficient} (ordering-safe; last digit fastest)}
\label{alg:C-sufficient-ordered}
\begin{algorithmic}[1]
\REQUIRE Canonicalized iters \((E_0,S_0@a_0),\dots,(E_{m-1},S_{m-1}@a_{m-1})\); region \(\mathcal{R}=[b,b+T)\)
\ENSURE Iter list \(D^{\mathrm{blk}}\) (left$\to$right; rightmost fastest) and offset \(O^\star\), or \textsc{Fail}
\STATE $O^\star \gets f_{L\langle S\rangle}(b)$
\STATE \((d_0^0,\dots,d_{m-1}^0)\gets \mathrm{digits\_at\_start}(b)\)
\STATE \(D^{\mathrm{blk}}\gets[\,]\); \(PEELED\gets[\,]\); \(rem\gets T\)
\COMMENT{\emph{peel fastest suffix but store as \textbf{slow}\(\to\)\textbf{fast}}}
\FOR{\(j=m-1\) \textbf{downto} \(0\)}
  \IF{\(d_j^0=0\) \textbf{and} \(rem\bmod E_j=0\)}
    \STATE \(\mathrm{prepend}(PEELED,\ (E_j,\,S_j@a_j))\) \COMMENT{so \(PEELED\) ends slow$\to$fast}
    \STATE \(rem\gets rem/E_j\)
  \ELSE
    \STATE \textbf{break} \COMMENT{pivot \(k\gets j\)}
  \ENDIF
\ENDFOR
\IF{\(rem=0\)} 
  \STATE \RETURN \(\bigl(PEELED,\,O^\star\bigr)\) \COMMENT{peeled block only; already slow$\to$fast}
\ENDIF

\STATE \textit{\# Sufficient forms at the pivot (produce pivot iters \emph{to the left} of peeled suffix)}
\IF{\(d_k^0+rem\le E_k\)}
  \STATE \(\mathrm{append}(D^{\mathrm{blk}},\ (rem,\,S_k@a_k))\)
  \STATE \(\mathrm{extend}(D^{\mathrm{blk}},\ PEELED)\)
  \STATE \RETURN \(\bigl(D^{\mathrm{blk}},\,O^\star\bigr)\)
\ELSIF{\(rem\ \text{even}\ \wedge\ d_k^0+rem/2=E_k\ \wedge\ (k=0\ \vee\ d_{k-1}^0+1\le E_{k-1})\)}
  \STATE \(c\gets rem/2\)
  \STATE \(\Delta\gets \begin{cases}
      - (E_k - c)\,S_k@a_k,& k=0\\
      S_{k-1}@a_{k-1} - (E_k - c)\,S_k@a_k,& k>0
    \end{cases}\)
  \STATE \(\mathrm{append}(D^{\mathrm{blk}},\ (2,\,\Delta))\)
  \STATE \(\mathrm{append}(D^{\mathrm{blk}},\ (c,\,S_k@a_k))\)
  \STATE \(\mathrm{extend}(D^{\mathrm{blk}},\ PEELED)\)
  \STATE \RETURN \(\bigl(D^{\mathrm{blk}},\,O^\star\bigr)\)
\ELSE
  \STATE \RETURN \textsc{Fail}
\ENDIF
\end{algorithmic}
\end{algorithm}

\subsection{Two sufficient slicing forms}

\begin{lemma}[No-wrap sufficiency]\label{lem:C-no-wrap}
If \(d_k^0+T\le E_k\), then the block agrees on \(\mathcal{R}\) with the layout
\[
D^{\mathrm{blk}}=\bigl[(T,\,S_k@a_k)\bigr]\ \cup\ \{\text{peeled digits (in peel order)}\},
\]
with \(\text{offset } O^\star\).
\end{lemma}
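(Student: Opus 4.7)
The plan is to show, for every local index $u' \in [0, T \cdot P)$ in the region (where $P := \prod_{j=k+1}^{m-1} E_j$ is the product of peeled extents), that the address produced by the proposed layout at $u'$, shifted by $O^\star$, equals $f_{L\langle S\rangle}(b + u')$. I would first set up a digit decomposition of $b + u'$, then compare term by term with what the proposed $D^{\mathrm{blk}}$ emits under the standard lexicographic unflattening (rightmost fastest).

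First I would exploit the peelability hypothesis: by construction, for every $j > k$ we have $d_j^0 = 0$, so $b$ is a multiple of $B_k = P$. Writing $u' = p \cdot P + q$ with $p \in [0, T)$ and $q \in [0, P)$, the digits of $b + u'$ split cleanly. Digits $j > k$ of $b + u'$ equal the corresponding digits of $q$, because the tail $q < P = B_k$ contributes only to positions strictly greater than $k$ while the head $(b/P + p)\cdot P$ contributes only to positions $\le k$. For the pivot digit, $\lfloor (b+u')/P \rfloor \bmod E_k = (d_k^0 + p) \bmod E_k$; the no-wrap hypothesis $d_k^0 + T \le E_k$ together with $p < T$ gives $d_k^0 + p < E_k$, so no modular reduction occurs and the pivot digit is exactly $d_k^0 + p$. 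Finally, because the carry out of position $k$ is zero, the digits $j < k$ of $b + u'$ coincide with $d_j^0$.

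Next I would compute both sides explicitly. Substituting into $f_{\text{blk}}$ gives
\[
f_{\text{blk}}(b+u') \;=\; \sum_{j<k} S_j d_j^0 @ a_j \;+\; S_k (d_k^0 + p) @ a_k \;+\; \sum_{j>k} S_j d_j(q) @ a_j.
\]
On the other side, the proposed block layout $D^{\mathrm{blk}} = [(T, S_k@a_k), (E_{k+1}, S_{k+1}@a_{k+1}), \ldots, (E_{m-1}, S_{m-1}@a_{m-1})]$ unflattens $u' = p \cdot P + q$ with slowest digit $p$ and remaining digits precisely those of $q$, producing $S_k p @ a_k + \sum_{j>k} S_j d_j(q) @ a_j$. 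Adding $O^\star = f_{L\langle S\rangle}(b)$, whose restriction to this block equals $\sum_{j \le k} S_j d_j^0 @ a_j$ (since $d_j^0 = 0$ for $j > k$), I recover exactly $f_{\text{blk}}(b+u')$. Contributions from all \emph{other} blocks of the full layout are unaffected because the local variation $u'$ lives entirely within the current block's logical range, so their parts cancel between $O^\star$ and $f_{L\langle S\rangle}(b+u')$ verbatim.

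The main obstacle I anticipate is purely bookkeeping: rigorously justifying that digits below $k$ stay pinned at $d_j^0$ and that contributions from other blocks of the surrounding grouped layout cancel against $O^\star$ without spurious carries. Both reduce to the no-wrap bound $d_k^0 + T \le E_k$, which is precisely the lemma's hypothesis; once that is invoked, the identity follows from the uniqueness of the mixed-radix digit expansion and the linearity of $f_{\text{blk}}$ in those digits.
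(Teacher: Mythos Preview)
Your proposal is correct and follows essentially the same approach as the paper: both argue that the pivot digit ranges over $[d_k^0,\,d_k^0+T)$ without wrapping, the peeled right digits ($j>k$) are enumerated by the peeled iters, and the left digits ($j<k$) stay frozen at $d_j^0$ and are absorbed into $O^\star$. Your version makes the mixed-radix decomposition $u'=pP+q$ and the per-digit verification explicit, whereas the paper's proof asserts the same three facts in a few terse sentences, but the underlying argument is identical.
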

\begin{proof}
For every local \(u\in[0,T)\), the pivot digit equals \(d_k^0+u\in[0,E_k)\); hence no
wrap at the pivot occurs anywhere on \(\mathcal{R}\).
Digits to the right are enumerated by the peeled iters; digits to the left remain at their
start values, which are absorbed into \(O^\star\).
Thus the single iter \((T,S_k@a_k)\) reproduces the pivot’s contribution exactly,
and the concatenation with peeled iters matches \(f_{\text{blk}}\) on \([b,b+T)\).
\end{proof}

\begin{lemma}[Symmetric one-wrap sufficiency (general midpoint form)]\label{lem:C-symm}
Assume \(T\) is even and
\[
\boxed{\quad d_k^0+\frac{T}{2}=E_k\quad}
\]
(“midpoint equals the next boundary”). If \(k>0\) also assume the immediate-left
capacity \(d_{k-1}^0+1\le E_{k-1}\) (vacuous for \(k=0\)).
Then the block agrees on \(\mathcal{R}\) with
\begin{align*}
    D^{\mathrm{blk}}
=\bigl[(2,\,\Delta_k),\ (T/2,\,S_k@a_k)\bigr]\ \cup\ \{\text{peeled digits}\}\\
\quad
\Delta_k:=S_{k-1}@a_{k-1}-\bigl(E_k-\tfrac{T}{2}\bigr)\,S_k@a_k,
\end{align*}

with offset \(O^\star\) (for \(k=0\), drop the \(S_{k-1}@a_{k-1}\) term).
\end{lemma}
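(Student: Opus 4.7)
The plan is to verify pointwise agreement $f_{D^{\mathrm{blk}}\langle T_{\mathrm{orig}}\rangle}(\tilde u) + O^\star = f_{\mathrm{blk}}(b+\tilde u)$ on the local domain $\tilde u \in [0, T\cdot P)$, where $P:=\prod_{j>k}E_j$ is the product of peeled extents and $T_{\mathrm{orig}}:=T\cdot P$. I would unflatten $\tilde u$ using the row-major (rightmost-fastest) convention that the candidate layout imposes: $\tilde u = (wT/2+v)\,P + u_{\mathrm{peel}}$ with $w\in\{0,1\}$, $v\in[0,T/2)$, and $u_{\mathrm{peel}}\in[0,P)$. Because every peeled position $j>k$ starts with $d_j^0=0$ and the peeled iters of $D^{\mathrm{blk}}$ enumerate exactly the mixed-radix digits of $u_{\mathrm{peel}}$ in radix $(E_{k+1},\ldots,E_{m-1})$ with weights $(S_j@a_j)_{j>k}$, these contributions to the two sides already coincide. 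Hence it suffices to match the remaining ``pivot-and-left'' contribution.

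Next, using $B_k=P$, the pivot-digit of $b+\tilde u$ is
\[
\bigl\lfloor (b+\tilde u)/B_k\bigr\rfloor \bmod E_k \;=\; (d_k^0 + wT/2 + v)\bmod E_k.
\]
In case $w=0$, the midpoint hypothesis $d_k^0 + T/2 = E_k$ forces $d_k^0+v < E_k$, so no wrap occurs: only digit $k$ changes, by $v$, contributing $v\,S_k@a_k$. In case $w=1$, the sum equals $E_k+v$ with $v<T/2\le E_k$, producing exactly one wrap: digit $k$ becomes $v$ (net change $v-(E_k-T/2)$) and a single carry propagates into position $k{-}1$. The hypothesis $d_{k-1}^0+1\le E_{k-1}$ blocks any further carry, so positions $<k{-}1$ are unchanged, and the address difference is $\bigl(v-(E_k-T/2)\bigr) S_k@a_k + S_{k-1}@a_{k-1}$. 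On the candidate side the two new iters contribute $w\Delta_k + v\,S_k@a_k$, and substituting $\Delta_k = S_{k-1}@a_{k-1} - (E_k-T/2)S_k@a_k$ reproduces both cases exactly once $O^\star = f_{\mathrm{blk}}(b)$ is folded in.

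The main obstacle is the degenerate case $k=0$. Here case (b) would require a carry out of the leftmost digit of the block; but because all positions $j>0$ were peeled we have $b=d_0^0\,P$, and the $w=1$ subregion would push $b+\tilde u$ up to $(E_0+T/2)P$, exceeding the block capacity $E_0 P$. My plan is to argue that under the standing assumption that the rectangular region lies inside the block, the $w=1$ subregion is vacuous at $k=0$, so the reduced $\Delta_0=-(E_0-T/2)S_0@a_0$ is correct on the realized $w=0$ half, which is all the algorithm must certify; alternatively, if one extends $f_{\mathrm{blk}}$ formally by the same digit formula beyond $E_{[0:m)}$, the $w=1$ arithmetic still matches because $\Delta_0$ encodes exactly the wrap-produced digit shift $v - d_0^0$. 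Nailing down this vacuity-or-extension argument, and confirming that dropping the $S_{k-1}@a_{k-1}$ term in $\Delta_0$ preserves the match, is the only care-point; the rest is routine mixed-radix arithmetic.
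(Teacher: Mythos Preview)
Your proposal is correct and follows the same argument as the paper: split the range into two chunks indexed by $w\in\{0,1\}$ (the paper writes $q$), verify that within each chunk the pivot digit stays in $[0,E_k)$, and check that the jump from $w=0$ to $w=1$ equals $\Delta_k$ via a single carry into digit $k{-}1$, bounded by the capacity hypothesis. On your flagged ``main obstacle'' at $k=0$: it is a non-issue rather than a care-point---the standing region-fits assumption gives $d_0^0+T\le E_0$, which combined with the midpoint hypothesis $d_0^0+T/2=E_0$ forces $T\le 0$, so the symmetric one-wrap case at $k=0$ is vacuous and nothing needs proving there (the paper leaves this implicit).
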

\begin{proof}
Set \(c:=T/2\), so \(c=E_k-d_k^0\) by hypothesis.

\emph{Intrachunk behavior.}
Chunk \(q=0\) covers local \(r\in[0,c)\) and has pivot digit
\(d_k^{(0)}(r)=d_k^0+r\le d_k^0+(c-1)=E_k-1\): no intrachunk wrap.
Chunk \(q=1\) starts at digit
\(d_k^{(1)}(0)=d_k^0+c-E_k=0\) and runs to \(c-1\le E_k-1\): again no intrachunk wrap.

\emph{Interchunk increment.}
Between chunk origins \(q=0\) and \(q=1\), the true block adds one carry into digit
\(k{-}1\) (producing \(+S_{k-1}@a_{k-1}\)), and the pivot’s base digit changes from
\(d_k^0\) to \(d_k^0+c-E_k=-(E_k-c)\) relative to \(d_k^0\), contributing
\(-(E_k-c)S_k@a_k\). Thus the net start-of-chunk increment equals
\(\Delta_k=S_{k-1}@a_{k-1}-(E_k-c)S_k@a_k\).

\emph{Capacity.}
If \(k>0\) the carry increments digit \(k{-}1\) to \(d_{k-1}^0+1\).
By the capacity hypothesis \(d_{k-1}^0+1\le E_{k-1}\), there is no further carry,
so digits \(<k{-}1\) remain fixed.
Therefore the two-iter layout above (outer \((2,\Delta_k)\), inner \((c,S_k@a_k)\)),
together with the peeled iters and offset \(O^\star\), reproduces the block on \(\mathcal{R}\).
\end{proof}

\begin{theorem}[Sufficient conditions for slicing after canonicalization]\label{thm:C-sufficient}
After canonicalization and greedy peeling, the block agrees with a layout on
\(\mathcal{R}=[b,b+T)\) in either of the following cases:
\begin{enumerate}
\item \emph{No wrap:} \(d_k^0+T\le E_k\) (Lemma~\ref{lem:C-no-wrap}).
\item \emph{Symmetric one-wrap:} \(T\) even and \(d_k^0+T/2=E_k\), with
\(d_{k-1}^0+1\le E_{k-1}\) when \(k>0\) (Lemma~\ref{lem:C-symm}).
\end{enumerate}
The resulting layout is exactly the one given in the corresponding lemma, with offset
\(O^\star\) and the peeled iters included in peel order.
\end{theorem}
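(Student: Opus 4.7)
The plan is to treat Theorem~\ref{thm:C-sufficient} as a straightforward aggregation of the two preceding lemmas, using the greedy peeling step of Algorithm~\ref{alg:C-sufficient-ordered} as the reduction from the general setting to the pivot-local setting handled by Lemma~\ref{lem:C-no-wrap} and Lemma~\ref{lem:C-symm}. First I would record the structural setup after canonicalization: let \(k\) be the pivot produced by greedy peeling, let \(T'\) denote the residual region length after peeling, and observe that the peeled suffix corresponds to digits \(j>k\) whose start coordinates are zero and whose extents divide the (successively reduced) region length. This guarantees that on \(\mathcal{R}\) the peeled digits sweep their full ranges independently of the pivot digit, contributing exactly the peeled iters \((E_j,S_j@a_j)\) without any carry into position \(k\).

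Next I would reduce the global statement on \(\mathcal{R}\) to a statement on the pivot alone. Because the peeled digits cycle through a complete period while the pivot digit changes by one, any local index \(u\in[0,T)\) decomposes uniquely as \(u=u_{\text{pivot}}\cdot\prod_{j>k}E_j+u_{\text{peel}}\), so the block's contribution factors as a sum of (a) the peeled contribution, which matches the peeled iters verbatim, plus (b) the pivot contribution for \(u_{\text{pivot}}\in[0,T')\) starting at digit \(d_k^0\). The constant parts for digits \(<k\) are absorbed into \(O^\star = f_{L\langle S\rangle}(b)\). This reduction is the only nontrivial bookkeeping; after it, cases (1) and (2) of the theorem are exactly the hypotheses of Lemma~\ref{lem:C-no-wrap} and Lemma~\ref{lem:C-symm} respectively applied to the pivot subproblem of length \(T'\).

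Having isolated the pivot subproblem, I would invoke the two lemmas directly. In case (1), Lemma~\ref{lem:C-no-wrap} yields the single iter \((T',S_k@a_k)\) and the concatenation with the peeled suffix gives the \(D^{\mathrm{blk}}\) asserted by the theorem. In case (2), Lemma~\ref{lem:C-symm} yields the two-iter expansion \((2,\Delta_k),(T'/2,S_k@a_k)\), which is again concatenated with the peeled suffix. In both cases the offset is \(O^\star\), as required. It remains to verify peel order: the algorithm stores the peeled iters in slow\(\to\)fast order and appends them to the right of the pivot iters, which matches the block's left-to-right (slow\(\to\)fast) convention used in \(f_{\text{blk}}\).

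The main obstacle I expect is the composition step, namely showing rigorously that the peeled suffix and the pivot iter(s) produced by the lemma act independently on \(\mathcal{R}\). This requires the two observations above: (i) that no peeled digit ever carries into the pivot on \(\mathcal{R}\), which follows from \(d_j^0=0\) and \(E_j\mid\) (current residual) at each peeling step; and (ii) that the pivot iter(s) from the lemma never carry beyond digit \(k{-}1\), which is precisely the capacity condition \(d_{k-1}^0+1\le E_{k-1}\) built into case (2) and is vacuous in case (1). Once these two noninterference facts are established, the theorem follows by direct summation of the two contributions and the lemmas' guarantees.
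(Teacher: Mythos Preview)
Your proposal is correct and follows the paper's approach: the paper gives no standalone proof for this theorem, treating it as the direct union of Lemmas~\ref{lem:C-no-wrap} and~\ref{lem:C-symm} (whose constructions already include the peeled iters), and your explicit peeling-then-pivot factorization is the natural way to unpack that. One small wording point: the peeled digits \emph{do} carry into the pivot---that is precisely how the pivot digit advances---so what you want in (i) is that those carries are \emph{regular} (exactly one per full cycle of the peeled suffix), which is what $d_j^0=0$ and $E_j\mid\text{rem}$ guarantee; also note that the paper's $T$ in the lemma and theorem hypotheses is already the post-peeling residual (your $T'$), per the convention $T\leftarrow T/E_j$ in the peeling paragraph.
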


\section{Direct-sum on the tiling domain: $A{+}B$}
\label{appendix:direct-sum-tiling-domain}

The tiling operator $\otimes$ composes two grouped layouts by
\emph{scaling} the outer layout axiswise by the span of the inner one so tiles do not
overlap:
\begin{align*}
&f_{(A_{\|S_A})\otimes(B_{\|S_B})}(x\!\parallel\!y)
\\\;&=\;
f_{A_{\|S_A}}(x)\ \odot\ \mathrm{span}\!\bigl(f_{B_{\|S_B}}\bigr)\ +\ f_{B_{\|S_B}}(y),
\\
&(x\!\parallel\!y) \in \textstyle\prod_{j=0}^{r-1}([0,S_A[j))\times[0,S_B[j)) .
\end{align*}
In many settings one wishes to \emph{superpose} two placements over the \emph{same}
interleaved domain but \emph{without} span scaling. We formalize this as a
\emph{direct-sum on the tiling domain} and give a concrete Axe construction.

\subsection{Definition (interleaved-domain direct sum)}
Let $A=(D^A,R^A,O^A)$ admit $S_A=(S_A[0],\dots,S_A[r-1])$ and
$B=(D^B,R^B,O^B)$ admit $S_B=(S_B[0],\dots,S_B[r-1])$.
Write their grouped views $A_{\|S_A}$ and $B_{\|S_B}$.
We define the \emph{direct sum on the tiling domain}
\begin{align*}
&A{+}B \quad\text{with domain}\\
&S_{A{+}B}\ :=\ S_A \otimes S_B\ :=\ \prod_{j=0}^{r-1}\Big([0,S_A[j))\times[0,S_B[j))\Big)
\end{align*}
by the induced map
\begin{equation*}\label{eq:direct-sum-tiling-map}
\forall\ (x\!\parallel\!y)\in S_{A{+}B}:~
f_{(A{+}B)_{\|S_{A{+}B}}}(x\!\parallel\!y)
\;=\;
f_{A_{\|S_A}}(x)\ +\ f_{B_{\|S_B}}(y).
\end{equation*}
Thus $A{+}B$ is the pointwise Minkowski sum of the grouped fibers, evaluated on the
\emph{same} interleaved (tiling-style) domain used by $\otimes$, but \emph{without}
the span scaling that $\otimes$ applies.

\subsection{Concrete Axe construction (blockwise interleaving)}
Let the grouped sharded lists be partitioned into rank blocks
\begin{align*}
D_{A\|S_A}=\bigl[\ \mathcal{A}^{(0)}\ \Vert\ \cdots\ \Vert\ \mathcal{A}^{(r-1)}\ \bigr],
\\
D_{B\|S_B}=\bigl[\ \mathcal{B}^{(0)}\ \Vert\ \cdots\ \Vert\ \mathcal{B}^{(r-1)}\ \bigr],
\end{align*}
with $\prod_{t\in\mathcal{A}^{(j)}} e_t = S_A[j]$ and
$\prod_{t\in\mathcal{B}^{(j)}} e_t = S_B[j]$.
Define the direct-sum triple over $S_{A{+}B}=S_A\otimes S_B$ by
\begin{align*}\label{eq:direct-sum-tiling-triple}
D^{A{+}B}\ &:=\ \bigl[\ \mathcal{A}^{(0)}\ \Vert\ \mathcal{B}^{(0)}\ \Vert\ \cdots\ \Vert\ \mathcal{A}^{(r-1)}\ \Vert\ \mathcal{B}^{(r-1)}\ \bigr],
\\
R^{A{+}B}\ &:=\ R^A\ \Vert\ R^B,
\\
O^{A{+}B}\ &:=\ O^A + O^B.
\end{align*}
That is, within each rank position $j$ we \emph{interleave} the block $\mathcal{A}^{(j)}$
with the block $\mathcal{B}^{(j)}$ (as two consecutive sub-blocks), and then concatenate
across $j=0,\dots,r-1$. By construction,
\(
\prod_{t\in \mathcal{A}^{(j)}\Vert\mathcal{B}^{(j)}} e_t
= S_A[j]\cdot S_B[j]
\),
so $D^{A{+}B}$ groups by $S_A\otimes S_B$.

\begin{proposition}[Correctness of the triple]
\label{prop:direct-sum-tiling-correct}
For all $(x\!\parallel\!y)\in S_{A{+}B}$,
\(
f_{(A{+}B)_{\|S_{A{+}B}}}(x\!\parallel\!y)
= f_{A_{\|S_A}}(x)+f_{B_{\|S_B}}(y)
\).
\end{proposition}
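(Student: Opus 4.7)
The plan is to verify the identity componentwise by expanding both sides through Definition~\ref{eq:fL-def}. First I would confirm that $D^{A{+}B}$ is indeed grouped by $S_{A{+}B}=S_A\otimes S_B$: by construction each rank-$j$ concatenation $\mathcal{A}^{(j)}\|\mathcal{B}^{(j)}$ has extent product $S_A[j]\cdot S_B[j]$, so $\mathrm{flat}_{S_{A{+}B}}(x\|y)$ is well-defined and the LHS equals $\{f_{D^{A{+}B}}(\mathrm{flat}_{S_{A{+}B}}(x\|y)) + f_{R^{A{+}B}}(r) + O^{A{+}B} : r\}$. The RHS is the Minkowski sum $\{f_{D^A}(\mathrm{flat}_{S_A}(x)) + f_{D^B}(\mathrm{flat}_{S_B}(y)) + f_{R^A}(r^A) + f_{R^B}(r^B) + O^A + O^B : r^A, r^B\}$. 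It therefore suffices to match the $D$, $R$, and $O$ contributions separately.

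The heart of the argument is a digit-decomposition claim for the $D$ part: the lexicographic unflatten $\iota$ applied to $\mathrm{flat}_{S_{A{+}B}}(x\|y)$ over the iter-extent list of $D^{A{+}B}$ yields, within each rank block $j$, exactly the iter digits produced by $\iota$ applied to $\mathrm{flat}_{S_A}(x)$ restricted to $\mathcal{A}^{(j)}$'s extents, followed by those produced by $\iota$ applied to $\mathrm{flat}_{S_B}(y)$ restricted to $\mathcal{B}^{(j)}$'s extents. This reduces to the fact that row-major flatten on the interleaved $2r$-tuple shape $(S_A[0], S_B[0], \ldots, S_A[r-1], S_B[r-1])$ followed by lex unflatten on the concatenated iter extents coincides with doing lex unflatten block-by-block on each individual piece. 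Since $f_D$ is a sum of iter contributions $\sum_k (\iota(\cdot)_k\, s_k)@a_k$, the split of digits immediately gives $f_{D^{A{+}B}}(\mathrm{flat}_{S_{A{+}B}}(x\|y)) = f_{D^A}(\mathrm{flat}_{S_A}(x)) + f_{D^B}(\mathrm{flat}_{S_B}(y))$.

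The $R$ and $O$ parts are straightforward. By definition of the multiset concatenation $R^{A{+}B}=R^A\|R^B$, every index $r$ in the combined replication domain uniquely decomposes as $(r^A, r^B)$, and $f_{R^{A{+}B}}(r)=f_{R^A}(r^A)+f_{R^B}(r^B)$, so the replication fiber sets coincide as Minkowski sums. The offset identity $O^{A{+}B}=O^A+O^B$ holds by construction. Combining the three componentwise equalities yields the stated set equality. The main obstacle is the bookkeeping in the digit-decomposition step: keeping track of suffix products across interleaved blocks to confirm that the lex unflatten on $D^{A{+}B}$'s extents respects the sub-block boundaries $\mathcal{A}^{(j)}\|\mathcal{B}^{(j)}$. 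A clean way to handle this is induction on $r$, peeling the outermost rank-$0$ pair so that each step only invokes the two-level fact that flatten-then-unflatten on a two-block shape returns the two coordinates unchanged; the extension to the iter level within each block is then a direct appeal to the grouping correctness already used for $\otimes$ in Appendix~\ref{appendix:tiling}.
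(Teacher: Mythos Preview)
Your proposal is correct and follows essentially the same approach as the paper: verify grouping, then match the $D$, $R$, and $O$ parts separately using the digit decomposition (what the paper calls ``linearity of $f_D$'') and the Minkowski additivity of the concatenated replication. Your version simply spells out the bookkeeping that the paper compresses into a two-sentence sketch.
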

\begin{proof}
Grouping by $S_A\otimes S_B$ means that, at each rank $j$, the local coordinate is the
pair $(x_j,y_j)\in[0,S_A[j))\times[0,S_B[j))$, and the block
$\mathcal{A}^{(j)}\Vert\mathcal{B}^{(j)}$ contributes the \emph{sum} of the two
independent address evolutions driven by $x_j$ and $y_j$, respectively. Summing over
$j$ and adding replication and offset gives \eqref{eq:direct-sum-tiling-map} by linearity
of $f_D$ and Minkowski additivity of $R$.
\end{proof}

\subsection{Relationship to tiling and scaled composition}
Let $W:=\mathrm{span}\!\bigl(f_{B_{\|S_B}}\bigr)$ be the axiswise span vector of $B$.
Define axiswise scaling of a layout by $W$ as $A\cdot W$ (multiply every stride
component $@a$ in $D^A$ by $W[a]$, keep $R^A,O^A$ unchanged). Then, on the same domain
$S_A\otimes S_B$,
\[
(A\cdot W)\ {+}\ B \;=\; A\ \otimes\ B .
\]
Thus the direct sum ${+}$ is the unscaled counterpart of tiling; $\otimes$ arises
by inserting the span scaling into the $A$ part.

\subsection{Example: $A{+}B$ yields $(16):(1)$ but no $A\otimes B$ can}
\label{sec:example-directsum-vs-tiling}

Work on a single axis (omit $@\texttt{m}$ for brevity).

\paragraph{Layouts.}
Let
\[
B\;=\;
\begin{pmatrix}
  2 & 2 \\
  4 & 1
\end{pmatrix},
\qquad
A\;=\;
\begin{pmatrix}
  2 & 2 \\
  8 & 2
\end{pmatrix}.
\]
Then $f_B(i,j)=4i+j\in\{0,1,4,5\}$ (a $2{\times}2$ block in a width–$4$ row-major
matrix), and $f_A(p,q)=8p+2q\in\{0,2,8,10\}$, i.e. the four \emph{block origins}
of the $2{\times}2$ quadrants of a $4{\times}4$ matrix:
$[0\!:\!2,0\!:\!2]$, $[0\!:\!2,2\!:\!4]$, $[2\!:\!4,0\!:\!2]$, $[2\!:\!4,2\!:\!4]$ (offsets only).

\subsubsection*{Direct sum on the tiling domain: $A{+}B\Rightarrow(16):(1)$}
Consider the interleaved (tiling) domain $S_{A{+}B}=S_A\otimes S_B=(2,2)\otimes(2,2)$.
By Def.~\ref{appendix:direct-sum-tiling-domain},
\begin{align*}
f_{(A{+}B)_{\|S_{A{+}B}}}(p,q\ \|\ i,j)
\;&=\; f_A(p,q) + f_B(i,j) \\
\;&=\; 8p + 2q + 4i + j.
\end{align*}
Thus the grouped $D$–list for $A{+}B$ can be written as
\[
D^{A{+}B}
\;=\;
\begin{pmatrix}
  2 & 2 & 2 & 2 \\
  8 & 2 & 4 & 1
\end{pmatrix}
\quad\text{(blocks $A$ then $B$ at each rank).}
\]
Permuting the two middle, same-axis digits corresponds to a benign reindexing of the
product domain (swap $(q,i)\leftrightarrow(i,q)$). With the order $(8,4,2,1)$ we get
\[
\begin{pmatrix}
  2 & 2 & 2 & 2 \\
  8 & 4 & 2 & 1
\end{pmatrix}
\ \xRightarrow{\text{D1}}\ 
\begin{pmatrix}
  4 & 2 & 2 \\
  4 & 2 & 1
\end{pmatrix}
\ \xRightarrow{\text{D1}}\ 
\begin{pmatrix}
  8 & 2 \\
  2 & 1
\end{pmatrix}
\ \xRightarrow{\text{D1}}\ 
\begin{pmatrix}
  16 \\
  1
\end{pmatrix}.
\]
Hence, after canonicalization (chain elimination), $A{+}B$ realizes the contiguous
layout $(16):(1)$—it enumerates exactly $\{0,1,\dots,15\}$.

\subsubsection*{Tiling is impossible: no $C$ with $C\otimes B=(16):(1)$}
Let $W:=\mathrm{span}(f_B)$ along the memory axis. Since
$f_B(\{0,1\}^2)=\{0,1,4,5\}$, we have $\min=0$, $\max=5$, hence
\[
W \;=\; (\max-\min)+1 \;=\; 6.
\]
For any layout $C$ and any $(x\ \|\ y)$ in the tiling domain,
\[
f_{(C\otimes B)}(x\ \|\ y) \;=\; 6\cdot f_C(x) \;+\; f_B(y),
\]
so every image is congruent modulo $6$ to one of the residues in $\{0,1,4,5\}$.
In particular,
\[
f_{(C\otimes B)}(x\ \|\ y)\ \bmod 6 \ \in\ \{0,1,4,5\}.
\]
But the target layout $(16):(1)$ enumerates $\{0,1,\dots,15\}$, whose residues
modulo $6$ are $\{0,1,2,3,4,5\}$ and include $2$ and $3$. This contradiction shows
that no layout $C$ can satisfy $C\otimes B=(16):(1)$ under the tiling definition
(which scales by $W=6$).

\paragraph{Comment (strided atoms and codegen).}
When a target instruction can operate on a \emph{strided} atom such as
\(
B=(2,2):(4,1)
\)
(e.g., a TMA \emph{global-memory} box that accepts pitch/stride),
the interleaved-domain direct sum \(A{+}B\) identifies the pattern \emph{as is} and
yields a loop nest over the logical outer indices (from \(A\)) whose inner addresses
follow the strided atom \(B\) without span scaling. In contrast, instructions that
require a \emph{compact} atom such as
\(
B'=(2,2):(2,1)
\)
(e.g., a TMA \emph{shared-memory} box) are naturally matched by tiling \(A\otimes B'\)
or by an explicit reshape/gather stage. Thus, direct sum broadens the set of patterns
that can be recognized and lowered into straight-line loops over non-contiguous but
instruction-compatible regions (like \(B=(2,2):(4,1)\)), while tiling remains the
appropriate choice f.

\section{Non-bit-linear layout function}
\label{sec:non-bit-linear}
For a tensor of shape $24\times24$ with column‑major layout 
\[
f(i) = \lfloor i /24 \rfloor + (i \%24)*24,
\]
we have $f(1)=24$, $f(2)=48$, $f(1\ \text{XOR}\ 2)=f(3)=72$, while $f(1)\ \text{XOR}\ f(2)=40\neq f(1\ \text{XOR}\ 2)$, so no bit linear $f$ over $\mathcal{F}_2$ can satisfy this.

\section{AI Accelerator TensorEngine Code generation Constraint}
\label{sec:appendix-trn-code-gen}
We specifically run a case study of trn1.2xlarge AWS instance with Trainium 1 AI accelerator. It has the following constraints:
 
\begin{enumerate}
    \item 
\textit{ISA.} The matmul instruction C=matmul(A, B) computes C=A.T@B
    \item
\textit{Memory Axes.} Matmul reads input from SBUF and writes output to PSUM. Both SBUF and PSUM are 2D memories with 128 partitions (P) and a contiguous free dimension (F).
    \item
\textit{Layout constraints.} Both the A[K, M] and B[K, N] input tiles must have their logical contraction dimension K mapped to the partition axis (P). Their logical spatial dimensions (M and N) are mapped to the free axis (F). The output tile C[M, N] written to PSUM has its P-axis mapped from M and its F-axis mapped from N.
    \item
\textit{Tile size constraints.} The size of a matmul instruction cannot exceed 128x128x512 (MxNxK).
\end{enumerate}

\end{document}